\newif\ifneurips
\neuripsfalse

\ifneurips
\documentclass{article}
% if you need to pass options to natbib, use, e.g.:
%     \PassOptionsToPackage{numbers, compress}{natbib}
% before loading neurips_2022

% ready for submission
\usepackage[preprint]{neurips_2023}
\else
\documentclass[11pt]{article}
\usepackage[margin=0.9in]{geometry}
\usepackage[numbers]{natbib}
\fi

% Recommended, but optional, packages for figures and better typesetting:
\usepackage{microtype}
\usepackage{graphicx}
\usepackage{subfigure}
\usepackage{wrapfig}
\usepackage{booktabs} % for professional tables

\usepackage{algorithm}
\usepackage{algorithmic}

\usepackage{hyperref}

\usepackage[algo2e,ruled,noend]{algorithm2e}

%\SetKwIF{If}{ElseIf}{Else}{if}{}{else if}{else}{end if}%

% Attempt to make hyperref and algorithmic work together better:

\usepackage{amsmath}
\usepackage{amssymb}
\usepackage{mathtools}
\usepackage{amsthm}
\usepackage{nicefrac}

\usepackage{upgreek}
\usepackage{amsfonts}
\usepackage{enumitem}
\setlist{nosep}
\usepackage{balance}
\usepackage{color}
\usepackage[normalem]{ulem}
\usepackage{dsfont}
\usepackage[para,online,flushleft]{threeparttable}
\usepackage{multirow}
\usepackage{multicol}
\usepackage{url}
\usepackage{bm}
\usepackage{flushend}

\usepackage{tabularx}
\usepackage[english]{babel}
\usepackage{graphicx}
\graphicspath{{./figs/}}
\usepackage{thmtools}
\usepackage{listings}
\graphicspath{{DurableJoin/imgs/}}
\renewcommand{\paragraph}[1]{\smallskip \noindent {\bf #1}}

\newcommand{\E}{\mathbb{E}}

\newcommand{\TV}{\mathrm{tv}}
\newcommand{\polylog}{\mathop{\mathrm{polylog}}}

\newcommand{\tO}{\tilde{O}}

\newcommand{\eat}[1]{}

\newcommand{\pasin}[1]{{\color{red} Pasin: #1}}

\newcommand{\supp}{\mathop{\mathrm{supp}}}

\newcommand{\A}{\mathcal{A}}
\newcommand{\cN}{\mathsf{N}}
\newcommand{\cE}{\mathcal{E}}
\newcommand{\N}{\mathbb{N}}

\newcommand{\eps}{\epsilon}

\newcommand{\Lap}{\textsf{Lap}}

\newcommand{\bX}{\mathbf{X}}
\newcommand{\tbX}{\tilde{\bX}}
\newcommand{\tX}{\tilde{X}}
\newcommand{\bR}{\mathbb{R}}
\newcommand{\bS}{\mathbb{S}}

\newcommand{\tOmega}{\tilde{\Omega}}
\newcommand{\tTheta}{\tilde{\Theta}}
\newcommand{\hSigma}{\hat{\Sigma}}
\newcommand{\Ber}{\mathsf{Ber}}
\newcommand{\trun}{\textrm{trunc}}
\newcommand{\clip}{\textrm{clip}}
\newcommand{\tp}{\tilde{p}}
\newcommand{\oX}{\overline{X}}
\newcommand{\var}{\mathrm{var}}
\newcommand{\bE}{\mathbb{E}}
\newcommand{\Su}{S^{\uparrow}}
\newcommand{\Sd}{S^{\downarrow}}
\newcommand{\US}{\mathsf{Uni}^{\bS}}

\newcommand{\Beta}{\mathsf{Beta}}
\newcommand{\im}{\mathrm{im}}
\newcommand{\STLap}{\mathsf{STLap}}
\newcommand{\cM}{\mathcal{M}}
\newcommand{\cY}{\mathcal{Y}}
\newcommand{\cO}{\mathcal{O}}
\newcommand{\hmu}{\hat{\mu}}
\newcommand{\hP}{\hat{P}}
\newcommand{\cD}{\mathcal{D}}
\newcommand{\cX}{\mathcal{X}}
\newcommand{\boundedmeansampler}{\mathrm{bm}\textrm{-}\mathrm{sampler}}
\newcommand{\ds}{\mathrm{dens}}
\newcommand{\boundedcovsampler}{\mathrm{bc}\textrm{-}\mathrm{sampler}}
\newcommand{\sampler}{\mathrm{sampler}}
\newcommand{\learner}{\mathrm{learner}}
\newcommand{\disj}{\mathrm{disjoint}}
\newcommand{\shift}{s}

% if you use cleveref..
\usepackage[capitalize,noabbrev]{cleveref}

\renewcommand{\citep}{\cite}
\renewcommand{\citet}{\cite}

%%%%%%%%%%%%%%%%%%%%%%%%%%%%%%%%
% THEOREMS
%%%%%%%%%%%%%%%%%%%%%%%%%%%%%%%%
\theoremstyle{plain}
\newtheorem{theorem}{Theorem}[section]

\newtheorem{lemma}[theorem]{Lemma}
\newtheorem{corollary}[theorem]{Corollary}
\theoremstyle{definition}
\newtheorem{definition}[theorem]{Definition}

\theoremstyle{remark}

\newtheorem{claim}{Claim}

% Todonotes is useful during development; simply uncomment the next line
%    and comment out the line below the next line to turn off comments
%\usepackage[disable,textsize=tiny]{todonotes}
%\usepackage[textsize=tiny]{todonotes}

\allowdisplaybreaks

\title{On Differentially Private Sampling from Gaussian and Product Distributions}

\ifneurips
\author{
Badih Ghazi \\
Google Research\\
Mountain View, CA, US \\
\texttt{badihghazi@gmail.com}
\And
Xiao Hu\thanks{This work was done while the author was visiting Google Research.}\\
University of Waterloo\\
Waterloo, Canada\\
\texttt{xiaohu@uwaterloo.ca}
\And
Ravi Kumar \\
Google Research\\
Mountain View, CA, US \\
\texttt{ravi.k53@gmail.com}
\And 
Pasin Manurangsi\\
Google Research\\
Bangkok, Thailand \\
\texttt{pasin@google.com}
}
\else
\author{
Badih Ghazi \\
Google Research\\
Mountain View, CA, US \\
\texttt{badihghazi@gmail.com}
\and
Xiao Hu\thanks{This work was done while the author was visiting Google Research.}\\
University of Waterloo\\
Waterloo, Canada\\
\texttt{xiaohu@uwaterloo.ca}
\and
Ravi Kumar \\
Google Research\\
Mountain View, CA, US \\
\texttt{ravi.k53@gmail.com}
\and 
Pasin Manurangsi\\
Google Research\\
Bangkok, Thailand \\
\texttt{pasin@google.com}
}
\fi

\begin{document}

\maketitle

\begin{abstract}
Given a dataset of $n$ i.i.d. samples from an unknown distribution $P$, we consider the problem of generating a sample from a distribution that is close to $P$ in total variation distance, under the constraint of differential privacy (DP). We study the problem when $P$ is a multi-dimensional Gaussian distribution, under  different assumptions on the information available to the DP mechanism: known covariance, unknown bounded covariance, and unknown unbounded covariance. We present new DP sampling algorithms, and show that they achieve near-optimal sample complexity in the first two settings. Moreover, when $P$ is a product distribution on the binary hypercube, we obtain a pure-DP algorithm whereas only an approximate-DP algorithm (with slightly worse sample complexity) was previously known.

% In this work we study the problem of  generating a new sample, with differential privacy (DP), from an unknown black-box distribution.  The privacy is with respect to the i.i.d. samples given from the black-box, i.e., the distribution of the generated sample should be near-invariant even if one of the input samples is changed and the utility is with respect to how close the generated sample is to the black-box.  We show that this DP sampling problem is quantitatively easier than the DP learning problem, when the unknown distribution is a high-dimensional Gaussian, under various levels of knowledge about the covariance; we also show that our bounds are near-tight.  For product distributions on the binary hypercube, we also show a qualitative improvement over existing results, obtaining a pure-DP sampling algorithm.
\end{abstract}

\section{Introduction}

Differential privacy (DP) \cite{dwork2006calibrating, dwork2006our} is a strong and rigorous notion of privacy that has been increasingly studied and deployed as protection against the leakage of personal data used to train ML models.

A basic setting widely studied in ML is \emph{distribution learning}, where given samples drawn i.i.d. from an unknown distribution $P$, we seek to output a distribution $Q$ that is as close to $P$ as possible (formal definitions are given in Section~\ref{sec:setup}). Recent work of~\citet{KamathSU19,AAK21,KamathMSSU22,AshtianiL22} has studied DP distribution \emph{learning}, whereby the output distribution is guaranteed to stay roughly the same when a single input sample is changed. A closely related setting is DP mean and covariance estimation studied by~\citet{Smith11,DTTZ14,She17, KarwaV18,BiswasD0U20,BrownGSUZ21,CWZ19,DLY22,KMS22,HopkinsKM22}. Motivated by the fact that tasks often require much fewer samples than full-fledged learning, the very recent work of~\citet{raskhodnikova2021differentially} studied the task of DP \emph{distribution sampling}, where the goal is to generate a sample from a distribution $Q$ that is as close as possible to the distribution $P$ from which the input samples are drawn i.i.d. (In the non-private setting, variants of the sampling task are also studied, e.g., in the work on sample amplification by~\citet{axelrod2020sample}.)

In this work, we study DP distribution sampling, and the quantitative gap with respect to the \emph{a priori} more challenging task of DP distribution learning.
For the case of multi-dimensional Gaussians, we consider three  natural settings: known covariance, unknown bounded covariance, and unknown unbounded covariance; for the first two, we obtain near-tight bounds. This answers an open question of \citet{raskhodnikova2021differentially}. Moreover, we obtain near-tight bounds on sampling in the case of product distributions on $\{0,1\}^d$ with pure-DP, also improving upon previous work.

\iffalse
\pasin{The following are some possibly useful references for the intro. These are by no mean complete, but tracing from these papers should give relatively complete pictures...}

Some useful references:
\begin{itemize}[leftmargin=*]
\item DP Learning: \cite{KamathSU19,AshtianiL22}
\item DP mean / covariance estimation: \cite{KarwaV18,BiswasD0U20,BrownGSUZ21,DLY22,KamathMSSU22,KMS22}
\end{itemize}
\fi

\paragraph{Motivation.}
There are many natural settings where a sample from the underlying distribution would be sufficient (as an alternative to the possibly more expensive task of learning the distribution). 

For example, if one is implementing an algorithm that would run on sensitive user data, then, as part of the usual software development cycle, unit tests \cite{unit-tests} are important. Writing unit tests on real data can violate users' privacy (e.g., if the unit test code is public). On the other hand, (privately) learning the distribution of the user data could be a significant overkill for this application. A private sample of the underlying distribution would provide a sufficiently representative input for the unit testing algorithm, and this input could be revealed publicly without compromising the privacy of the users.

A second motivation arises in distributed settings. Consider the situation where algorithms are being developed on medical data provided by multiple hospitals. The algorithm designer would like a synthetic dataset that would work well for the algorithm development process. Such a dataset could be obtained if each hospital provides a few (private) samples from its underlying distribution.

Finally, we describe another distributed setting motivation  where a central curator wishes to build a synthetic dataset from multiple users, each contributing multiple items, under the constraint of item-level DP (as opposed to user-level DP). A simple approach for generating such a synthetic dataset is to let each user send to the curator a private sample (satisfying item-level DP).

\subsection{Formulation}
\label{sec:setup}

Let $\cD$ be a class of distributions on some domain $\cX$. We consider a setting where there is an unknown distribution $D \in \cD$ and an algorithm has sample access to $D$.  There are two natural  problems that can be posed in this setting.  In the \emph{sampling problem}, the goal is to design an algorithm that uses samples from  $D$ (which is unknown) and outputs an element in $\cX$.  We say that an algorithm $\A$ is an \emph{$\alpha$-accurate sampler for $\cD$} iff $d_{\TV}(D, Q_{\A, D}) \leq \alpha$ for all $D \in \cD$, where $Q_{\A, D}$ denotes the distribution of $\A(\bX)$ where $\bX \sim D^n$.  Here $n$ is said to be the sample complexity of the sampler.
%
%Given a class $\cD$ of distributions on some domain $\cX$, we consider a setting where there is an unknown distribution $D \in \cD$ and we would like to devise an DP sampler $\A$ that takes in $X_1, \dots, X_n \in \cX$ and output an element. Let $Q_{\A, D}$ denote the distribution of $\A(\bX)$ where $\bX \sim D^n$. We say that $\A$ is an \emph{$\alpha$-accurate sampler for $\cD$} iff $d_{\TV}(D, Q_{\A, D}) \leq \alpha$ for all $D \in \cD$. Here $n$ is said to be the \emph{sample complexity} of the sampler.
In the \emph{learning problem}, the goal is to design an algorithm that outputs a distribution $D' \in \cD$.  We say that an algorithm $\A$ is an \emph{$(\alpha, \beta)$-accurate learner for $\cD$} if  $\Pr_{\bX \sim D^n, D' \sim \A(\bX)}[d_{\TV}(D', D) \leq \alpha] \geq 1 - \beta$ for all $D \in \cD$; as before, $n$ is the sample complexity of the learner. 

The DP version of the learning problem are well studied, e.g.,~\citet{KamathSU19,AshtianiL22}.  
In this work we study the DP version of the sampling problem.  First, we recall the definition of DP.  We consider the \emph{substitution} notion, i.e., two datasets are \emph{neighbors} iff they have the same number of samples and we can transform one to the other by changing a single sample.
\begin{definition}[Differential Privacy \cite{dwork2006calibrating, dwork2006our}]
An algorithm $M: \cY \to \cO$ is said to be \emph{$(\eps, \delta)$-differentially private} ($(\eps, \delta)$-DP) for $\eps > 0, \delta \geq 0$ iff, for every $S \subseteq \cO$ and every neighboring datasets $Y, Y' \in \cY$, we have
$\Pr[\cM(Y) \in S] \leq e^\eps \cdot \Pr[\cM(Y') \in S] + \delta$.
\end{definition}
We abbreviate $(\eps, 0)$-DP by $\eps$-DP, aka, \emph{pure-DP}; the $\delta \neq 0$ case is \emph{approximate-DP}.  We assume that the privacy parameters satisfy $\eps \leq 1$, $\delta \leq 1/2$, and the accuracy parameter satisfies $0 < \alpha \leq 1/2$.

%\subsection{Results}
\subsection{Our Results for Gaussian Distributions in $\bR^{d}$}
We focus on the $d$-dimensional Gaussian distribution, i.e., $D = \cN(\mu, \Sigma)$ with $\mu \in \bR^d, \Sigma \in \bR^{d \times d}$.  
Recall that $\cN(\mu, \Sigma)$ is supported on $\bR^d$ with $f_{\cN(\mu,\Sigma)}(x) \propto 
%= \frac{1}{(2\pi)^{d/2}\det(\Sigma)^{1/2}}
\exp\left(-\frac{1}{2}(x-\mu)^T \Sigma^{-1} (x - \mu)\right)$. We study three cases that have been considered in the literature:
\begin{itemize}[leftmargin=*]
\item \emph{Known Covariance:} $\Sigma$ is known, but $\mu$ is not. More formally, the class of distributions is $\cD^{\cN}_{\Sigma} := \{\cN(\mu, \Sigma) \mid \mu \in \bR^d\}$.
\item \emph{Unknown Bounded Covariance:} Both $\mu, \Sigma$ are unknown but with a promise that $I  \preceq \Sigma \preceq \kappa \cdot I$ for a known constant $\kappa > 0$, i.e., the class of distributions is $\cD^{\cN}_{\kappa} := \{\cN(\mu, \Sigma) \mid \mu \in \bR^d, \Sigma \in \bR^{d \times d} \text{ s.t. } I \preceq \Sigma \preceq \kappa \cdot I\}$.%
\footnote{We write $A \preceq B$ to denote that $B - A$ is positive semi-definite, for matrices $A$ and $B$.}
\item \emph{Unknown Unbounded Covariance:} Both $\mu, \Sigma$ are unknown, i.e., the class of distributions is $\cD^{\cN} := \{\cN(\mu, \Sigma) \mid \mu \in \bR^d, \Sigma \in \bR^{d \times d}\}$.
\end{itemize}

%Throughout this section, we consider $(\eps, \delta)$-DP algorithms which are more amenable to Gaussian distributions. We also assume throughout for simplicity that $\eps \leq 1$ and $\delta \leq 1/2$.

A summary of our DP sampling results along with a comparison to known DP learning results are in \Cref{fig:gaussian-result-summary}. Before we describe our results in more detail, we highlight the following.  (i) In all cases, the dependence of our  algorithms on the accuracy parameter $\alpha$ is only polylogarithmic, whereas for DP learning algorithms, this dependence is polynomial.  (ii) In the case of known covariance and unbounded known covariance, we obtain improvements over DP learning in terms of the dependence on the dimension $d$.  (iii) All of our algorithms run in polynomial time, although we do not explicitly state the running time in the formal statements.

\textbf{Known Covariance.}
While any DP learning algorithm requires $\tTheta\left(\frac{d}{\alpha^2} + \frac{d}{\alpha\eps}\right)$ samples in this setting~\cite{KamathSU19}, we show that, surprisingly, only $\tO\left(\sqrt{d}/\eps\right)$ samples suffice for DP sampling. 

\begin{theorem} \label{thm:gaussian-known-cov-simplified}
There is an $\alpha$-accurate $(\eps, \delta)$-DP sampler for Gaussian distributions with known covariance with sample complexity $O\left(\frac{\sqrt{d}}{\eps} \cdot \polylog\left(\frac{d}{\delta \eps \alpha}\right)\right).$
\end{theorem}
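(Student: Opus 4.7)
The plan is to first reduce to the isotropic case $\Sigma = I$ via the linear change of variables $x \mapsto \Sigma^{-1/2} x$, which is a measurable bijection carrying $\cN(\mu, \Sigma)$ to $\cN(\Sigma^{-1/2}\mu, I)$ and preserving both TV distance and privacy (since it is a deterministic, per-sample transformation). So assume $\Sigma = I$. The crucial observation is that the sample mean of $n$ i.i.d.\ $\cN(\mu, I)$'s is \emph{exactly} $\cN(\mu, I/n)$, so any leftover ``variance budget'' between $I/n$ and $I$ can be filled in by an independent compensating Gaussian, and one can sample from $\cN(\mu, I)$ exactly via $\bar{X} + \cN(0, (1-1/n)I)$. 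The only obstruction to doing this privately is that $\bar{X}$ has unbounded $\ell_2$-sensitivity, which I handle by clipping after first coarsely locating $\mu$.

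The algorithm has two phases. \textbf{Phase 1 (coarse location).} Invoke an off-the-shelf DP mean estimator for isotropic Gaussians (e.g.\ a Karwa--Vadhan or CoinPress-style estimator, or a stable heavy-hitter procedure on a $\Theta(\sqrt{d})$-granularity grid) at parameters $(\eps/2, \delta/2)$ to obtain $\tilde{\mu}$ with $\|\tilde{\mu} - \mu\|_2 = O(\sqrt{d})$ with high probability using $\tilde{O}(\sqrt{d}/\eps)$ samples. \textbf{Phase 2 (clip, noise, compensate).} Set $R = \Theta(\sqrt{d} + \sqrt{\log(n/\alpha)}) = \tilde{O}(\sqrt{d})$ and clip every sample to $B(\tilde{\mu}, R)$; let $\bar{X}_{\clip}$ be the mean of the clipped samples. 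Release
\[\hat{X} \;=\; \bar{X}_{\clip} + Z_1 + Z_2, \qquad Z_1 \sim \cN(0, \sigma^2 I),\ Z_2 \sim \cN\bigl(0, (1 - 1/n - \sigma^2)I\bigr),\]
where $\sigma = \Theta\bigl(R\sqrt{\log(1/\delta)}/(n\eps)\bigr)$ calibrates the Gaussian mechanism to the $\ell_2$-sensitivity $2R/n$ of $\bar{X}_{\clip}$ at level $(\eps/2, \delta/2)$. The final output is $\Sigma^{1/2}\hat{X}$.

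For correctness, condition on the ``good'' event $\mathcal{G}$ that Phase 1 succeeds and every sample lies in $B(\tilde{\mu}, R)$; by Gaussian tail bounds and a union bound, $\Pr[\mathcal{G}^c] = O(\alpha)$. On $\mathcal{G}$ the clipping is vacuous, so $\bar{X}_{\clip} = \bar{X} \sim \cN(\mu, I/n)$, and by Gaussian additivity $\hat{X} \sim \cN(\mu, I)$ exactly. Hence $d_{\TV}(\hat{X}, \cN(\mu, I)) \leq \Pr[\mathcal{G}^c] = O(\alpha)$. For $Z_2$ to be a valid Gaussian we need $1/n + \sigma^2 \leq 1$, which after substituting $\sigma$ and $R$ becomes $n \gtrsim R\sqrt{\log(1/\delta)}/\eps = \tilde{O}(\sqrt{d}/\eps)$, matching the claimed bound. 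Privacy follows by sequential composition of Phase 1 with Phase 2, since for any fixed $\tilde{\mu}$ the latter is $(\eps/2,\delta/2)$-DP by the sensitivity bound on $\bar{X}_{\clip}$.

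The main obstacle is Phase 1: one needs $\tilde{\mu}$ within $O(\sqrt{d})$ of an \emph{arbitrary} $\mu \in \bR^d$ at a sample cost of only $\tilde{O}(\sqrt{d}/\eps)$, whereas a naive coordinate-wise 1D histogram composes to $\tilde{O}(d/\eps)$. Addressing this requires either citing a $d$-dimensional DP coarse mean estimator from prior work or building one (e.g.\ stable heavy-hitters on a randomly shifted $\sqrt{d}$-scale grid, or iterative refinement). Everything else---Gaussian concentration, sensitivity of the clipped mean, the noise-addition/compensation trick, and privacy composition---is standard, and the polylogarithmic factors in $d,\eps,\delta,\alpha$ emerge transparently from Gaussian tails, the Phase 1 guarantee, and the Gaussian-mechanism calibration.
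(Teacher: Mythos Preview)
Your proposal is correct and follows essentially the same approach as the paper. The only cosmetic difference is that the paper collapses your $Z_1$ and $Z_2$ into a single Gaussian $Z \sim \cN\bigl(0, (1-\tfrac{1}{n})I\bigr)$, whose variance simultaneously fills the gap to $I$ and meets the Gaussian-mechanism calibration once $n \gtrsim R\sqrt{\log(1/\delta)}/\eps$; for Phase~1 the paper invokes a DP ``densest ball'' primitive from prior work that delivers exactly the $O(\sqrt{d})$-radius localization at cost $\tO(\sqrt{d}/\eps)$ you flagged as the main obstacle.
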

We also show that the $\sqrt{d}$ dependence is necessary, i.e., that our algorithm's sample complexity is tight up to polylogarithmic factors.

\begin{theorem} \label{thm:gaussian-known-cov-dim-lb}
Any 0.1-accurate $(\eps, \delta)$-DP sampler for Gaussian distributions with known covariance must have sample complexity $\Omega(\sqrt{d} / \eps)$.
\end{theorem}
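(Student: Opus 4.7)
The plan is to reduce the sampling lower bound to a fingerprinting DP lower bound for mean estimation.

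By an invertible linear reparameterization, I may assume $\Sigma = I$, and restrict attention to the subclass $\{\cN(\mu, I) : \mu \in \{-1, +1\}^d\}$ of $\cD^{\cN}_\Sigma$. A $0.1$-accurate sampler for the full known-covariance class remains $0.1$-accurate on this subclass. Let $\cM$ be the hypothesized $(\eps, \delta)$-DP $0.1$-accurate sampler using $n$ samples, and define the mean estimator $\A(\bX) := \mathrm{sign}(\cM(\bX)) \in \{-1, +1\}^d$ (applied coordinatewise); this is $(\eps, \delta)$-DP by post-processing.

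The main observation is that $\A$ achieves a constant weak recovery of $\mu$. Marginalization cannot increase TV distance, so for each coordinate $i$, the law of $\cM(\bX)_i$ lies within TV distance $0.1$ of $\cN(\mu_i, 1)$. For $\mu_i \in \{-1, +1\}$, a direct computation gives $\bE_{Z \sim \cN(\mu_i, 1)}[\mathrm{sign}(Z)\,\mu_i] = 2\Phi(1) - 1 > 0.68$. Since $\mathrm{sign}(\cdot)\mu_i \in \{-1, +1\}$, the standard TV bound on expectations of bounded statistics gives $\bE[\A(\bX)_i\, \mu_i] \geq 0.68 - 2 \cdot 0.1 = 0.48$. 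Summing over $i$ yields $\bE_{\bX \sim \cN(\mu, I)^n}[\langle \A(\bX), \mu\rangle] \geq 0.48\, d$ for every $\mu \in \{-1, +1\}^d$.

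Finally I invoke a fingerprinting-based DP lower bound: any $(\eps, \delta)$-DP algorithm that takes $n$ i.i.d.\ samples from $\cN(\mu, I)$ (for $\mu \sim U(\{-1, +1\}^d)$) and outputs $\hat\mu \in \{-1, +1\}^d$ achieving $\bE[\langle \hat\mu, \mu\rangle] \geq c\, d$ for some constant $c > 0$ must use $n = \Omega(\sqrt{d}/\eps)$. One standard route is to amplify weak recovery into strong recovery via $O(\log d)$ independent runs with coordinatewise majority (which preserves $(\eps,\delta)$-DP by parallel composition), then invoke the $\Omega(\sqrt{d}/\eps)$ lower bound for $\ell_\infty$-accurate Gaussian mean estimation from \cite{KamathSU19}; this yields $n = \tilde{\Omega}(\sqrt{d}/\eps)$. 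A direct fingerprinting argument in the weak-correlation regime removes the logarithmic factor.

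The main obstacle is the last step: carrying out the fingerprinting argument in the weak-correlation regime tight up to constants, rather than routing through an amplification step that loses a $\log d$ factor; the payoff of doing so is matching the upper bound $O(\sqrt{d}/\eps \cdot \polylog)$ of \Cref{thm:gaussian-known-cov-simplified} in its leading polynomial dependence on $d$.
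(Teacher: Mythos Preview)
Your reduction to a DP mean-estimation fingerprinting lower bound is exactly the paper's approach. The difference is entirely in the last step, where you either amplify to $\ell_\infty$-accuracy via $O(\log d)$ parallel runs (losing a $\log d$ factor) or defer to a weak-correlation fingerprinting argument you do not carry out. The paper sidesteps both: the key observation is that the $\ell_2$ fingerprinting lower bound of \citet{KamathSU19} (stated as \Cref{thm:fingerprinting-gaussian}) already tolerates expected squared error as large as $\gamma^2 = d/6$, and then gives $N \geq \Omega(d/(\gamma\eps)) = \Omega(\sqrt{d}/\eps)$. Achieving $\gamma^2 \leq d/6$ needs only a \emph{constant} success probability per coordinate, so a constant number of repetitions suffices: run the sampler $10^6$ times, each on a fresh batch of $n$ samples, take the coordinatewise median, and clip to $[-1,1]$. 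Each coordinate then lies within $0.3$ of $\mu_j$ with probability $>0.99$, giving $\E[\|\hat\mu - \mu\|_2^2] \leq d/6$ for every $\mu \in [-1,1]^d$, and \Cref{thm:fingerprinting-gaussian} applies directly with no logarithmic loss and no new fingerprinting work. This also avoids your restriction to $\mu \in \{-1,+1\}^d$: your sign-based estimator only controls the error on the hypercube vertices, whereas \Cref{thm:fingerprinting-gaussian} as stated requires the error bound uniformly over $\mu \in [-1,1]^d$.
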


\textbf{Unknown Bounded Covariance.}
In this setting, we obtain an algorithm with sample complexity of $\tO_{\kappa}\left(\frac{d}{\eps}\right)$; in contrast, the best known DP learning algorithm requires $\tTheta_{\kappa}\left(\frac{d^2}{\alpha^2} + \frac{d^2}{\alpha\eps}\right)$ samples~\cite{KamathSU19,KamathMSSU22}.

\begin{theorem} \label{thm:gaussian-bounded-cov-simplified}
There is an $\alpha$-accurate $(\eps, \delta)$-DP sampler for Gaussian distributions with unknown covariance, under the assumption $I \preceq \Sigma \preceq \kappa \cdot I$, with sample complexity $O\left(\frac{d}{\eps} \cdot \kappa^2 \cdot \polylog\left(\frac{d}{\delta \eps \alpha}\right)\right).$
\end{theorem}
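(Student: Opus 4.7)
The strategy is to reduce the unknown-bounded-covariance problem to the known-covariance case already handled by Theorem~\ref{thm:gaussian-known-cov-simplified}. Partition the input samples into two disjoint groups. Use the first group, with half of the privacy budget, to privately produce an estimate $\hat\Sigma$ that is spectrally close to $\Sigma$, via a stability-based covariance estimator of the flavour of~\cite{KamathMSSU22}. Apply the whitening $Y_i=\hat\Sigma^{-1/2}X_i$ to the second group of samples, so that the $Y_i$'s are drawn from a Gaussian whose covariance $A=\hat\Sigma^{-1/2}\Sigma\hat\Sigma^{-1/2}$ is spectrally close to $I$. Invoke the known-covariance sampler of Theorem~\ref{thm:gaussian-known-cov-simplified} on the whitened data with target covariance $I$, then post-process the output by multiplying by $\hat\Sigma^{1/2}$. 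Overall privacy follows by parallel composition over the two stages (they touch disjoint data) and closure of DP under post-processing.

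For the covariance-estimation stage, under the promise $I\preceq\Sigma\preceq\kappa I$, a stability-based estimator produces $\hat\Sigma$ with the required spectral closeness using $\tilde O(d\kappa^2/\epsilon)$ samples; the $\kappa^2$ factor arises because the sensitivity of the empirical covariance scales with $\|\Sigma\|_{op}\le\kappa$, and clipping to enforce a bounded sensitivity adds only polylogarithmic overhead under the sub-Gaussian tails of $\mathcal{N}(\mu,\Sigma)$. The sampling stage invokes Theorem~\ref{thm:gaussian-known-cov-simplified}, which costs only $\tilde O(\sqrt d/\epsilon)$ samples and is dominated by the first stage. Because both stages have only polylogarithmic dependence on $1/\alpha$ (through the confidence parameter of the covariance estimator and the accuracy parameter of the sampler), the claimed $\tilde O(d\kappa^2/\epsilon)$ bound follows.

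The main technical obstacle is making the two stages compatible. A naive triangle-inequality argument would bound the final TV error by $d_{\TV}(\mathcal{N}(\mu,\hat\Sigma),\mathcal{N}(\mu,\Sigma))$, which, via the standard Gaussian formula, demands $\|\Sigma^{-1/2}\hat\Sigma\Sigma^{-1/2}-I\|_F\le O(\alpha)$, i.e.\ spectral accuracy $O(\alpha/\sqrt d)$; even non-privately this would require $\tilde\Omega(d^2/\alpha^2)$ samples and overshoot the target. The fix I would pursue is to avoid treating the known-covariance sampler as a pure black box over the whitened data: instead, the final emission is $\tilde\mu+Z$ where $\tilde\mu$ is the privately noised sample mean from the sampler of Theorem~\ref{thm:gaussian-known-cov-simplified} and $Z\sim\mathcal{N}(0,\hat\Sigma-\mathrm{Var}(\tilde\mu))$ is a compensating Gaussian computed from $\hat\Sigma$ (using the promise $\Sigma\succeq I$ to ensure positive-semidefiniteness). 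The output is then exactly $\mathcal{N}(\mu,\hat\Sigma)$ up to the small clipping bias, so the only remaining TV gap is between $\mathcal{N}(\mu,\hat\Sigma)$ and $\mathcal{N}(\mu,\Sigma)$, which must be closed by showing that a suitable robust/iterative variant of the covariance estimator — reusing subsequent $\epsilon$-budget to refine $\hat\Sigma$ along the residual eigendirections — suffices. Establishing this refinement (or alternatively a direct robustness property of the known-covariance sampler under slight covariance mis-specification) is the bulk of the work.
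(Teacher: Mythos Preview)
Your reduction has a genuine and fundamental gap, which you yourself partially identify but do not resolve. Any pipeline that ends by emitting a sample from $\cN(\mu,\hat\Sigma)$ for some private estimate $\hat\Sigma$ is forced to make $d_{\TV}(\cN(\mu,\hat\Sigma),\cN(\mu,\Sigma))\le\alpha$, i.e., $\|\hat\Sigma-\Sigma\|_\Sigma\le O(\alpha)$. This is precisely the covariance-\emph{learning} problem, for which the sample complexity is $\tTheta(d^2/\alpha^2 + d^2/(\alpha\eps))$ (see \cite{KamathSU19,KamathMSSU22,KMS22}); even non-privately the $d^2/\alpha^2$ term already exceeds the target $\tO(d/\eps)$. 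Your proposed fix---adding a compensating Gaussian $Z\sim\cN(0,\hat\Sigma-\var(\tilde\mu))$---does not escape this, since the output is still $\cN(\mu,\hat\Sigma)$ and you are left needing to ``close the gap'' to $\cN(\mu,\Sigma)$; any iterative refinement of $\hat\Sigma$ is again covariance learning and is subject to the same $\Omega(d^2)$ lower bound. There is no way to make the dependence on $1/\alpha$ polylogarithmic through covariance estimation; the paper itself notes (and proves in the appendix) that Gaussian-mechanism-style approaches of this kind inherently incur polynomial $1/\alpha$.

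The paper's approach is qualitatively different: it \emph{never estimates $\Sigma$}. Instead it exploits the $2$-stability of Gaussians: if $U_1,\dots,U_{n_2}\sim\cN(0,\Sigma)$ and $a\sim\US_{n_2}$, then $\sum_i a[i]\,U_i\sim\cN(0,\Sigma)$ \emph{exactly}, regardless of $\Sigma$. Concretely, after a rough mean-centering step (\Cref{lem:bounded-mean-to-unbounded-mean}), the algorithm outputs $\frac{1}{n_1}\sum X_i^{\trun}+\sqrt{1-1/n_1}\sum_i a[i]\,U_i$ where each $U_i$ is a (scaled) difference of two fresh samples; absent truncation this is distributed as $\cN(\mu,\Sigma)$ on the nose. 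All the work goes into showing this is DP: the noise $\sum_i a[i]\,U_i$ is \emph{data-dependent}, so one needs a propose-test-release check that $\lambda_{\min}(\sum U_iU_i^T)$ is large (so the samples ``span all directions''), together with a new analysis (\Cref{thm:dp-unif-sphere-noise}, \Cref{lem:dp-noise-linear-comb}) showing that noise of the form $Wa$ with $a\sim\US_q$ gives $(\eps,\delta)$-DP when $WW^T\succeq\Omega(q)I$. The $\tO(d\kappa^2/\eps)$ arises from the number of $U_i$'s needed to make this eigenvalue condition hold with high probability and to absorb the truncation sensitivity, not from any estimation of $\Sigma$.
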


Similar to before, we can show that the sample complexity dependence on $d, \eps$ is near-optimal:

\begin{theorem} \label{thm:lb-sampler-bounded-covariance}
Let $\alpha > 0$ be a sufficiently small constant, and $\eps, \delta$ be such that $\delta \leq O\left(\frac{1}{n d^2}\right)$. Then, any $\alpha$-accurate $(\eps, \delta)$-DP sampler for Gaussian distributions with unknown covariance, under the assumption $I \preceq \Sigma \preceq 2I$, must have sample complexity $n = \Omega\left(\frac{d}{\eps\sqrt{\log d}}\right)$.
\end{theorem}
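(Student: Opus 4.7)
The approach is a fingerprinting-style lower bound tailored to the sampling setting. The high-level plan is: construct a random family of hard Gaussian instances with diagonal covariance, define a truncated per-coordinate fingerprint that an accurate sampler must reproduce in expectation, and then use a Stein/DP fingerprinting identity to upper bound that expectation.

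Let $v \sim \mathrm{Unif}(\{-1, +1\}^d)$ and set $\Sigma_v := 1.5\,I + 0.5\,\mathrm{diag}(v)$, so that each eigenvalue of $\Sigma_v$ lies in $\{1, 2\}$ and $I \preceq \Sigma_v \preceq 2I$; let $D_v := \cN(0, \Sigma_v)$. The sampler sees $X \sim D_v^n$ and outputs $Y \in \bR^d$. Fix a truncation level $T = \Theta(\log d)$ large enough that $\Pr[\max_j Y_j^2 > T]$ is $o(1)$, and define $g_j(Y) := \min(Y_j^2, T) - 1.5$ together with the aggregate
\[
F(Y, v) := \sum_{j=1}^d v_j\,g_j(Y).
\]
A direct coordinate-wise calculation gives $\bE_{v, Y \sim D_v}[F(Y, v)] = d/2 - o(d)$, whereas $\bE[F(Y, v')] = 0$ when $v'$ is drawn independently of the data. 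The $\alpha$-accuracy of $\A$ transfers this gap with loss at most $\alpha \|F\|_\infty = O(\alpha d \log d)$, which remains $o(d)$ for a sufficiently small constant $\alpha$.

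The core of the proof is a \emph{sampler fingerprinting inequality}: combining the $(\eps, \delta)$-DP guarantee of $\A$ with a Stein-type identity for the Gaussian score $\nabla_v \log p(X_i;\Sigma_v)$ (in the spirit of standard Gaussian fingerprinting arguments), one obtains
\[
\bE_{v, Y}[F(Y, v)] \;\le\; O\bigl(n\eps\sqrt{\log d}\bigr) \;+\; O\bigl(n\delta\,d\log d\bigr).
\]
The $\sqrt{\log d}$ factor arises from a variance-based (rather than range-based) per-sample sensitivity bound for the truncated fingerprint, and the hypothesis $\delta \le O(1/(nd^2))$ makes the second term $o(1)$. Combining with the gap $\Omega(d)$ forces $n \ge \Omega(d/(\eps\sqrt{\log d}))$, establishing the claim.

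\textbf{Main obstacle.} The crux is proving the sampler fingerprinting inequality with the sharp $\sqrt{\log d}$ rate. A naive worst-case sensitivity analysis would give per-sample sensitivity $O(\eps \cdot d\log d)$ and hence only $n \ge \Omega(1/(\eps\log d))$, far too weak. The improvement to $\sqrt{\log d}$ relies on (i) the Gaussian Stein identity to express each per-sample contribution as a score correlation controlled by a second moment rather than a worst-case range, and (ii) the truncation to dominate Gaussian tails cleanly. The small-$\delta$ condition is essential so that the additive $\delta$-correction in the approximate-DP setting does not dominate the right-hand side.
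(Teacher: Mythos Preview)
Your approach is genuinely different from the paper's. The paper proves \Cref{thm:lb-sampler-bounded-covariance} not by a direct fingerprinting argument but by a black-box reduction to the covariance-estimation lower bound of~\cite{KMS22}: it draws $N \gg n$ samples, runs the sampler $L = \Theta(d^2)$ times on random size-$n$ subsamples to produce $Y_1,\dots,Y_L$, feeds these to an agnostic Gaussian learner (\Cref{thm:agnostic-learner-gaussian}) to obtain $\hSigma$, and shows via amplification-by-subsampling plus advanced composition that this combined covariance estimator is $(\eps^*,\delta^*)$-DP with $N\eps^* = O\bigl(n\eps\sqrt{L\log(1/\delta^*)}\bigr) = O\bigl(n\eps\,d\sqrt{\log d}\bigr)$. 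Comparing with $N \ge \Omega(d^2/\eps^*)$ from \Cref{thm:lb-covariance-est} yields the claim; the $\sqrt{\log d}$ loss is exactly the advanced-composition factor $\sqrt{\log(1/\delta^*)}$.

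Your proposal has a real gap precisely where you flag the ``main obstacle.'' The displayed inequality $\bE[F(Y,v)] \le O(n\eps\sqrt{\log d}) + O(n\delta\,d\log d)$ is asserted, not derived, and I do not see how to obtain a per-sample contribution of only $O(\eps\sqrt{\log d})$. Your truncated fingerprint $F$ has range $\Theta(dT) = \Theta(d\log d)$ and, under the accurate output distribution, standard deviation $\Theta(\sqrt d)$; the standard DP/Stein per-sample bounds therefore yield either $O(\eps\,d\log d)$ (range-based) or $O(\eps\sqrt d)$ (variance-based), giving only $n \ge \Omega\bigl(1/(\eps\log d)\bigr)$ or $n \ge \Omega(\sqrt d/\eps)$, both weaker than the target. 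Indeed, the covariance fingerprinting of~\cite{KMS22} explicitly needs the estimator to achieve \emph{constant} error in $\|\cdot\|_\Sigma$ (see the remark cited before \Cref{thm:lb-covariance-est}), and a single output $Y$ is far too noisy an estimator for this---which is exactly why the paper aggregates $\Theta(d^2)$ sampler outputs through an agnostic learner before invoking the lower bound. Absent a new argument that drives the per-sample fingerprint contribution down to $O(\eps\sqrt{\log d})$, the direct route does not close.
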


\textbf{Unknown Unbounded Covariance.}
In this setting, the best known DP learning algorithm uses $\tTheta\left(\frac{d^2}{\alpha^2} + \frac{d^2}{\alpha\eps}\right)$ samples~\cite{AshtianiL22,KamathMSSU22}. For DP sampling, we show that we can reduce the dependence on $\alpha$ to polylogarithmic, while keeping the dependence on $d$ (roughly) the same.

\begin{theorem} \label{thm:unbounded-gaussian}
There exists an $\alpha$-accurate $(\eps, \delta)$-DP sampler for Gaussian distributions (without any assumption) with sample complexity $O\left(\frac{d^2}{\eps} \polylog\left(\frac{d}{\alpha \eps \delta}\right)\right)$.
\end{theorem}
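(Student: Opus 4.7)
The plan is to reduce the unknown-unbounded-covariance case to Theorem~\ref{thm:gaussian-bounded-cov-simplified} via a preliminary private estimation of $\Sigma$ up to a constant spectral factor. I would partition the $n$ input samples into two disjoint halves $\bX^{(1)}, \bX^{(2)}$. From $\bX^{(1)}$ I would produce an $(\eps,\delta)$-DP estimate $\hat\Sigma$ satisfying
\[
\tfrac{1}{2}\hat\Sigma \preceq \Sigma \preceq 2\hat\Sigma
\]
with probability at least $1-\alpha/2$. Setting $W := \sqrt{2}\,\hat\Sigma^{-1/2}$ and whitening $\bX^{(2)}$ as $Y_i := W X_i$, the whitened samples are i.i.d.\ from $\cN(W\mu, W\Sigma W^T)$, and on the good event the sandwich above gives $I \preceq W\Sigma W^T \preceq 4I$, so the whitened target lies in $\cD^{\cN}_{4}$. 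I would then invoke the bounded-covariance sampler of Theorem~\ref{thm:gaussian-bounded-cov-simplified} on the $Y_i$'s with accuracy $\alpha/2$ and $\kappa=4$ to obtain $Z$, and output $W^{-1}Z$.

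For the pre-estimation, I would invoke any existing approximate-DP Gaussian \emph{learner} (e.g.,~\cite{AshtianiL22,KamathMSSU22}) at a sufficiently small constant target accuracy $\alpha_0$: such learners return $\cN(\tilde\mu,\tilde\Sigma)$ with $d_{\TV}(\cN(\tilde\mu,\tilde\Sigma),\cN(\mu,\Sigma))\leq\alpha_0$ using $\tO(d^2/\eps)$ samples (polylog factors in $d, 1/\eps, 1/\delta$ only, since $\alpha_0$ is a constant). Standard quantitative comparisons between TV distance and parameters of Gaussians then convert this into the desired L\"owner sandwich $\tfrac{1}{2}\tilde\Sigma\preceq\Sigma\preceq 2\tilde\Sigma$, after which I set $\hat\Sigma:=\tilde\Sigma$. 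Summing the two phases yields $|\bX^{(1)}|+|\bX^{(2)}| = \tO(d^2/\eps) + O\!\left(\tfrac{d}{\eps}\polylog\!\left(\tfrac{d}{\alpha\eps\delta}\right)\right) = O\!\left(\tfrac{d^2}{\eps}\polylog\!\left(\tfrac{d}{\alpha\eps\delta}\right)\right)$, as required.

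Privacy follows by a standard sample-splitting argument combining post-processing with conditional DP: a neighboring dataset differs in one sample lying in exactly one half. If that sample is in $\bX^{(1)}$, DP of the pre-estimator together with the post-processing lemma yields $(\eps,\delta)$-indistinguishability of the entire joint output (including the downstream $Z$ and $W^{-1}Z$); if it is in $\bX^{(2)}$, then $\hat\Sigma$ is unchanged, and the second-phase sampler is $(\eps,\delta)$-DP in $\bX^{(2)}$ for every fixed $\hat\Sigma$ by Theorem~\ref{thm:gaussian-bounded-cov-simplified}. Accuracy is by union bound: on the good event (probability $\geq 1-\alpha/2$), Theorem~\ref{thm:gaussian-bounded-cov-simplified} gives $d_{\TV}(Z,\cN(W\mu,W\Sigma W^T)) \leq \alpha/2$, and since the map $z\mapsto W^{-1}z$ is deterministic (and invariant of $\bX^{(2)}$) and pushes $\cN(W\mu,W\Sigma W^T)$ exactly to $\cN(\mu,\Sigma)$, TV distance is preserved, giving $d_{\TV}(W^{-1}Z,\cN(\mu,\Sigma)) \leq \alpha/2$ on the good event.

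The main technical obstacle is the TV-to-L\"owner conversion underlying the pre-estimation step: we need that a constant-TV-close Gaussian has covariance within a constant multiplicative factor of $\Sigma$ in the L\"owner order, which is delicate precisely because $\mu$ and $\Sigma$ are entirely unknown so there is no \emph{a priori} scale. This is where the $d^2/\eps$ term enters the sample complexity, matching the best known DP Gaussian learners and reflecting that (unlike the known- or bounded-covariance regimes) no black-box savings over learning are possible along this reduction. Everything else is a clean composition requiring no new DP primitives beyond Theorem~\ref{thm:gaussian-bounded-cov-simplified} and the cited DP Gaussian learners.
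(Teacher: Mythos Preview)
Your proposal is correct and takes essentially the same approach as the paper: run a DP Gaussian learner (\cite{AshtianiL22}) at constant accuracy on one batch of samples to obtain a preconditioner $\hat\Sigma$ (and $\hat\mu$), whiten a fresh batch so that the resulting covariance lies in $[I,4I]$, invoke the bounded-covariance sampler of Theorem~\ref{thm:gaussian-bounded-cov-simplified}, and undo the whitening. The paper's Lemma~\ref{lem:close-tv-to-close-sigma-mu} supplies exactly the TV-to-L\"owner conversion you flag, and the paper likewise sums the $\tO(d^2/\eps)$ learner cost and the $\tO(d/\eps)$ sampler cost; the only cosmetic difference is that the paper also shifts by $\hat\mu$ before whitening (harmless, since Theorem~\ref{thm:gaussian-bounded-cov-simplified} already absorbs an unbounded mean) and splits the privacy budget via basic composition rather than your sample-splitting/parallel-composition argument.
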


\begin{table*}[h!]
\centering
\begin{tabular}{ |c|c|c|c| }  
 \hline
 & Known Covariance & Bounded Covariance & Unbounded Covariance \\ \hline
 Non-Private Learning & \multirow{2}{*}{$\Theta\left(\frac{d}{\alpha^2}\right)$} & \multirow{2}{*}{$\Theta\left(\frac{d^2}{\alpha^2}\right)$} & \multirow{2}{*}{$\Theta\left(\frac{d^2}{\alpha^2}\right)$} \\
  (Folklore) & & & \\ \hline 
 \multirow{2}{*}{$(\eps, \delta)$-DP Learning} & $\tTheta\left(\frac{d}{\alpha^2} + \frac{d}{\alpha\eps}\right)$ & $\tTheta\left(\frac{d^2}{\alpha^2} + \frac{d^2}{\alpha\eps}\right)$ & $\tTheta\left(\frac{d^2}{\alpha^2} + \frac{d^2}{\alpha\eps}\right)$ \\ 

 & \cite{KamathSU19} & \cite{KamathSU19,KMS22} & \cite{AshtianiL22} \\ \hline
 $(\eps, \delta)$-DP Sampling & $\tTheta\left(\frac{\sqrt{d}}{\eps}\right)$ & $\tTheta\left(\frac{d}{\eps}\right)$ & $\tO\left(\frac{d^2}{\eps}\right)$ \\
 (Our results) & Theorems \ref{thm:gaussian-known-cov-simplified},\ref{thm:gaussian-known-cov-dim-lb}  & Theorems \ref{thm:gaussian-bounded-cov-simplified}, \ref{thm:lb-sampler-bounded-covariance} & \Cref{thm:unbounded-gaussian} \\  
 \hline
\end{tabular}
\caption{Sample complexity of private learning and sampling for Gaussian distributions. Here, $\tO, \tTheta$ hide factors that are polylogarithmic in $d, 1/\eps, 1/\delta, 1/\alpha$ (and  $1/\beta$ in the case of learning).}
\label{fig:gaussian-result-summary}
\end{table*}

\subsection{Our Results for Product Distributions on $\{0, 1\}^d$}
\ifneurips
\begin{wraptable}{R}{0.49\textwidth}
\vspace{-4mm}
\else
\begin{table}
\begin{centering}
\fi
\begin{tabular}{|c|c|}  
 \hline
 Non-Private Learning & $\Theta\left(\frac{d}{\alpha^2}\right)$\\ \hline 
 \multirow{2}{*}{$\eps$-DP Learning} & \multirow{2}{*}{$\tTheta\left(\frac{d}{\alpha^2} + \frac{d}{\alpha\eps}\right)$\cite{KamathMSSU22,KamathSU19}} \\
 & \\
 \hline 
 \multirow{2}{*}{$(\eps, \delta)$-DP Sampling} & \multirow{2}{*}{$\tTheta\left(\frac{d}{\alpha\eps}\right)$ \cite{raskhodnikova2021differentially}} \\ 
 & \\ \hline
 $\eps$-DP Sampling & \multirow{2}{*}{$\tTheta\left(\frac{d}{\alpha\eps}\right)$ Theorem \ref{thm:prod-sampler}}\\
 (Our result) &  \\  
 \hline
\end{tabular}
\caption{Sample complexity for private learning and sampling for product distributions on $\{0, 1\}^d$. Here, $\tTheta$ hides factors that are polylogarithmic in $d, 1/\alpha$ (and $1/\beta$ in the case of learning).}
\ifneurips
\vspace{-3.5em}
\end{wraptable}
\else
\end{centering}
\end{table}
\fi

For $p \in [0, 1]$, let $\Ber(p)$ be the \emph{Bernoulli} distribution supported on $\{0, 1\}$ with probability mass function $f_{\Ber(p)}(0) = 1 - p$ and $f_{\Ber(p)}(1) = p$.  We consider product distributions $\Ber(p_1) \otimes \cdots \otimes \Ber(p_d)$ where $p_1, \dots, p_d \in [0, 1]$ are unknown. (In other words, the class of distributions is $\cD^{\mathrm{prod}} := \{\Ber(p_1) \otimes \cdots \otimes \Ber(p_d) \mid p_1, \dots, p_d \in [0, 1]\}$.)

We give a pure-DP sampler with sample complexity $\tO\left(\frac{d}{\alpha\eps}\right)$. Previously, only approximate-DP sampler with similar sample complexity was known from~\citet{raskhodnikova2021differentially}, who also provided a matching lower bound. In comparison, DP learning requires $\tTheta\left(\frac{d}{\alpha^2} + \frac{d}{\alpha\eps}\right)$ samples~\cite{KamathSU19,BunKSW21}.

\begin{theorem} \label{thm:prod-sampler}
There exists an $\alpha$-accurate $\eps$-DP sampler for product distributions on $\{0, 1\}^d$ with sample complexity
$O\left(\frac{d \log\left(\frac{d}{\alpha}\right)}{\alpha\eps} + \frac{d \log^2\left(\frac{d}{\alpha}\right)}{\eps}\right).$
\end{theorem}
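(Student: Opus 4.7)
The plan is a coordinate-wise sampler. For each $i\in[d]$, compute the empirical count $k_i := \sum_{j=1}^n X_{j,i}$, release $\tilde k_i := k_i + \Lap(d/\eps)$ with independent noises, post-process each $\tilde k_i$ into some $\bar p_i \in [0,1]$, and output $Z = (Z_1, \ldots, Z_d)$ with independent draws $Z_i \sim \Ber(\bar p_i)$. Privacy is immediate: the $L_1$-sensitivity of $(k_1, \ldots, k_d)$ under a single sample swap is at most $d$, so the Laplace mechanism gives $\eps$-DP for the released vector and everything downstream is post-processing.

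\textbf{Product structure reduces TV to per-coordinate Hellinger.} Since $D$ is a product distribution, its columns $X_{\cdot, i}$ are mutually independent, and combined with the independence of the Laplace noises this makes $(\bar p_1, \ldots, \bar p_d)$, and hence $(Z_1, \ldots, Z_d)$, mutually independent. Consequently the output distribution is itself a product, $Q = \bigotimes_{i=1}^d \Ber(\gamma_i)$ with $\gamma_i := \E[\bar p_i]$, and the standard Hellinger bound for product distributions gives
\[
d_{\TV}(Q, D)^2 \;\lesssim\; \sum_{i=1}^d d_H^2\!\bigl(\Ber(\gamma_i), \Ber(p_i)\bigr).
\]
It thus suffices to design the post-processing so that the right-hand side is $O(\alpha^2)$.

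\textbf{Per-coordinate design, and the main obstacle.} For coordinates where $p_i$ is safely away from $0$ and $1$ the obvious rule $\bar p_i = \clip_{[0,1]}(\tilde k_i/n)$ is essentially optimal: by symmetry of the Laplace distribution, the clipping bias is exponentially small in $\min(p_i,1-p_i)/b$ (with $b := d/(n\eps)$), so $\gamma_i = p_i$ up to negligible terms and the Hellinger contribution is tiny. The genuine obstacle is the extreme regime $p_i\to 0$ (symmetrically $p_i\to 1$): plain clipping biases $\gamma_i$ upward by $\Theta(b)$, and $d_H^2(\Ber(b),\Ber(0)) \asymp b$, so summed over $d$ extreme coordinates this alone would demand $n = \tilde{\Omega}(d^2/(\alpha^2\eps))$---a factor of $d$ short of the target. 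I would resolve this by coupling the clip with a two-sided stability test driven by a threshold $T = \Theta((d/\eps)\log(d/\alpha))$: set $\bar p_i = 0$ when $\tilde k_i < T$ (symmetrically $\bar p_i = 1$ when $n-\tilde k_i < T$), and otherwise $\bar p_i = \clip_{[0,1]}(\tilde k_i/n)$.

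\textbf{Reconciling the two sample-complexity terms.} The additive $O(d\log^2(d/\alpha)/\eps)$ term is exactly what is needed so that Laplace-tail bounds make $T$ large enough to force $\gamma_i = O(\alpha^2/d^2)$ on coordinates with $p_i=0$, so each such coordinate contributes $d_H^2 = O(\alpha^2/d^2)$; the main term $O(d\log(d/\alpha)/(\alpha\eps))$ makes $T$ small enough that every coordinate with $p_i = \Omega(\alpha)$ clears the test with high probability and is then accurately estimated by the clipped noisy mean. The subtlest step, and where I expect the bulk of the technical work to lie, is the intermediate regime of small-but-nonzero $p_i$: one must argue that whether the stability test truncates the coordinate or not, the per-coordinate Hellinger cost telescopes across this regime to total $O(\alpha^2)$, using $d_H^2(\Ber(0),\Ber(p_i)) \lesssim p_i$ together with the concentration of $\tilde k_i$ at scale $b$ around $np_i$.
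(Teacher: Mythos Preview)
Your privacy argument and the reduction to per-coordinate Hellinger via the product structure are fine, and you correctly flag the intermediate regime as the crux. But the hoped-for telescoping does not occur, and a single global threshold cannot achieve the target sample complexity. Take the instance $p_1=\cdots=p_d=cT/n$ for any fixed $c\in(0,1)$; with your parameters $T/n=\Theta(\alpha)$, so every $p_i=\Theta(\alpha)$. Then $\E[k_i]=cT$, and since both the binomial fluctuation $O(\sqrt{T})$ and the Laplace noise $O(d/\eps)$ are $o(T)$, the test $\tilde k_i\ge T$ fails with probability $1-o(1)$ on every coordinate, forcing $\gamma_i=o(p_i)$. Hence $d_H^2(\Ber(\gamma_i),\Ber(p_i))=\Theta(p_i)=\Theta(\alpha)$ on every coordinate and the sum is $\Theta(d\alpha)$; the actual TV error on this instance is $\Omega(\min(1,d\alpha))\gg\alpha$. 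The per-coordinate Hellinger costs are all positive and of the same order---there is nothing to cancel. Lowering $T$ so the cutoff $T/n$ drops far below $\alpha$ would instead blow up the bias at $p_i=0$: you need $T\eps/d=\Omega(\log(d/\alpha))$ for the Laplace tail to be polynomially small in $\alpha/d$, and at the target $n$ this forces $T/n=\Omega(\alpha)$. The two constraints are genuinely incompatible for a single threshold.

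The paper breaks this tension by making the threshold adaptive per coordinate, via a two-stage algorithm. A preconditioning stage spends $O\bigl((d/\alpha+d\log(d/\alpha))\log(d/\alpha)/\eps\bigr)$ fresh samples across $O(\log(d/\alpha))$ rounds with geometrically decreasing truncation levels and Laplace noise, placing each $p_j$ in a dyadic bucket $[\tfrac14\cdot2^{-\ell_j},\tfrac34\cdot2^{-\ell_j}]$ (or certifying $p_j\le\alpha/(2d)$). The sampling stage then uses fresh samples and a weighted-$\ell_1$ truncation with weights $w_j=2^{\ell_j}\approx 1/p_j$, clips coordinate $j$ to $[1/(8w_j),7/(8w_j)]$, and draws the final Bernoulli. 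No Laplace noise is added in this second stage: the clipping keeps each $\tilde p_j$ bounded away from $\{0,1\}$ by $\Theta(p_j)$, so the Bernoulli randomness itself delivers privacy via $\log(\tilde p_j'/\tilde p_j)\le O(w_j|\tilde p_j'-\tilde p_j|)$, while the weighted truncation bounds $\|w\circ(\tilde p'-\tilde p)\|_1$. The multi-round preconditioner---not a stability test at a single scale---is what handles the intermediate regime.
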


We also note that our result above improves upon even the approximate-DP sampler in~\citet{raskhodnikova2021differentially} by logarithmic factors. Specifically, for $\alpha \leq 1/\log d$, our sample complexity is  $O\left(\frac{d \log(d/\alpha)}{\alpha\eps}\right)$ whereas theirs is $O\left(\frac{d\sqrt{\log(1/\delta)}}{\alpha\eps}\left(\log^{9/4}d + \log^{5/4}(1/\alpha)\right)\right)$. 
%\xiao{A matching (up to logarithmic factors) lower bound of $\Omega(\frac{d}{\alpha \epsilon})$ has been shown for approximate-DP sampler \cite{raskhodnikova2021differentially}, which also applies to pure-DP sampler. On the other hand, DP learning requires $\tTheta\left(\frac{d}{\alpha^2}+ \frac{d}{\alpha\epsilon}\right)$ samples~\cite{KamathSU19,BunKSW21}.}

\section{Technical Overview}
\label{sec:overview}

\subsection{Gaussian Distributions: Algorithms}
\noindent {\bf Known Covariance.} When $\Sigma$ is known, we may assume w.l.o.g. that $\Sigma = I$; otherwise, we can transform each sample $X$ into $\Sigma^{-1/2} X$.  We start by using known algorithms~\cite{GhaziM20,TsfadiaCKMS22,NarayananME22} to find a ``rough'' estimate for the mean. In particular, we find an estimate $\hmu$ such that $\|\hmu - \mu\|_2 \leq R = \tO(\sqrt{d}/\eps)$ using $n = \tO(\sqrt{d}/\eps)$ samples. By appropriately shifting the subsequent samples, this is equivalent to assuming that $\|\mu\|_2 \leq R$. We then focus on designing a DP sampler for this bounded mean case. It turns out, surprisingly, that the Gaussian mechanism suffices here. Specifically, for a parameter $B > 0$ (chosen later), we truncate each sample so that its $\ell_2$-norm is at most $B$. We then output their average with a (spherical) Gaussian noise $\cN(0, \sigma^2 I)$ added. The description is given in \Cref{alg:known-cov-gaussian}.

\ifneurips
\begin{wrapfigure}{R}{0.48\textwidth}
\centering
\vspace*{-8mm}
\begin{minipage}{0.48\textwidth}
\else
\begin{centering}
\fi
\begin{algorithm}[H]
\caption{{\sc SphericalGaussianSampler}}
\label{alg:known-cov-gaussian}
\textbf{Parameters: } $B, \sigma > 0$, and $n \in \N$. \\
Sample $X_1, \dots, X_n \sim D$\;\\
\For{$i = 1, \dots, n$}{
$X^{\trun}_i = \trun^2_B(X_i)$\; 
\hfill
$\triangleright$ see (\ref{eq:trunc})
\label{line:trun}
}
Sample $Z \sim \cN(0, \sigma^2 I)$\; \\
\Return $Z + \frac{1}{n} \sum_{i \in [n]} X^{\trun}_i$
\end{algorithm}
\ifneurips
\end{minipage}
\vspace*{-4mm}
\end{wrapfigure}
\else
\end{centering}
\fi
The analysis of the Gaussian mechanism~\citep[e.g.,][Appendix A]{DworkR14} shows that the algorithm is $(\eps, \delta)$-DP as long as we pick $\sigma \geq \tO\left(\frac{B}{n\eps}\right)$.

As for the accuracy, observe that if there were no truncation, then the output is exactly distributed as $\cN(\mu, (\sigma^2 + 1/n) I)$, which is precisely $\cN(\mu, I)$ if we set $\sigma^2 = (n - 1)/n$. Therefore, by setting $B = R + O(\sqrt{d + \log(1/\alpha)})$ so that the truncation does \emph{not} occur with probability $1 - \alpha$, we ensure that the sampler is $\alpha$-accurate. The  constraint that $\sigma \geq \tO\left(\frac{B}{n\eps}\right)$ from privacy implies that we need $n \geq \tO(\sqrt{d}/\eps)$ and hence yielding the sample complexity in \Cref{thm:gaussian-known-cov-simplified}.

\textbf{Unknown Bounded Covariance.} Recall that in this setting we know that $I \preceq \Sigma \preceq \kappa \cdot I$.  While it might be tempting to use the above Gaussian mechanism for this setting as well, it turns out that this approach results in sample complexity that depends \emph{polynomially} on $1/\alpha$.\footnote{See \Cref{app:worse-bounded-cov} for a proof sketch of the sample complexity from such an approach.}

To circumvent this, we first consider the case where $\mu = 0$ (i.e., ``centered'' Gaussians). In this case, our algorithm originates from the following attempt: output $\sum_{i\in[n]} a[i] \cdot X_i$, where $(a[1], \dots, a[n]) \sim \US_n$, the uniform distribution over points on the unit sphere in $\bR^n$. It follows from the $2$-stability of the Gaussian distribution~\cite{stable-distribution} that, when $X_1, \dots, X_n \sim \cN(0, \Sigma)$, this results\footnote{Note that this holds even for any fixed unit vector $a \in \bR^d$.} in an output that is distributed exactly as $\cN(0, \Sigma)$. 

Unfortunately, this algorithm is not DP: if $X_1 = \cdots = X_{n - 1} = 0$, then the output will reveal the direction of $X_n$ in the clear. To remedy this, we build on the intuition that, if $X_1, \dots, X_n$ ``sufficiently span all directions'', then there should be ``enough noise'' to make this algorithm DP. In particular, using the bounded covariance property, we can show that if all the eigenvalues of $\sum_{i \in [n]} X_iX_i^T$ are sufficiently large (and each $X_i$ is truncated appropriately), then this algorithm is indeed ``DP''. This is perhaps the most technically challenging part of our work, as the noise is data-dependent and therefore poses significant hurdles in the privacy analysis (\Cref{sec:bounded-cov-gaussian}). Note that this is also the reason we need $n \geq \Omega(d)$, as otherwise $X_1, \dots, X_n$ cannot ``sufficiently span all directions'' in $\bR^d$.

With the above, the last ingredient is a testing step (in the ``propose-test-release'' paradigm of~\citet{DworkL09}) that checks this eigenvalue condition. When this condition fails, we return $\perp$; otherwise $\sum_{i\in[n]} a[i] \cdot X_i$.

To handle the case where $\mu \ne 0$, we take an output to be the sum of the average of $n_1$ samples and $\sqrt{1 - \frac{1}{n_1}} \cdot (\sum_{i \in [n_2]} a[i] \cdot U_i)$, where each $U_i$ is the difference between two fresh independent samples divided by $\sqrt{2}$ and $(a[1], \ldots, a[n_2]) \sim \US_{n_2}$. Notice here that each $U_i \sim \cN(0, \Sigma)$, while the average over $n_1$ samples is $\sim \cN(\mu, \frac{1}{n_1} \cdot \Sigma)$; thus, the sum is $\sim \cN(\mu, \Sigma)$ as desired.
The full description is presented in \Cref{alg:bounded-cov-gaussian-improved}; the parameter setting and analysis can be found in \Cref{sec:bounded-cov-sampler-analysis}.

\ifneurips
\begin{wrapfigure}{R}{0.6\textwidth}
\centering
\vspace*{-8mm}
\begin{minipage}{0.6\textwidth}
\else
\begin{center}
\fi
\begin{algorithm}[H]
\caption{{\sc BoundedCovGaussianSampler}}
\label{alg:bounded-cov-gaussian-improved}
\textbf{Parameters: } $B, \Delta > 0$, and $n_1, n_2 \in \N$. \\
Sample $X_1, \dots, X_{n_1}, X_{n_1 + 1}, \dots, X_{n_1 + 2n_2} \sim D$\\
\For{$i = 1, \dots, n_1 + 2n_2$}{
$X^{\trun}_i = \trun^2_B(X_i)$\;
\hfill
$\triangleright$ see (\ref{eq:trunc})
}
\For{$j=1,\dots,n_2$}{
$U_i = \frac{1}{\sqrt{2}}(X^{\trun}_{n_1 + 2i - 1} - X^{\trun}_{n_1 + 2i})$
}
Sample $r \sim \STLap\left(\frac{\eps}{2}, \frac{\delta}{2}, \Delta\right)$
\hfill $\triangleright$ see \Cref{lem:stlap-dp} \\
\If{$\lambda_{\min}\left(\sum_{i \in [n_2]}  U_iU_i^T\right) + r \geq 0.75n_2$ \label{line:empirical-covariance-check}}{
\Return $\perp$\;  \label{line:halt}
}
Sample $a \sim \US_{n_2}$\\
\Return $\frac{1}{n_1} \left(\sum_{i \in [n_1]} X^{\trun}_i\right) + \sqrt{1 - \frac{1}{n_1}} \cdot  (\sum_{i \in [n_2]} a[i] \cdot U_i)$\;
\end{algorithm}
\ifneurips
\end{minipage}
\vspace*{-4mm}
\end{wrapfigure}
\else
\end{center}
\fi

\textbf{Unknown Unbounded Covariance.} We proceed by reducing this case to the previous setting. We do so by first applying the known DP learning algorithm for Gaussians (with unknown unbounded covariance) from the work of~\citet{AshtianiL22} in order to obtain estimates $\hmu, \hSigma$ of $\mu, \Sigma$ respectively, but with a \emph{constant} error parameter $\alpha$. We then treat $\hmu, \hSigma$ as ``preconditioners''. This allows us to transform any subsequent sample $X$ into $\hSigma^{-1/2}(X - \hmu)$. This reduces us back to DP sampling for $\cN(\hSigma^{-1/2}(\mu - \hmu), \hSigma^{1/2}\Sigma^{-1}\hSigma^{1/2})$. The guarantee of DP learning ensures that this Gaussian actually has bounded covariance. Therefore, we can apply our previous algorithm. Note that a significant part of the sample complexity is due to the DP learning of the Gaussian; the main saving is that since we only need the preconditioner to be a rough estimate, we can set the accuracy parameter for the learning to be $\Theta(1)$ and thus avoid the polynomial dependence on $1/\alpha$.
%, which yields a sample complexity of $\tO(d^2/\eps)$---without the polynomial dependence on $1/\alpha$. 
(Note that private preconditioner is a standard ingredient in the recipe for DP learning~\citep[e.g.,][]{KamathSU19}.)

\subsection{Gaussian Distributions: Lower Bounds}
\noindent {\bf Known Covariance.} Our lower bound in this setting builds on the following insight: if we take a constant number of samples from $\cN(\mu, I)$ (using the DP sampler) and use them to estimate $\mu$, then we incur an expected $\ell^2_2$-error that is $O(d)$. It turns out that known lower bounds for DP mean estimation of Gaussians with known covariance~\cite{KamathSU19} hold for this setting and give a lower bound of $\Omega\left(\frac{d}{\gamma \eps}\right)$, where $\gamma^2$ is the $\ell_2^2$-error. Plugging in $\gamma = \sqrt{d}$ in our setting gives the desired lower bound of $\Omega(\sqrt{d}/\eps)$. In the actual proof, one complication stems from the fact that our DP sampler does not output a sample exactly from $\cN(\mu, I)$. Nonetheless, we can quantify this in terms of the accuracy $\alpha$.

\textbf{Unknown Bounded Covariance.} In this setting, we reduce from a lower bound on DP covariance estimation~\cite{KMS22}---rather than DP mean estimation earlier---of centered Gaussians. The challenge is that $\Theta(1)$ samples from $\cN(0, \Sigma)$ do \emph{not} provide a sufficiently high accuracy estimate for $\Sigma$ so that we can apply the known lower bound\footnote{This high accuracy requirement is inherent in the known DP covariance estimation lower bound; see \citet[Remark 4.4]{KMS22} for more details.}. Therefore, the above approach does not work directly.

To overcome this, we will have to use many samples to estimate the covariance. Recall that the sample complexity lower bound of $\Omega(d^2/\eps)$ for covariance estimation requires the accuracy in the Frobenius distance (or the Mahalanobis distance) to be constant~\cite{KMS22}.
Due to this accuracy requirement, we need to use our DP sampler to generate $L = \Omega(d^2)$ samples $Y_1, \dots, Y_L$ to achieve such an accuracy. We can draw a fresh batch $X^i_1, \dots, X^i_n$ of samples from the underlying distribution to generate each $Y_i$. However, since $L = \Omega(d^2)$, this would use $Ln$ samples in total. The covariance estimation lower bound would then yield $Ln \geq \Omega(d^2/\eps)$, implying $n \geq \Omega(1/\eps)$---even weaker than the mean estimation lower bound!

Fortunately, it turns out this can be overcome by using advanced composition of DP~\cite{DworkNV12}. In particular, we may run our sampler $L$ times on the \emph{same} $n$ samples to produce $Y_1, \dots, Y_L$. In this case, the final covariance estimation algorithm has privacy loss parameter $\tO(\sqrt{L} \cdot \eps) = \tO(d \cdot \eps)$ due to advanced composition (\Cref{thm:advanced-composition}). Therefore, we get a lower bound of $\tOmega(d^2/(d \cdot \eps)) = \tOmega(d/\eps)$ as desired.

While the above overview seems intuitively plausible, there are certain difficulties that we need to overcome. First, the samples $Y_i$ that our algorithm produces are not exactly drawn from $\cN(0, \Sigma)$; to fix this, we run an agnostic learner for Gaussians~\citep[e.g.,][]{AshtianiBM18} to recover the estimate of $\Sigma$. Second, since we run our DP sampler on the same $n$ samples, the produced $Y_1, \dots, Y_L$ are not independent. We fix this by first drawing a larger number $N$ of samples, and then produce each $Y_i$ using $n$-out-of-$N$ random samples; this reduces the correlation across the $Y_i$'s. Furthermore, using amplification-by-sampling of DP~\cite{BalleBG18} gives us the desired privacy-vs-sample complexity lower bound guarantee.

\subsection{Product Distributions: Algorithm}

Our sampler follows the framework of \citet{raskhodnikova2021differentially}, which is built upon the preconditioning procedure proposed by~\citet{KamathSU19} for DP learning. 

We start with a private preconditioner, which obtains a crude estimate of each $p_j$ (up to a constant multiplicative factor).  This is similar to that of~\citet{KamathSU19}, except that we use Laplace noise instead of Gaussian noise; this ensures that the resulting algorithm is pure-DP\footnote{We remark that a similar analysis has also been done by~\citet{Singhal23}, who uses such a pure-DP preconditioner to give pure-DP algorithms for \emph{learning} product distributions.}.  By suitably partitioning $[0,1]$ into geometrically decreasing buckets in terms of $d/\alpha$, the goal then is to estimate $p_j$ by placing it in one of these buckets.  This can be done by an 
appropriate thresholding and the Laplace mechanism.  

The next step is to obtain a refined estimate of the $p_i$'s using a fresh batch of samples. The algorithm then returns a sample randomly drawn from the product distribution given by these estimates. The earlier crude estimates are helpful in truncating and clipping the samples to make the produced sample DP, without adding any noise to the refined estimate.

\section{Preliminaries}
\label{sec:prelim}
%\begin{wrapfigure}{R}{0.56\textwidth}
%\vspace*{-7mm}
%\begin{align}
%\trun^p_B(X) :=
%\begin{cases}
%X &\text{ if } \|X\|_p \leq B, \\
%X \cdot B / \|X\|_p &\text{ if } \|X\|_p > B.
%\label{eq:trunc}
%\end{cases}
%\end{align}
%\vspace*{-8mm}
%\end{wrapfigure}
For convenience, we use the notation $\trun^p_B(X)$ for ``truncation'' for all $X \in \bR^d, B >0, p \geq 1$: \begin{align}
\trun^p_B(X) :=
\begin{cases}
X &\text{ if } \|X\|_p \leq B, \\
X \cdot B / \|X\|_p &\text{ if } \|X\|_p > B.
\label{eq:trunc}
\end{cases}
\end{align}

Let $[k]$ denote $\{1, \ldots, k\}$. For any $X \in \bR^d$, we use $X[j]$ to denote the value of its $j$th coordinate.

\subsection{Distributions and Tail Bounds}

For a discrete distribution $D$, we use $f_D$ to denote its probability mass function (PMF); for a continuous distribution $D$, we use $f_D$ to denote its probability density function (PDF).  Let $\supp(D)$ denote the support of $D$.  We let $Z \sim D$ denote that the random variable $Z$ is distributed according to $D$; throughout, we may write the random variable in place of the distribution and vice versa when convenient.  
Finally, when $D_1, \dots, D_d$ are distributions, we use $D_1 \otimes \cdots \otimes D_d$ as the product distribution, i.e., the distribution of $(Z_1, \dots, Z_d)$ where $Z_1 \sim D_1, \dots, Z_d \sim D_d$ are independent.

For a distribution $D$ on $\bR^d$ and $v \in \bR^d$, we write $D + v$ as a shorthand for the distribution of $X + v$ where $X \sim D$. Furthermore, for a distribution $D$ and a (possibly randomized) function $h$, we write $h(D)$ to denote the distribution of $h(X)$ where $X \sim D$.

We will list a few distributions that will be useful for us.

(i) \textit{Shifted Truncated Discrete Laplace Distribution: } For $\Delta > 0$, 
%$\eps \in (0, 1], \delta \in (0, 1)$, 
let $\shift(\eps, \delta) = \lceil \Delta(1 + \log(1/\delta)/\eps) \rceil$. We define $\STLap(\eps, \delta, \Delta)$ to be the discrete distribution supported on $[-2\shift(\eps, \delta), 0]$ such that $f_{\STLap(\eps, \delta, \Delta)}(x) \propto \exp\left(-\eps\left|x + \shift(\eps, \delta)\right|\right)$.

It is known that adding $\STLap$ noise to a low-sensitivity function results in a DP estimate~\citep[e.g.,][]{tracks-pets}.

\begin{lemma} \label{lem:stlap-dp}
If $g$ is a function with sensitivity  $\leq \Delta$, then the algorithm that outputs $g(\bX) + \STLap(\eps, \delta, \Delta)$ is $(\eps, \delta)$-DP.
\end{lemma}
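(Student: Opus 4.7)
The plan is to verify the $(\eps,\delta)$-DP inequality directly for the mechanism $\cM(\bX) = g(\bX) + Y$, $Y \sim \STLap(\eps,\delta,\Delta)$, by comparing output PMFs pointwise and absorbing a boundary error of mass at most $\delta$. Let $\bX, \bX'$ be a pair of neighbors, set $\Delta' := g(\bX') - g(\bX)$ (so $|\Delta'| \le \Delta$), write $f$ for the PMF of $\STLap(\eps,\delta,\Delta)$, and let $s := \lceil \Delta(1 + \log(1/\delta)/\eps)\rceil$ denote the shift. After translating a test set $S \subseteq \bR$ by $g(\bX)$, it suffices to prove that for every $T \subseteq \bR$,
\[
\sum_{y \in T \cap [-2s,0]} f(y) \;\leq\; e^{\eps} \sum_{\substack{y \in T \\ y - \Delta' \in [-2s,0]}} f(y - \Delta') \;+\; \delta.
\]

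I would partition the left-hand sum into a \emph{bulk} region, where $y - \Delta'$ also lies in $[-2s,0]$, and a \emph{boundary} region, a strip of width at most $|\Delta'| \le \Delta$ adjacent to one endpoint of $[-2s,0]$ (which endpoint depends on the sign of $\Delta'$). On the bulk, from the form of the PMF (proportional to $\exp(-\eps|y+s|/\Delta)$) together with the reverse triangle inequality $\bigl| |y-\Delta'+s| - |y+s| \bigr| \leq |\Delta'| \leq \Delta$, I obtain the pointwise ratio bound $f(y)/f(y-\Delta') \leq e^{\eps}$. Summing termwise and reindexing by $z = y - \Delta'$ bounds the bulk contribution by $e^{\eps}\Pr[\cM(\bX')\in S]$.

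For the boundary region I would show the total mass is at most $\delta$. Since $f$ is unimodal with peak at $-s$, and the endpoints $\{-2s,0\}$ both lie at distance $s$ from the mode, $f(-2s) = f(0) = c \cdot \exp(-\eps s/\Delta)$ where $c$ is the normalizing constant of $\STLap(\eps,\delta,\Delta)$. The choice of $s$ forces $\exp(-\eps s/\Delta) \leq e^{-\eps}\delta$, and because the PMF grows by a factor at most $e^{\eps}$ over any interval of length $\Delta$, every one of the at most $\Delta$ integer points in the boundary strip has mass at most $c\delta$. A direct estimate bounding $c$ above by the un-truncated two-sided discrete Laplace normalizer $\tanh(\eps/(2\Delta))$ gives $c\Delta \leq 1$ for $\eps \leq 1$, so the boundary contributes at most $\delta$ in total.

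The main technical obstacle is the boundary book-keeping: one must split cases on the sign of $\Delta'$ to identify which endpoint of $[-2s,0]$ the bad strip abuts and then carefully sum the corresponding geometric tail. Once that is handled, everything else reduces to the reverse triangle inequality (for the bulk ratio bound) and the elementary estimate $c\Delta \le 1$ for the normalizer.
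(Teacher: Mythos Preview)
The paper does not prove \Cref{lem:stlap-dp}; it simply cites the result from the literature. Your bulk/boundary decomposition with the reverse triangle inequality on the log-PMF is exactly the standard argument for the truncated (discrete) Laplace mechanism, and it is the right approach.

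Two small points. First, you write the PMF as $\exp(-\eps|y+s|/\Delta)$, whereas the paper's displayed definition omits the $/\Delta$. Your version is the one under which the lemma is actually true for general sensitivity $\Delta$; with the paper's formula as written the pointwise ratio would be $e^{\eps\Delta}$ rather than $e^{\eps}$, so you are implicitly (and correctly) fixing a typo in the definition.

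Second, there is a genuine slip in your boundary estimate. You claim the truncated normalizer $c$ is \emph{upper}-bounded by the untruncated discrete Laplace normalizer $\tanh(\eps/(2\Delta))$, but truncation \emph{reduces} the partition function and hence \emph{increases} the normalizer, so in fact $c\geq\tanh(\eps/(2\Delta))$. The fix is easy: the untruncated distribution places mass at most $2e^{-\eps s/\Delta}/(1+e^{-\eps/\Delta})\leq e^{-\eps}\delta\leq 1/2$ outside $[-2s,0]$, so $c\leq 2\tanh(\eps/(2\Delta))\leq \eps/\Delta$, giving $c\Delta\leq\eps\leq 1$ as you wanted. Alternatively, you can avoid bounding $c$ altogether by summing the boundary strip as a geometric series and comparing it directly to the full partition sum $Z$; the ratio telescopes to at most $\delta/2$.
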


%The following continuous distributions will also be used:

%\item \textbf{Chi-squared distribution: } For $k \in \N$, let $\chi_k^2$ denote the chi-squared distribution with degree of freedom $k$, i.e., the distribution of $X_1^2 + \cdots + X_k^2$ where $X_1, \dots, X_k$ are i.i.d. drawn from $\cN(0,1)$. The chi-squared distribution has PDF $f_{\chi^2_k}(x) = \frac{1}{2^k \cdot \Gamma(k/2)} \cdot x^{k/2-1} e^{-x/2}.$

(ii) \textit{Beta Distribution: } For $\alpha, \beta > 0$,  $\Beta(\alpha, \beta)$ has the PDF
$f_{\Beta(\alpha, \beta)}(x) \propto x^{\alpha - 1}(1 - x)^{\beta - 1}$.
%{B(\alpha, \beta)},$
%where $B(\cdot, \cdot)$ denote the beta distribution.

(iii) \textit{Uniform Distribution over Unit Sphere:} For $d \in \N$, let $\US_d$ denote the distribution of a random unit vector in $\bR^d$.

(iv) \textit{Projection of Uniform Distribution over Unit Sphere: } For any $d \in \N$ and $i \in [d]$ and $z \in \bR^d$, let $\Pi_{\leq i}(z)$ denote $(z_1, \dots, z_i)$. Then, let $\US_{d, i}$ denote the distribution of $\Pi_{\leq i}(Z)$ where $Z \sim \US_d$. This probability distribution has the  PDF (see, e.g.,~\citet[Theorem 2]{Ranosova2021}) given below. %in (\ref{eq:proj-uni-sphere}).
%\begin{wrapfigure}{R}{0.55\textwidth}
%\vspace*{-8mm}
\begin{align} \label{eq:proj-uni-sphere}
f_{\US_{d, i}}(z) \propto
\begin{cases}
%\frac{
%\Gamma\left(\frac{d}{2}\right)}{\Gamma\left(\frac{d-i}{2}\right)\pi^{\frac{i}{2}}}
(1 - \|z\|^2)^{\frac{d-i}{2}-1} &\text{ if } \|z\|^2 < 1 \\
0 &\text{ otherwise.}
\end{cases}
\end{align}
%\vspace*{-8mm}
%\end{wrapfigure}

%\subsubsection{Tail Bounds}
%
We need a tail bound on the $\ell_2$-norm of a Gaussian-distributed vector:
%~\citep[e.g.,][Theorem 6.2]{vershynin2018}.
%\footnote{The result below follows by plugging $B = \Sigma^{-1/2}$ in~\citet[Theorem 6.2]{vershynin2018} and the fact that $\|B\| \leq \sqrt{\kappa}$ (which also implies $\|B\|_F \leq \sqrt{\kappa \cdot d}$); see also~\citet[][Exercise 6.3.5]{vershynin2018}, which is stated more similarly to the form of \Cref{lem:tail-gaussian}.}
%
\begin{lemma}[{\citet[][Theorem 6.2]{vershynin2018}}]
%[Tail Bound for Gaussian] 
\label{lem:tail-gaussian}
There exists a constant $c \geq 1$ such that, for any $\mu \in \bR^d$, $\Sigma \in \bR^{d \times d}$\\ where $\Sigma \preceq \kappa \cdot I$ and any $\beta \in (0, \frac{1}{2})$,
%\begin{align*}
$\displaystyle{
\Pr_{X \sim \cN(\mu, \Sigma)}\left[\|X - \mu\|_2 > c\sqrt{\kappa}\left(\sqrt{d} + \sqrt{\log(1/\beta)}\right) \right] \leq \beta
}. $
%\end{align*}
\end{lemma}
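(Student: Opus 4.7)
The plan is to reduce the statement to the standard concentration of the Euclidean norm of a standard Gaussian vector, after first stripping away the mean and the covariance bound. Since the claim concerns the distribution of $\|X-\mu\|_2$, I would write $X - \mu = \Sigma^{1/2} Z$ where $Z \sim \cN(0, I)$. The covariance bound $\Sigma \preceq \kappa \cdot I$ immediately gives $\|\Sigma^{1/2} Z\|_2^2 = Z^T \Sigma Z \leq \kappa \cdot \|Z\|_2^2$, and so $\|X - \mu\|_2 \leq \sqrt{\kappa} \cdot \|Z\|_2$ pointwise. It therefore suffices to prove the statement in the case $\mu = 0$ and $\Sigma = I$, and then multiply the threshold by $\sqrt{\kappa}$.

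Next, I would invoke the standard concentration for $\|Z\|_2$ when $Z \sim \cN(0, I)$. The map $z \mapsto \|z\|_2$ is $1$-Lipschitz, so by Gaussian concentration of Lipschitz functions (the Borell--TIS inequality), for every $t > 0$,
\begin{align*}
\Pr\bigl[\, \|Z\|_2 \geq \mathbb{E}\|Z\|_2 + t \,\bigr] \leq e^{-t^2/2}.
\end{align*}
Combined with the elementary bound $\mathbb{E}\|Z\|_2 \leq \sqrt{\mathbb{E}\|Z\|_2^2} = \sqrt{d}$ (Jensen), this yields
\begin{align*}
\Pr\bigl[\, \|Z\|_2 \geq \sqrt{d} + t \,\bigr] \leq e^{-t^2/2}.
\end{align*}
Setting $t = \sqrt{2 \log(1/\beta)}$ makes the right-hand side equal to $\beta$, which produces a tail bound of the form $\|Z\|_2 \leq \sqrt{d} + \sqrt{2\log(1/\beta)}$ with probability at least $1 - \beta$.

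Combining the two steps, with probability at least $1 - \beta$ over $X \sim \cN(\mu, \Sigma)$,
\begin{align*}
\|X - \mu\|_2 \leq \sqrt{\kappa}\,\|Z\|_2 \leq \sqrt{\kappa}\bigl(\sqrt{d} + \sqrt{2\log(1/\beta)}\bigr),
\end{align*}
which gives the claim with $c = \sqrt{2}$ (or any $c \geq \sqrt{2}$ to make $c \geq 1$ trivially hold). No step here is really an obstacle: the bulk of the work is cited from Vershynin's textbook. The only subtle point worth double-checking is the reduction from $\Sigma \preceq \kappa \cdot I$ to the spherical case; this is clean because $\|\Sigma^{1/2} Z\|_2^2 = Z^T \Sigma Z$ is a pointwise-in-$Z$ quadratic form that is dominated by $\kappa \|Z\|_2^2$, so the inequality holds sample-by-sample rather than only in distribution, which is exactly what is needed for the tail bound to transfer without loss.
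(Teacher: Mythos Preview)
Your proof is correct. The paper does not actually prove this lemma; it simply cites Vershynin's textbook, so your argument supplies the standard details behind that citation: reduce to the spherical case via the pointwise bound $\|\Sigma^{1/2}Z\|_2 \le \sqrt{\kappa}\,\|Z\|_2$, then apply Gaussian Lipschitz concentration together with $\E\|Z\|_2 \le \sqrt{d}$.
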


We also need a tail bound for the beta distribution:
%\begin{lemma}[Tail bound for beta distribution~\cite{beta-subgaussian}]
%For any $\alpha, \beta > 0$,
%\begin{align*}
%\Pr_{Z \sim \Beta(\alpha, \beta)}\left[\left|Z - \frac{\alpha}{\alpha + \beta}\right| > x\right] \leq \exp\left(-2(\alpha + \beta + 1)x^2\right),
%\end{align*}
%for all $x \in \R_{\geq 0}$.
%\end{lemma}

\begin{lemma}
%[Tail Bound for Beta, 
[\citet{beta-tail}, Theorem 8] \label{lem:beta-tail}
There exists a constant $c \in (0, 1)$ such that, for any $0 < \alpha < \beta$ and any $x \geq 0$, we have
%\begin{align*}
$\displaystyle{
\Pr_{Z \sim \Beta(\alpha, \beta)}\left[Z \geq \frac{\alpha}{\alpha + \beta} + x\right] \leq 2 e^{-c \cdot \min\left\{\frac{\beta^2 x^2}{\alpha}, \beta x\right\}}.
}$
%$\end{align*}
\end{lemma}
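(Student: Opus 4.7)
The plan is to exploit the representation of $Z \sim \Beta(\alpha,\beta)$ as a ratio of independent Gammas: $Z \stackrel{d}{=} X/(X+Y)$ with $X \sim \Gamma(\alpha,1)$, $Y \sim \Gamma(\beta,1)$. Setting $q = p + x$ with $p = \alpha/(\alpha+\beta)$, the event $\{Z \geq q\}$ rewrites as $\{(1-q)X - qY \geq 0\}$, a tail event for a linear combination of independent Gammas. Chernoff together with the Gamma MGFs $\E[e^{tX}] = (1-t)^{-\alpha}$ and $\E[e^{-sY}] = (1+s)^{-\beta}$ gives, for any $\lambda \in (0, 1/(1-q))$,
\[
\Pr[Z \geq q] \;\leq\; (1 - \lambda(1-q))^{-\alpha}(1 + \lambda q)^{-\beta}.
\]

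Next I would optimize over $\lambda$. The first-order condition gives the clean closed-form $\lambda^* = x/(q(1-q))$, and substituting back produces the rate-function bound
\[
\Pr[Z \geq q] \;\leq\; \exp\bigl(-(\alpha+\beta)\cdot \mathrm{KL}(p\|q)\bigr),
\]
where $\mathrm{KL}(p\|q) = p\log(p/q) + (1-p)\log((1-p)/(1-q))$ is the Bernoulli KL divergence. What remains is to show $(\alpha+\beta)\cdot \mathrm{KL}(p\|p+x) \gtrsim \min\{\beta^2 x^2/\alpha,\; \beta x\}$.

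For the KL lower bound I would use the elementary inequalities $\log(1+u) \leq u - \frac{u^2}{2(1+u)}$ and $-\log(1-u) \geq u + \frac{u^2}{2}$, valid for $u \geq 0$ (respectively $u \in [0,1)$). Applied with $u = x/p$ and $u = x/(1-p)$, the $\pm x$ linear terms cancel exactly and one obtains
\[
\mathrm{KL}(p\|p+x) \;\geq\; \frac{x^2}{2(p+x)} + \frac{x^2}{2(1-p)}.
\]
The conclusion follows by a simple case split. For $x \leq p$, $p+x \leq 2p = 2\alpha/(\alpha+\beta)$, so $(\alpha+\beta)\mathrm{KL} \geq (\alpha+\beta)^2 x^2/(4\alpha) \geq \beta^2 x^2/(4\alpha)$. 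For $x > p$, $p+x \leq 2x$, so $(\alpha+\beta)\mathrm{KL} \geq (\alpha+\beta)x/4 \geq \beta x/4$. Combining yields $\Pr[Z \geq p+x] \leq \exp\bigl(-\tfrac14\min(\beta^2 x^2/\alpha, \beta x)\bigr)$, which is even stronger than the stated bound (the factor $2$ provides slack).

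The main obstacle is getting the correct order in the KL lower bound: Pinsker's inequality $\mathrm{KL}(p\|q) \geq 2(p-q)^2$ would only deliver $\exp(-\beta x^2)$ in the small-$x$ regime, missing the crucial $\beta/\alpha$ boost when $\beta \gg \alpha$. The key is that when $p$ is small, the local curvature of $\mathrm{KL}(p\|\cdot)$ near $p$ scales like $1/p$, and this is exactly what the denominator $p+x$ arising from the refined Taylor bound on $\log(1+x/p)$ captures---producing the variance-like $\alpha$ in the denominator of the sub-Gaussian regime, while the same term yields the sub-exponential $\beta x$ bound once $x$ exceeds $p$.
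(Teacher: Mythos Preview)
The paper does not prove this lemma; it is quoted from an external reference (\citet{beta-tail}, Theorem~8) and used as a black box. So there is no ``paper's own proof'' to compare against.

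Your argument is correct and self-contained. The Gamma-ratio representation, Chernoff with the exact MGFs, and the closed-form optimizer $\lambda^* = x/(q(1-q))$ all check out and cleanly yield $\Pr[Z\geq p+x]\leq \exp\bigl(-(\alpha+\beta)\,\mathrm{KL}(p\|p+x)\bigr)$. The KL lower bound via the two elementary log inequalities is valid (both inequalities are easily verified by differentiation), the linear terms indeed cancel, and the case split on $x\lessgtr p$ delivers the two regimes with constant $c=1/4$. One tiny point worth stating explicitly for completeness: when $p+x\geq 1$ the event has probability zero and the bound is vacuous, so the interesting range $x<1-p$ (needed for the second log inequality) is without loss of generality. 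Your final bound $\exp\bigl(-\tfrac14\min(\beta^2 x^2/\alpha,\beta x)\bigr)$ is in fact sharper than the cited statement, which carries an extra factor of $2$.
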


The following concentration bound on the empirical covariance will also be helpful.

\begin{lemma}[{\citet[][Fact 3.4]{KamathSU19}}] \label{lem:concen-cov}
There exists a constant $c \geq 1$ such that
%for which the following holds.
for any $\Sigma \succeq I$, let $U_1, \dots, U_n \sim \cN(0, \Sigma)$ and $\hSigma = \frac{1}{n} \sum_{i \in [n]} U_i U_i^T$, we have
%\begin{align*}
$\displaystyle{
\Pr\left[\hSigma \succeq \left(1 - c\sqrt{\frac{d + \log(1/\beta)}{n}}\right) \cdot I\right] \geq 1 - \beta.
}$
%\end{align*}
\end{lemma}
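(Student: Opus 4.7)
The plan is to reduce to the isotropic case and then invoke a standard non-asymptotic bound on the extreme singular values of a Gaussian random matrix. First I would write $U_i = \Sigma^{1/2} V_i$ where $V_1, \dots, V_n \sim \cN(0, I)$ are i.i.d., so that
\begin{equation*}
\hSigma \;=\; \frac{1}{n}\sum_{i \in [n]} \Sigma^{1/2} V_i V_i^T \Sigma^{1/2} \;=\; \Sigma^{1/2} M \Sigma^{1/2},\qquad M := \frac{1}{n}\sum_{i \in [n]} V_iV_i^T.
\end{equation*}
Since $\Sigma \succeq I$, it suffices to show $M \succeq (1 - c\sqrt{(d + \log(1/\beta))/n})\, I$ with probability at least $1-\beta$: any matrix $A \succeq t I$ with $t \ge 0$ satisfies $\Sigma^{1/2} A \Sigma^{1/2} \succeq t \Sigma \succeq t I$ (and the inequality is trivially true whenever the scalar $t$ is non-positive, so that case requires no work).

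Next I would bound $\lambda_{\min}(M)$ via the smallest singular value of the $d \times n$ matrix $V$ whose columns are $V_1, \dots, V_n$, using $\lambda_{\min}(M) = s_{\min}(V)^2/n$. The standard Gaussian matrix concentration bound (e.g., Vershynin, Corollary 5.35 / Theorem 4.6.1) states that for $n \ge d$,
\begin{equation*}
\Pr\bigl[s_{\min}(V) \ge \sqrt{n} - \sqrt{d} - t\bigr] \;\ge\; 1 - 2e^{-t^2/2}.
\end{equation*}
Setting $t = \sqrt{2\log(2/\beta)}$ yields, with probability at least $1-\beta$,
\begin{equation*}
\lambda_{\min}(M) \;\ge\; \Bigl(1 - \sqrt{d/n} - \sqrt{2\log(2/\beta)/n}\Bigr)^2 \;\ge\; 1 - 2\sqrt{d/n} - 2\sqrt{2\log(2/\beta)/n},
\end{equation*}
which can be absorbed into a single term $1 - c\sqrt{(d + \log(1/\beta))/n}$ for an appropriate absolute constant $c \ge 1$. (When the claimed lower bound is non-positive, i.e.\ $n$ is too small, the statement is vacuous, so we may assume $n$ is large enough that $n \ge d$ and the square can be cleanly lower-bounded.)

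Combining the two steps gives the claim. There is no real obstacle here: the reduction via $\Sigma^{1/2}$ is immediate from $\Sigma \succeq I$, and the singular value concentration bound for Gaussian matrices is a textbook result. The only mild care is to pick the right form of the cited concentration bound so that the deviation term has the additive $d + \log(1/\beta)$ shape stated in the lemma rather than a product form, and to absorb lower-order terms into the constant $c$.
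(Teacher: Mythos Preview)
The paper does not actually prove this lemma; it is quoted verbatim as a known fact (Fact 3.4 in \citet{KamathSU19}) and used as a black box, so there is no in-paper proof to compare against. Your argument is a correct and standard way to establish the statement: the $\Sigma^{1/2}$-conjugation reduction to the isotropic case is exactly the right move given the hypothesis $\Sigma \succeq I$, and the bound on $s_{\min}$ of a $d \times n$ standard Gaussian matrix (Vershynin, Corollary~5.35) yields the claimed deviation shape after squaring and absorbing constants. The only cosmetic point is that you are implicitly assuming $\beta \le 1/2$ (or some fixed upper bound) so that $\log(2/\beta)$ can be folded into $c\,\log(1/\beta)$; this is harmless since the paper restricts to $\beta \in (0,1/2)$ throughout.
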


\subsection{Differential Privacy}
%
%\pasin{I'm using substitution to avoid terms like $1/n - 1/(n - 1)$ etc. Our algorithms in the previous sections work for substitution setting anyway...}
%
\noindent {\bf Hockey Stick Divergence.} We also recall the definition of \emph{$\eps$-hockey stick divergence} between two distributions $P, Q$:
%\begin{align*}
$\displaystyle{
d_\eps(P ~||~ Q) := \int_{y \in \supp(P)} [f_P(y) - e^\eps f_Q(y)]_+ dy,
}$
%\end{align*}
where $[a]_+ := \max\{a, 0\}$. 
%Here $f_P(\cdot), f_q(\cdot)$ denote the PDF of the two distributions in the case of continuous distributions, and PMF in the case of concrete distributions.
%
The following standard fact about the hockey stick divergence is often useful in proving DP guarantees of algorithms~\cite{SMM19, KJH20}. 

\begin{lemma} \label{lem:hs-ineq}
For any $\eps \geq 0$ and distributions $P, Q$,
$d_\eps(P ~||~ Q) \leq \Pr_{y \sim P}[f_P(y) > e^\eps f_Q(y)].$
\end{lemma}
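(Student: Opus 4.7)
The plan is to unwind the definition of the $\eps$-hockey stick divergence and use two elementary observations about the integrand $[f_P(y) - e^\eps f_Q(y)]_+$. First, I would let $S := \{y \in \supp(P) : f_P(y) > e^\eps f_Q(y)\}$ be exactly the ``bad'' set appearing on the right-hand side of the inequality. The key point is that $[a]_+ = 0$ whenever $a \leq 0$, so the integrand vanishes outside $S$, and we can restrict the integral to $S$ without loss:
\[
d_\eps(P \,\|\, Q) \;=\; \int_{y \in S} \bigl(f_P(y) - e^\eps f_Q(y)\bigr)\, dy.
\]

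Next, on $S$ itself we simply drop the nonnegative term $e^\eps f_Q(y)$ (which only makes the integrand larger), giving
\[
d_\eps(P \,\|\, Q) \;\leq\; \int_{y \in S} f_P(y)\, dy \;=\; \Pr_{y \sim P}[y \in S] \;=\; \Pr_{y \sim P}[f_P(y) > e^\eps f_Q(y)],
\]
which is exactly the claimed bound. For the discrete case, the same proof goes through with sums replacing integrals and $f_P, f_Q$ interpreted as PMFs.

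There is no real obstacle here: the argument is two lines once one notices that $[f_P - e^\eps f_Q]_+ \leq f_P \cdot \mathbf{1}[f_P > e^\eps f_Q]$ pointwise. The only thing to be slightly careful about is the domain of integration (using $\supp(P)$ rather than all of the ambient space), but this is harmless since $f_P = 0$ outside $\supp(P)$ and both sides of the target inequality are unaffected.
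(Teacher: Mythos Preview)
Your proof is correct. The paper does not actually supply a proof of this lemma---it simply states it as a ``standard fact'' with citations---so there is nothing to compare against; your argument is precisely the standard one-line derivation.
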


%When convenient and there is no ambiguity, we may write the random variables in place of the distributions.

It will be also useful to keep in mind the ``post-processing'' property of DP: %which can be stated as follows:
\begin{lemma}%[Post-processing of DP] 
\label{lem:post-processing}
For any distributions $P, Q$ and any function $h$, we have $d_\eps(h(P) ~||~ h(Q)) \leq d_\eps(P ~||~ Q)$.
\end{lemma}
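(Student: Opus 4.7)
The plan is to reduce the inequality to the following alternative characterization of the hockey stick divergence as a supremum over measurable sets,
\[
d_\eps(P ~||~ Q) = \sup_{S}\; \Big(\Pr_{y \sim P}[y \in S] - e^\eps \Pr_{y \sim Q}[y \in S]\Big),
\]
where $S$ ranges over measurable subsets of $\supp(P) \cup \supp(Q)$. This identity falls straight out of the definition by noting that $[f_P(y) - e^\eps f_Q(y)]_+$ equals the integrand $(f_P(y) - e^\eps f_Q(y))\mathbf{1}[y \in S^\star]$ for $S^\star := \{y : f_P(y) > e^\eps f_Q(y)\}$, and this choice maximizes $\int_S (f_P - e^\eps f_Q)\, dy$ over all measurable $S$.

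With this identity in hand, I would handle the deterministic case of $h$ first. For any measurable $S$ in the codomain of $h$, let $T := h^{-1}(S)$, which is itself measurable. Since $\Pr[h(P) \in S] = \Pr_{y \sim P}[y \in T]$ and analogously for $Q$,
\[
\Pr[h(P) \in S] - e^\eps \Pr[h(Q) \in S] = \Pr_{y \sim P}[y \in T] - e^\eps \Pr_{y \sim Q}[y \in T] \;\le\; d_\eps(P ~||~ Q),
\]
and taking the supremum over $S$ delivers the claim. For a randomized $h$, I would write $h(y) = g(y, R)$ with $R$ independent external randomness and $g$ deterministic. Conditioning on each realization $r$, the deterministic case yields $d_\eps(g(P, r) ~||~ g(Q, r)) \le d_\eps(P ~||~ Q)$, and joint convexity of $d_\eps(\cdot ~||~ \cdot)$ (which holds since $[\cdot]_+$ is convex while $(P, Q) \mapsto f_P - e^\eps f_Q$ is linear) then yields
\[
d_\eps(h(P) ~||~ h(Q)) \;\le\; \E_{R}\big[d_\eps(g(P, R) ~||~ g(Q, R))\big] \;\le\; d_\eps(P ~||~ Q).
\]

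I do not anticipate any genuine obstacle: this is a standard post-processing argument, and the only mild care needed is measure-theoretic bookkeeping, namely verifying the supremum-over-sets identity uniformly for discrete, continuous, and mixed distributions, and ensuring measurability of $h^{-1}(S)$ in the deterministic step. The randomized reduction via joint convexity is likewise routine once the deterministic case is in hand.
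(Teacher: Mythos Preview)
The paper does not supply a proof of this lemma; it is stated in the preliminaries as the standard ``post-processing'' property of DP and left without argument. Your proof is correct and is the standard one: the supremum-over-sets identity for the hockey stick divergence, the preimage argument for deterministic $h$, and the extension to randomized $h$ via joint convexity are all valid. (A slight streamlining of your randomized step: you can avoid invoking joint convexity by applying the deterministic case directly to the product distributions $(P \otimes R,\, Q \otimes R)$ and the deterministic map $g$, together with the easy identity $d_\eps(P \otimes R ~||~ Q \otimes R) = d_\eps(P ~||~ Q)$.)
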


\noindent {\bf DP under Condition.} Since we will use a ``propose-test-release''-style algorithm~\cite{DworkL09}, it will be convenient to use the notion of ``DP under condition'' together with its composition properties. The particular definition we use below is from~\citet{KothariMV22};  similar notions have been used earlier, e.g., in~\citet{DworkL09}.

\begin{definition}[DP under Condition,~\citet{KothariMV22}]
Let $\Psi: \cY \rightarrow \{0,1\}$ be a predicate.  An algorithm $M: \cY \to \cO$ is \emph{$(\eps, \delta)$-DP under condition $\Psi$} for $\eps, \delta > 0$ iff, for every $S \subseteq \cO$ and every neighboring datasets $Y, Y' \in \cY$ both satisfying $\Psi$, we have
%\[
$
\Pr[\cM(Y) \in S] \leq e^\eps \cdot \Pr[\cM(Y') \in S] + \delta.
$
%\]
\end{definition}

\begin{lemma}[Composition for Algorithm with Halting,~\citet{KothariMV22}] \label{lem:composition}
Let $\cM_1: \cY \to \cO_1 \cup \{\perp\}, \cM_2: \cO_1 \times \cY \to \cO_2$ be algorithms. Furthermore, let $\cM$ denote the following algorithm:  Let $o_1 = \cM_1(Y)$ and, if $o_1 = \perp$, then halt and output $\perp$ or else, output $o_2 = \cM_2(o_1, Y)$.

Let $\Psi$ be any condition such that, if $Y$ does not satisfy $\Psi$, then $\cM_1(Y)$ always returns $\perp$.
Suppose that $\cM_1$ is $(\eps_1, \delta_1)$-DP and $\cM_2$ is $(\eps_2, \delta_2)$-DP under condition $\Psi$.
Then, $\cM$ is $\left(\eps_1 + \eps_2, \delta_1 + \delta_2\right)$-DP.
\end{lemma}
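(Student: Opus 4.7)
My plan is a case analysis on how the condition $\Psi$ interacts with a given pair of neighboring datasets $Y, Y'$, combined with the standard adaptive composition theorem of approximate differential privacy. Fix any event $S \subseteq \cO_2 \cup \{\perp\}$; the goal is to show
\[
\Pr[\cM(Y) \in S] \leq e^{\eps_1 + \eps_2} \Pr[\cM(Y') \in S] + \delta_1 + \delta_2,
\]
and I would split into three cases based on whether each of $Y, Y'$ satisfies $\Psi$.

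\textbf{Case 1: Both $Y$ and $Y'$ satisfy $\Psi$.} This is the main case. Here $\cM_2$ is $(\eps_2, \delta_2)$-DP on the pair $(Y, Y')$ by the under-condition guarantee. The overall $\cM$ is then an adaptive composition of $\cM_1$ (which is unconditionally $(\eps_1, \delta_1)$-DP) with $\cM_2$, combined with a trivial post-processing that forwards $\perp$ as $\perp$ (\Cref{lem:post-processing}). Invoking standard adaptive composition of approximate DP then yields the desired $(\eps_1 + \eps_2, \delta_1 + \delta_2)$-DP bound on this pair.

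\textbf{Case 2: Exactly one of $Y, Y'$ fails $\Psi$.} Without loss of generality, $Y'$ fails $\Psi$, so by hypothesis $\cM_1(Y') \equiv \perp$ and hence $\cM(Y') \equiv \perp$. If $\perp \notin S$, then $\Pr[\cM(Y') \in S] = 0$; since any outcome in $S \subseteq \cO_2$ requires $\cM_1(Y) \neq \perp$, applying DP of $\cM_1$ to the event $\{o_1 \neq \perp\}$ gives
\[
\Pr[\cM(Y) \in S] \leq \Pr[\cM_1(Y) \neq \perp] \leq e^{\eps_1} \cdot 0 + \delta_1 = \delta_1 \leq \delta_1 + \delta_2.
\]
If $\perp \in S$, then $\Pr[\cM(Y') \in S] = 1$ and the inequality is immediate. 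The symmetric sub-case (where $Y$ fails but $Y'$ satisfies $\Psi$) is handled analogously: when $\perp \notin S$, $\Pr[\cM(Y) \in S] = 0$; when $\perp \in S$, $\Pr[\cM(Y) \in S] = 1$, and DP of $\cM_1$ in the reverse direction yields $\Pr[\cM_1(Y') = \perp] \geq e^{-\eps_1}(1 - \delta_1)$, so $e^{\eps_1 + \eps_2}\Pr[\cM(Y') \in S] + \delta_1 + \delta_2 \geq e^{\eps_2}(1-\delta_1) + \delta_1 + \delta_2 \geq 1$, which suffices.

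\textbf{Case 3: Neither $Y$ nor $Y'$ satisfies $\Psi$.} Both $\cM(Y)$ and $\cM(Y')$ are deterministically $\perp$, so the DP inequality is trivial. The main (minor) obstacle is keeping the case analysis honest: the halting convention is exactly what lets us avoid applying $\cM_2$'s guarantee outside its domain of validity, and once $\Psi$ fails for either input the $\perp$-output forces everything through the crude DP bound of $\cM_1$ alone, while standard adaptive composition handles the ``both satisfy $\Psi$'' case as a black box.
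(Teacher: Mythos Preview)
The paper does not supply its own proof of this lemma; it is quoted directly from \citet{KothariMV22} and used as a black box in the privacy analysis of \Cref{alg:bounded-cov-gaussian-improved}. So there is nothing in the paper to compare against.

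That said, your argument is correct and is the standard one. In Case~1, once both $Y,Y'$ satisfy $\Psi$, the extension $\cM_2'(\perp,\cdot)\equiv\perp$ and $\cM_2'(o_1,\cdot)=\cM_2(o_1,\cdot)$ is $(\eps_2,\delta_2)$-DP on this pair for every fixed first argument, and basic adaptive composition gives $(\eps_1+\eps_2,\delta_1+\delta_2)$. In Case~2 the key inequality $\Pr[\cM_1(Y)\neq\perp]\leq e^{\eps_1}\cdot 0+\delta_1$ is exactly the point of the halting convention, and your treatment of the reversed direction (showing $e^{\eps_2}(1-\delta_1)+\delta_1\geq 1$) is clean. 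Case~3 is trivial as you note.
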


\section{Gaussian Distribution: Algorithms}
\label{sec:gaussian-alg}
\noindent {\bf Reduction to the Bounded Mean Case.} As stated earlier, for the cases of known covariance and bounded covariance, we will need a preprocessing step that computes a rough private estimate for the mean. The properties of the reduction are stated below. 

\begin{lemma} \label{lem:bounded-mean-to-unbounded-mean}
Suppose that there is an $\alpha$-accurate $(\eps, \delta)$-DP sampler 
%$\A_{\boundedmeansampler}$ 
for Gaussian distributions under the assumption that $I \preceq \Sigma \preceq \kappa \cdot I, \|\mu\| \leq R$ with sample complexity $n_{\boundedmeansampler}(\alpha, R, \eps, \delta)$.  Then, there exists an  $\alpha$-accurate $(\eps, \delta)$-DP sampler for Gaussian distributions under the assumption that $I \preceq \Sigma \preceq \kappa \cdot I$ with sample complexity $\tO\left(\frac{\sqrt{d}}{\eps}\right) + n_{\boundedmeansampler}(\alpha/2, O(\kappa\sqrt{d}), \eps/2, \delta/2)$.
\end{lemma}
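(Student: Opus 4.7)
The plan is a standard two-stage reduction: use a disjoint first batch to privately produce a coarse estimate $\hmu$ of $\mu$, then shift a fresh batch by $-\hmu$ and feed it to the given bounded-mean sampler.

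Concretely, I would first invoke one of the coarse DP mean estimators cited in the introduction (e.g., \cite{GhaziM20,TsfadiaCKMS22,NarayananME22}) with $(\eps/2, \delta/2)$-DP, using $n_0 = \tO(\sqrt{d}/\eps)$ samples, so as to obtain $\hmu$ satisfying $\|\hmu - \mu\|_2 \leq R := O(\kappa\sqrt{d})$ with failure probability at most $\alpha/2$; the $\log(1/\alpha)$, $\log(1/\delta)$, and $\kappa$ factors are absorbed into the $\tO$. Let $E$ denote this success event. Next, draw a disjoint batch of $n_1 := n_{\boundedmeansampler}(\alpha/2, R, \eps/2, \delta/2)$ samples $X_1, \dots, X_{n_1}$, form the conditionally i.i.d.\ pre-shifted samples $X_i - \hmu$ (which, given $\hmu$, are drawn from $\cN(\mu - \hmu, \Sigma)$), invoke the bounded-mean sampler on them to get $Y'$, and return $Y := Y' + \hmu$.

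The privacy argument exploits that the two batches are disjoint: any neighbor of the full dataset differs in exactly one batch. A difference in the first batch affects only $\hmu$, and the rest of the pipeline is post-processing, so the output inherits $(\eps/2, \delta/2)$-DP from the coarse estimator; a difference in the second batch leaves $\hmu$ fixed, so the output inherits $(\eps/2, \delta/2)$-DP from the sampler. Either way the mechanism is $(\eps, \delta)$-DP (in fact $(\eps/2, \delta/2)$-DP). For accuracy, I would condition on $\hmu$: on event $E$, the bounded-mean sampler's guarantee gives $d_{\TV}(Y', \cN(\mu - \hmu, \Sigma)) \leq \alpha/2$, and since translation by a constant is a measure-preserving bijection, this transfers to $d_{\TV}(Y, \cN(\mu, \Sigma)) \leq \alpha/2$. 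Taking expectation over $\hmu$ by convexity of $d_{\TV}$ and bounding $\bar E$ trivially by $1$ yields an unconditional TV bound of $(\alpha/2) + (\alpha/2) = \alpha$. Summing the two sample budgets gives the claimed total.

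The main obstacle is merely bookkeeping rather than a conceptual one: one needs to verify that the cited coarse mean estimator indeed achieves error $R = O(\kappa\sqrt{d})$ within the $\tO(\sqrt{d}/\eps)$ budget under the bounded-covariance promise $I \preceq \Sigma \preceq \kappa I$. This fits comfortably, since $R$ is set generously—on the order of $\sqrt{\kappa}$ times the natural single-coordinate noise scale of a clipped-mean-plus-Gaussian-noise mechanism with clipping radius $O(\sqrt{\kappa d})$—so a direct invocation of any of the cited coarse mean estimators, suitably parameterized, suffices.
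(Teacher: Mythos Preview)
Your proposal is correct and matches the paper's proof essentially step for step: the paper also splits the privacy and sample budgets in half, uses a first disjoint batch with an $(\eps/2,\delta/2)$-DP ``densest ball'' primitive (the same references you cite) to obtain a center $c$ with $\|c-\mu\|\le O(\kappa\sqrt{d})$ with failure probability $\alpha/2$, then shifts a fresh batch by $-c$, runs the bounded-mean sampler, and shifts back. The only cosmetic differences are that the paper phrases the first stage explicitly as a densest-ball call (using \Cref{lem:tail-gaussian} to check the $2/3$-in-a-ball precondition) and invokes basic composition for privacy, whereas you invoke the same black box as a ``coarse mean estimator'' and use the tighter parallel-composition argument; both are valid.
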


\subsection{Known Covariance}
With the reduction in \Cref{lem:bounded-mean-to-unbounded-mean}, we may assume that $\|\Sigma^{-1/2}\mu\| \leq R$. When  the covariance is known, we give an algorithm with the following guarantees.

\begin{theorem} \label{thm:gaussian-known-cov}
Assuming $\Sigma$ is known and $\|\Sigma^{-1/2}\mu\| \leq R$, there is an $\alpha$-accurate $(\eps, \delta)$-DP sampler for Gaussian distributions with sample complexity
%\begin{align*}
$
O\left(\left(R + \sqrt{d + \log\left(\frac{\log(1/\delta)}{\alpha\eps}\right)}\right)\frac{\sqrt{\log(1/\delta)}}{\eps}\right).
$
%\end{align*}
\end{theorem}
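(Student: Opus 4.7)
The plan is to instantiate \Cref{alg:known-cov-gaussian} (SphericalGaussianSampler). I would first apply the whitening transform $X \mapsto \Sigma^{-1/2} X$ to each input, reducing to the case where the target distribution is $\cN(\mu, I)$ with $\|\mu\| \le R$, and post-process by multiplying the output by $\Sigma^{1/2}$; both maps are deterministic and thus preserve both the DP and the TV accuracy guarantees. I would then choose parameters
\[
B \;=\; R \;+\; c\sqrt{d + \log(n/\alpha)}, \qquad \sigma^2 \;=\; 1 - 1/n,
\]
where $c$ is the constant from \Cref{lem:tail-gaussian} and $n$ is the sample complexity to be determined at the end.

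For privacy, the function $(X_1,\ldots,X_n)\mapsto \frac{1}{n}\sum_i X_i^{\trun}$ has $\ell_2$-sensitivity at most $2B/n$ under the substitution notion of neighbors, because swapping a single sample alters one truncated term by a vector of norm at most $2B$. The standard Gaussian mechanism analysis then gives $(\eps,\delta)$-DP as soon as $\sigma \ge \tfrac{2B}{n}\cdot\tfrac{\sqrt{2\log(1.25/\delta)}}{\eps}$; since $\sigma = \Theta(1)$, this translates to the constraint $n \ge C\cdot B\cdot \sqrt{\log(1/\delta)}/\eps$ for a universal constant $C$.

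For accuracy, I would couple the algorithm's actual output $Z + \frac{1}{n}\sum_i X_i^{\trun}$ with the untruncated surrogate $Z + \frac{1}{n}\sum_i X_i$ driven by the same samples and Gaussian noise. By additivity of independent Gaussians, the surrogate is distributed exactly as $\cN(\mu,(\sigma^2+1/n)I) = \cN(\mu,I)$ for the chosen $\sigma^2$, so the TV distance between the algorithm's output and $\cN(\mu,I)$ is bounded by the probability that some $X_i$ is truncated. Using $\|X_i\|\le \|\mu\| + \|X_i-\mu\| \le R + \|X_i-\mu\|$, \Cref{lem:tail-gaussian} applied with $\beta = \alpha/n$, and a union bound, the truncation event has probability at most $\alpha$, giving the required accuracy. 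Combining this with the privacy constraint produces an implicit inequality $n \ge C'\cdot(R+\sqrt{d+\log(n/\alpha)})\cdot\sqrt{\log(1/\delta)}/\eps$; iterating once (since $\log n$ only appears logarithmically on the right) yields $n = O\bigl((R + \sqrt{d + \log(\log(1/\delta)/(\alpha\eps))})\sqrt{\log(1/\delta)}/\eps\bigr)$, matching the stated bound.

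The main subtlety — rather than a true obstacle — is the fortunate algebraic cancellation $\sigma^2 + 1/n = 1$: the Gaussian DP noise exactly compensates for the variance deficit of the empirical mean and absorbs it into the target covariance. It is precisely this cancellation that lets the $\alpha$-dependence collapse to poly-logarithmic, since truncation (the only source of bias) now affects the output only through a low-probability event. Without this observation, the natural analysis would pay a $\mathrm{poly}(1/\alpha)$ factor, as is indeed the case in the bounded-covariance setting discussed in \Cref{sec:overview}.
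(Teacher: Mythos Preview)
Your proposal is correct and follows essentially the same approach as the paper: whiten to reduce to $\Sigma=I$, run \Cref{alg:known-cov-gaussian} with $\sigma^2 = 1-1/n$ so that the untruncated output is exactly $\cN(\mu,I)$, use the Gaussian-mechanism bound for privacy, and bound the accuracy by the truncation probability via \Cref{lem:tail-gaussian} plus a union bound. The only cosmetic difference is that the paper plugs the final value of $n$ directly into $B$ (writing $B = R + \Theta\bigl(\sqrt{d+\log(\log(1/\delta)/(\alpha\eps))}\bigr)$ from the outset), whereas you leave $B$ in terms of $n$ and resolve the implicit inequality at the end; both routes give the same bound.
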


We assume w.l.o.g. that $\Sigma = I$; otherwise, we can consider $X'_i = \Sigma^{-1/2} X_i$. Before we describe the algorithm, note that \Cref{thm:gaussian-known-cov} together with \Cref{lem:bounded-mean-to-unbounded-mean} implies \Cref{thm:gaussian-known-cov-simplified}.

As stated earlier, the algorithm (\Cref{alg:known-cov-gaussian}) is  simple: take the average of the truncated input samples and add to it (spherical) Gaussian noise.

\begin{proof}%[Proof of Theorem~\ref{thm:gaussian-known-cov}]
Let $C \ge 1$ be the constant from \Cref{lem:tail-gaussian}, $B = R + 10^4 C \sqrt{d + \log\left(\frac{2 \log(2/\delta)}{\alpha\eps}\right)}$, and $n = 1 + \lceil 10 B \sqrt{\log(2/\delta)}/\eps\rceil$.
Let $\A$ be Algorithm~\ref{alg:known-cov-gaussian} with $B, n$ as specified and $\sigma = \sqrt{(n - 1)/n}$. 

\textbf{Privacy Analysis.} $\A$ is the Gaussian mechanism with noise multiplier $n \sigma / B \geq 10 \sqrt{\log(2/\delta)}/\eps$; therefore, $\A$ is $(\eps, \delta)$-DP, using~\citet[][Appendix A]{DworkR14}.

\textbf{Accuracy Analysis.}
Let $D = \cN(\mu, I)$ for some unknown $\mu$. Consider the algorithm $\A'$ where there is no truncation, i.e., $\A'$ simply outputs $Y := Z +  \frac{1}{n} \sum_{i \in [n]} X_i$. Via \Cref{lem:tail-gaussian} and a union bound, the truncation is not applied anyway in $\A$ (i.e., $X_i = X^{\trun}_i, \forall i \in [n]$) with probability at least $1 - \alpha$. Therefore, $d_{\TV}(Q_{\A, D}, Q_{\A', D}) \leq \alpha$.
Note that $\A'$ just outputs $Y := Z +  \frac{1}{n} \sum_{i \in [n]} X_i$, so we have $Y \sim \cN(\mu, I) = D$, i.e., $Q_{\A', D} = D$. 
Combining these bounds yields $d_{\TV}(Q_{\A, D}, D) \leq \alpha$. 
\end{proof}

\subsection{Unknown Bounded Covariance}
\label{sec:bounded-cov-gaussian}

We now move on to the case where both $\mu, \Sigma$ are unknown but under the assumption $I \preceq \Sigma \preceq \kappa \cdot I$. The main result of this section is stated below\footnote{Note that the dependence on $\kappa$ can be reduced to polylog$(1/\kappa)$ by applying the private preconditioning in~\citet{KamathSU19}, although this will increase the dependence on $d$ to $d^{1.5}$.}. Again, note that \Cref{thm:bounded-cov-gaussian-improved} and \Cref{lem:bounded-mean-to-unbounded-mean} immediately yield \Cref{thm:gaussian-bounded-cov-simplified}.

\begin{theorem} \label{thm:bounded-cov-gaussian-improved}
Assuming $I \preceq \Sigma \preceq \kappa \cdot I$ for some $\kappa > 0$ and $\|\mu\| \leq R$, there is an $\alpha$-accurate $(\eps, \delta)$-DP sampler for Gaussian distributions with sample complexity
%\begin{align*}
$
O\left(\left(R^2 + \kappa^2 \left(d + \log\left(\frac{\log(1/\delta)}{\alpha\eps}\right)\right)\right) \cdot \frac{\log(1/\delta)}{\eps}\right).
$
%\end{align*}
\end{theorem}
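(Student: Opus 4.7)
I will analyze \Cref{alg:bounded-cov-gaussian-improved} with parameters $B = \Theta\!\bigl(R + \kappa\sqrt{d + \log(\log(1/\delta)/(\alpha\eps))}\bigr)$, sensitivity $\Delta = O(B^2)$, $n_1 = \Theta(R^2 \log(1/\delta)/\eps)$, and $n_2 = \Theta(\kappa^2(d + \log(1/\delta)) \log(1/\delta)/\eps)$. I will decompose the algorithm into a testing stage $\cM_1$ (which returns $\perp$ when the noisy eigenvalue test fails, and a dummy certificate otherwise) and a release stage $\cM_2$ (the final output), then apply \Cref{lem:composition}.

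\textbf{Privacy of $\cM_1$.} Each truncated sample has $\|X^{\trun}_i\|_2 \leq B$, so $\|U_i\|_2 \leq B\sqrt{2}$, and swapping one input changes at most one $U_i$. By Weyl's inequality the map $\bX \mapsto \lambda_{\min}(\sum_i U_i U_i^T)$ has sensitivity $O(B^2) = \Delta$, so \Cref{lem:stlap-dp} yields $(\eps/2,\delta/2)$-DP. Let the predicate $\Psi$ be ``$\lambda_{\min}(\sum_i U_iU_i^T) \geq 0.5 n_2$''. Because $\STLap(\eps/2,\delta/2,\Delta)$ is supported on an interval of length $O(\Delta \log(1/\delta)/\eps) \leq 0.25 n_2$ for our choice of $n_2$, every dataset that survives the test satisfies $\Psi$, so \Cref{lem:composition} applies once we establish DP of $\cM_2$ under $\Psi$.

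\textbf{Privacy of $\cM_2$ under $\Psi$.} Write $M = (U_1\mid\cdots\mid U_{n_2}) \in \bR^{d\times n_2}$, $\hat\mu = \frac{1}{n_1}\sum_{i\leq n_1} X^{\trun}_i$, and $c_1 = \sqrt{1-1/n_1}$; the output is $\hat\mu + c_1 Ma$ with $a \sim \US_{n_2}$. By rotational invariance of $\US_{n_2}$, $Ma \stackrel{d}{=} (MM^T)^{1/2} W$ for $W \sim \US_{n_2,d}$, so by \eqref{eq:proj-uni-sphere} the density of the output at $y$ is
\[
f(y) \;\propto\; \det\!\bigl((MM^T)^{-1/2}\bigr) \cdot \bigl(1 - c_1^{-2}(y-\hat\mu)^T (MM^T)^{-1} (y-\hat\mu)\bigr)^{(n_2-d)/2 - 1}_+.
\]
A neighbor $\bX'$ either (a) perturbs $\hat\mu$ by at most $2B/n_1$ (if the swap lies in the first $n_1$ samples) or (b) replaces a single $U_i$, yielding a rank-$2$ perturbation of $MM^T$ of spectral norm $O(B^2)$. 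Under $\Psi$ we have $\lambda_{\min}(MM^T) \geq 0.5n_2$, so $(MM^T)^{-1}$ changes multiplicatively by $1 + O(B^2/n_2)$ and $\det((MM^T)^{-1/2})$ by a factor $e^{O(dB^2/n_2)}$, while the exponent factor changes by $e^{O((n_2-d) B^2/n_2)}$; an event of $y$ near the boundary of $\supp \US_{n_2,d}$ (where the base of the exponent becomes small) contributes at most $\delta/2$ to a hockey-stick bound. Plugging in the chosen $n_2$ and applying \Cref{lem:hs-ineq} yields $(\eps/2,\delta/2)$-DP of $\cM_2$ under $\Psi$. Combining with $\cM_1$ via \Cref{lem:composition} proves $(\eps,\delta)$-DP.

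\textbf{Accuracy.} I union-bound three events. (i) By \Cref{lem:tail-gaussian} and the choice of $B$, no truncation fires with probability $\geq 1-\alpha/3$; conditioned on this, each $U_i \sim \cN(0,\Sigma)$ i.i.d.\ and $\hat\mu \sim \cN(\mu,\Sigma/n_1)$ independently. (ii) By \Cref{lem:concen-cov}, $\lambda_{\min}(\sum_i U_iU_i^T) \geq 0.9 n_2$ with probability $\geq 1-\alpha/3$, and since $|r| \leq 2\shift(\eps/2,\delta/2,\Delta) = O(B^2 \log(1/\delta)/\eps) \ll 0.1 n_2$, the test passes. (iii) Conditioned on (i), $2$-stability of Gaussians and rotational invariance of $\US_{n_2}$ give $\sum_i a[i] U_i \sim \cN(0,\Sigma)$, independent of $\hat\mu$, so $\hat\mu + c_1 \sum_i a[i] U_i \sim \cN(\mu, \tfrac{1}{n_1}\Sigma + (1-\tfrac{1}{n_1})\Sigma) = \cN(\mu,\Sigma)$. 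A union bound gives $d_{\TV}(Q_{\A,D}, \cN(\mu,\Sigma)) \leq \alpha$.

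\textbf{Main obstacle.} The hard part is the privacy analysis of $\cM_2$: the algorithm uses the data as the source of noise, so the output density depends on $MM^T$ through both a determinant and an inverse quadratic form, and both must be controlled simultaneously when a single sample is replaced. The PTR guarantee $\lambda_{\min}(MM^T) = \Omega(n_2)$ is exactly what makes this tractable, but converting it into a clean $(\eps/2,\delta/2)$-DP bound requires careful multiplicative accounting for the rank-$2$ perturbation and a separate treatment of outputs near the boundary of $\supp \US_{n_2,d}$ (where the $\delta$ term is absorbed).
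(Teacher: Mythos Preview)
Your decomposition into $\cM_1$ and $\cM_2$, the $\cM_1$ privacy via \Cref{lem:stlap-dp}, and the accuracy argument all match the paper. The density formula for the output is also correct. The genuine gap is in the privacy of $\cM_2$ in case~(b).

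Your stated bound ``the exponent factor changes by $e^{O((n_2-d)B^2/n_2)}$'' is $e^{O(B^2)}$ once you plug in $n_2\gg d$, and $B^2\ge\Omega(\kappa^2 d)$---this is not $e^{O(\eps)}$. Concretely, with $A=MM^T$, $F=A^{-1/2}EA^{-1/2}$, and $w=A^{-1/2}(y-\hat\mu)/c_1\sim\US_{n_2,d}$, one has $q'-q\approx -w^TFw$, and using only $\|F\|_{\mathrm{op}}=O(B^2/n_2)$ gives $|q'-q|\le q\cdot O(B^2/n_2)$; the log of $(1-q)^{(n_2-d)/2}$ then moves by $O\!\bigl(qB^2/(1-q)\bigr)$, which is $\Omega(dB^2/n_2)$ already on the typical event $q\approx d/n_2$. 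You note that $E$ has rank~$2$ but never use it in the place where it matters: since $F$ has rank~$\le 2$, $|w^TFw|\le\|P_Fw\|^2\|F\|_{\mathrm{op}}$ where $P_F$ projects onto a fixed $2$-dimensional subspace, and $\|P_Fw\|^2\sim\Beta(1,(n_2-2)/2)$, so $\|P_Fw\|^2=O(\log(1/\delta)/n_2)$ with probability $1-O(\delta)$. Only with this step does the exponent change become $O(B^2\log(1/\delta)/n_2)=O(\eps)$. (Your determinant bound $e^{O(dB^2/n_2)}$ has the same flaw in milder form; rank~$2$ gives $e^{O(B^2/n_2)}$.)

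The paper avoids the direct density computation altogether. It couples the two runs on the common coordinate $a[n_2]=a'[n_2]=y$; conditionally, $(a[1],\dots,a[n_2-1])/\sqrt{1-y^2}\sim\US_{n_2-1}$, so case~(b) reduces to case~(a) with shift $\tfrac{y}{\sqrt{1-y^2}}(V'_{n_2}-V_{n_2})$ and noise $\US_W$ built from the remaining $n_2-1$ columns. Since $y^2\sim\Beta(1/2,(n_2-1)/2)$, one has $|y|=O(\sqrt{\log(1/\delta)/n_2})$ except with probability $\delta/4$, making the shift $O(\sqrt{B^2\log(1/\delta)/n_2})$, and \Cref{lem:dp-noise-linear-comb} applies. (This is also why the paper's test threshold leaves slack $\Delta$ above $0.5n_2$: removing one column in the coupling must preserve the eigenvalue condition.) Both routes can be made to work, but the bound you wrote does not, and the coupling is what turns case~(b) from a genuine obstacle into a corollary of case~(a).
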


%The algorithm is fairly similar to that of Johnson-Lindenstrauss, which has been known to be private~\cite{BlockiBDS12}. The main difference is that .

As stated in the overview, the main challenge lies in the privacy analysis. It will be proved in three steps. First, in \Cref{sec:proj-noise-dp}, we will show that if we add noise that is drawn from projection of random unit vector to the first $M$ coordinates to a low-($\ell_2$-)sensitivity function, then it is DP. Then, Section~\ref{sec:slc-noise-dp}, we use this to show that adding noise of the form $\sum_{i \in [n]} a[i] \cdot w_i$ 
(where $a \sim \US_n$) to a low-sensitivity function also suffices for privacy as long as the smallest eigenvalue of $\sum_{i \in [n]} w_i w_i^T$ is sufficiently large. Here $w_1, \dots, w_n$ are assumed to be fixed vectors given beforehand. Note that the noises discussed so far are input \emph{independent}. Finally, in \Cref{sec:bounded-cov-sampler-analysis}, we relate this to the privacy of our algorithm---which is more intricate as these $w_i$'s are now $X_i$'s, i.e., the noise is input dependent. The accuracy analysis, which is much simpler than the privacy analysis, is also given in \Cref{sec:bounded-cov-sampler-analysis}. 

\subsubsection{Noise via Random Unit Vector Projection} 
\label{sec:proj-noise-dp}

%Recall that we use $\US_{M, N}$ to denote the distribution of the first $N$ coordinates of random unit vector in $\R^M$. 
Our first lemma shows that adding noise drawn from $\US_{M, N}$ to a function with low $\ell_2$-sensitivity ensures DP.

\begin{lemma} \label{thm:dp-unif-sphere-noise}
%Let $\eps \in (0, 1]$, $ \delta \in (0, 1/2)$, and 
Let $M, N$ be any positive integers such that $M \geq \max\left\{2N, \frac{10^4 \log(10/\delta)}{c \eps}\right\}$, where $c$ is the constant from Lemma \ref{lem:beta-tail}. Furthermore, let $\tau \in (0, 0.01)$ be such that $\tau \leq 10\sqrt{\frac{\log(10/\delta)}{c M}}$.
Then, for any vector $v$ such that $\|v\| \leq \tau$, we have $d_\eps(\US_{M, N} ~||~ \US_{M, N} + v) \leq \delta$.
\end{lemma}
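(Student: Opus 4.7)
The plan is to bound the hockey stick divergence via \Cref{lem:hs-ineq}: it suffices to upper bound
\[
\Pr_{z \sim \US_{M,N}}\!\left[f_{\US_{M,N}}(z) > e^\eps f_{\US_{M,N}}(z - v)\right]
\]
by $\delta$. Using the PDF formula in \eqref{eq:proj-uni-sphere}, on the region $\{\|z\|,\|z-v\| < 1\}$ the density ratio equals $(A(z)/B(z))^{E}$ where $A(z) = 1 - \|z\|^2$, $B(z) = 1 - \|z - v\|^2$, and $E = (M-N)/2 - 1$; outside this region the ratio is $0$ or $\infty$, which must be handled separately as a small-probability failure event.

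First I would handle the ``out-of-support'' case by applying \Cref{lem:beta-tail} to $\|z\|^2 \sim \Beta(N/2, (M-N)/2)$, whose mean is $N/M \leq 1/2$ by the assumption $M \geq 2N$. This gives $\|z\|^2 \leq N/M + O(\sqrt{\log(1/\delta)/M})$ with probability at least $1 - \delta/3$; combined with the bound $\|v\| \leq \tau \ll 1$, this implies $\|z-v\|^2 < 1$ on the good event, so only the ``in-support'' ratio needs to be controlled with probability $\geq 1 - 2\delta/3$.

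Second, on the good event I would translate the event ``ratio $> e^\eps$'' into a condition involving $A(z) - B(z) = \|v\|^2 - 2\langle z, v \rangle$. Since we need $A > B$ for the ratio to exceed $1$, using $\log(1+x) \leq x$ the event reduces to $\|v\|^2 - 2\langle z, v \rangle > B(z) \cdot (e^{\eps/E}-1)$, which is of order $\Theta(\eps/M)$ once we argue $B(z) \geq \tfrac{1}{2} - o(1)$ (again via the Beta tail bound on $\|z\|^2$) and use $E \leq M/2$. Then by rotational invariance I may assume $v = \|v\| e_1$, so $\langle z, v \rangle = \|v\| \cdot Z_1$ where $Z_1$ has the law of the first coordinate of $\US_M$ and $Z_1^2 \sim \Beta(1/2,(M-1)/2)$. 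A second application of \Cref{lem:beta-tail} on $Z_1^2$ yields a sharp tail bound on $|\langle z, v \rangle|$, and the hypothesis $M \geq 10^4 \log(10/\delta)/(c\eps)$ is the quantitative condition that makes the resulting tail probability at most $\delta/3$. A union bound then gives the desired $\delta$ overall.

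The main obstacle will be calibrating the three tail bounds (on $\|z\|^2$, on the auxiliary event $\|z-v\| < 1$, and on $\langle z, v\rangle$) so that they simultaneously hold with probability at least $1 - \delta$ while the threshold $B(z)(e^{\eps/E}-1)$ on $\|v\|^2 - 2\langle z,v\rangle$ is not breached. This calibration must extract both hypotheses in tandem: $M \geq 2N$ is needed to keep $A(z), B(z)$ bounded away from $0$, while $M \geq 10^4\log(10/\delta)/(c\eps)$ and the assumed upper bound on $\tau$ are exactly what drives $\|v\| \cdot |Z_1|$ below $\eps/M$ except on a $\delta$-probability event. Once these are aligned, the remaining steps are arithmetic manipulations with the inequalities $\log(1+x) \leq x$ and $1 - e^{-x} \geq x/2$ for small $x$.
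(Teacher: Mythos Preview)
Your proposal is correct and follows essentially the same approach as the paper's proof: reduce via \Cref{lem:hs-ineq} to bounding the probability that the log-density ratio exceeds $\eps$, use the explicit PDF \eqref{eq:proj-uni-sphere} to rewrite this ratio in terms of $\|v\|^2 - 2\langle z,v\rangle$, and then control the two bad events by applying \Cref{lem:beta-tail} to $\|z\|^2 \sim \Beta(N/2,(M-N)/2)$ and to $\langle z,v/\|v\|\rangle^2 \sim \Beta(1/2,(M-1)/2)$. One small simplification: in the paper only two tail events are needed (they use a fixed threshold $\|z\|\leq 0.9$, which together with $\|v\|\leq \tau < 0.01$ already forces $\|z-v\|<1$), so the ``auxiliary event'' you list as a third tail bound is automatically absorbed into the first.
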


The proof of this lemma proceeds similarly to those of independent noise addition (e.g., the Gaussian mechanism): using \Cref{lem:hs-ineq}, it suffices to bound the probability that the privacy loss is large. We achieve such a bound via concentration properties of the beta distribution.

\subsubsection{Noise via Spherical Linear Combinations}
\label{sec:slc-noise-dp}

Next, we relate the previous lemma to  privacy in the case where the noise added is of the form $\sum_{i \in [q]} a[i] \cdot w_i$, where $w_1, \dots, w_q$ are fixed beforehand and $a \sim \US_q$. The guarantee of such a noise is stated below. The proof is based on a reduction argument to \Cref{thm:dp-unif-sphere-noise}.
For notational convenience, for every matrix $W \in \bR^{d \times q}$, let $\US_W$ be the distribution of $W a$ where $a \sim \US_q$. 

\begin{lemma} \label{lem:dp-noise-linear-comb}
Let $q, d \in \N$ and $\tau_0 \in (0, 1)$ be such that $q \geq \max\left\{2d, \frac{10^4 \log(10/\delta)}{c \eps}\right\}$ and $\tau_0 \leq 5\sqrt{\log(10/\delta)/c}$.
For any $w_1, \dots, w_q \in \bR^d$ such that $\frac{1}{q}\left(\sum_{i=1}^q w_i w_i^T\right) \succeq 0.5 I$
and any $v \in \bR^d$ such that $\|v\| \leq \tau_0$, we have
%\begin{align*}
$
d_\eps(\US_W ~||~ \US_W + v) \leq \delta,
$
%\end{align*}
where $w_1, \dots, w_q$ are the columns of $W \in \bR^{d \times q}$.
\end{lemma}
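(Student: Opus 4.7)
The plan is to reduce this lemma to \Cref{thm:dp-unif-sphere-noise} via a change of basis that uses the singular value decomposition (SVD) of $W$. The main leverage is that the hypothesis $\frac{1}{q} W W^T \succeq 0.5 \cdot I$ gives a uniform lower bound on the singular values of $W$, which lets us bound how much the perturbation can ``blow up'' after we invert the diagonal part of $W$.

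Concretely, I would first write the SVD $W = U \Sigma V^T$, where $U \in \bR^{d \times d}$ is orthogonal, $\Sigma = [S \ \ 0] \in \bR^{d \times q}$ with $S = \mathrm{diag}(s_1,\dots,s_d)$ containing the singular values, and $V \in \bR^{q \times q}$ is orthogonal. The covariance hypothesis gives $S S^T \succeq 0.5 q \cdot I$, hence $s_i \geq \sqrt{0.5 q}$ for all $i$. Using the rotation-invariance of $\US_q$, we have $V^T a \sim \US_q$, so in distribution $W a = U S \, \Pi_{\leq d}(a)$, and by the definition of $\US_{q,d}$ we have $\Pi_{\leq d}(a) \sim \US_{q,d}$.

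Next, I invoke \Cref{lem:post-processing} twice, each time with a bijection and its inverse, to obtain two equalities of hockey-stick divergences:
\[
d_\eps(\US_W \,\|\, \US_W + v) \;=\; d_\eps(S a' \,\|\, S a' + U^T v) \;=\; d_\eps(a' \,\|\, a' + S^{-1} U^T v),
\]
where $a' \sim \US_{q,d}$. Since $U$ is orthogonal and $s_i \geq \sqrt{0.5 q}$, the effective perturbation satisfies
\[
\|S^{-1} U^T v\| \;\leq\; \frac{\|v\|}{\min_i s_i} \;\leq\; \tau_0 \sqrt{2/q}.
\]
Now I apply \Cref{thm:dp-unif-sphere-noise} with $M = q$, $N = d$, and perturbation norm bound $\tau := \tau_0 \sqrt{2/q}$. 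The conditions $M \geq 2N$ and $M \geq 10^4 \log(10/\delta)/(c\eps)$ are exactly the hypotheses on $q$. The remaining condition $\tau \leq 10\sqrt{\log(10/\delta)/(cM)}$ reduces to $\tau_0 \leq (10/\sqrt{2}) \sqrt{\log(10/\delta)/c}$, which follows from the assumption $\tau_0 \leq 5\sqrt{\log(10/\delta)/c}$.

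The main conceptual step — and the only place one has to be careful — is justifying the two invocations of post-processing as producing equalities rather than one-sided inequalities, which requires noting that orthogonal maps and $S^{-1}$ are genuine bijections on $\bR^d$ so that \Cref{lem:post-processing} applied in both directions yields equality. After that, the reduction to \Cref{thm:dp-unif-sphere-noise} is essentially routine; no delicate concentration or combinatorial argument is needed beyond what is already encapsulated in that lemma.
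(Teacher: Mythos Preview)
Your proof is correct and follows essentially the same approach as the paper: both reduce to \Cref{thm:dp-unif-sphere-noise} by factoring $W$ through a $d$-dimensional orthogonal projection of $\US_q$ and bounding the effective shift by $\tau_0/\sqrt{0.5q}$. You phrase the factorization via the SVD $W = U\Sigma V^T$, whereas the paper uses the pseudoinverse $W^+$ together with an orthonormal basis of $\im(W^+)$; the resulting bound on the shift and the invocation of \Cref{thm:dp-unif-sphere-noise} with $M=q$, $N=d$ are identical.
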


\begin{proof}
Let $W^+ \in \bR^{d \times q}$ denote the pseudoinverse of $W$. Note that $WW^T = \sum_{i=1}^q w_i w_i^T \succeq 0.5 q I$. Therefore, all singular values of $W$ are at least $\sqrt{0.5 q}$. This means that all singular values of $W^+$ are at most $1/\sqrt{0.5 q}$. This in turn implies that $\|W^+v\| \leq \|v\| / \sqrt{0.5 q} \leq \tau_0 / \sqrt{0.5 q} =: \tau$.

%Recall that we may write $\bR^q$ as $\ker(W^+) \oplus \im(W^+)$. 
Since $W^T W \succeq 0.5 q I$, $W^T$ is full rank and thus $\dim(\im(W^+)) = d$. %Let $\Pi: \bR^d \to \bR^{\dim(\im(U))}$ denote the projection of $\bR^d$ to $\im(U)$ (via any orthogonal basis). 
Since $d < q$, pick any orthonormal basis $b_1, \dots, b_d \in \bR^q$ for $\im(W^+)$ and let $\Pi: \bR^q \to \bR^d$ be such that $\Pi(z) = (\left<z, b_1\right>, \dots, \left<z, b_d\right>)$.
%Since $w \ne 0$, we have $v \in \im(U)$. 

Furthermore, let $h: \bR^d \to \bR^d$ be defined as $h(x) := \sum_{i \in [d]} x[i] \cdot b_i$. Observe that, for any $a \in \bR^q$, we have $a_1u_1 + \cdots + a_qu_q = h(\Pi(a))$ and $a_1u_1 + \cdots + a_qu_q + v = h(\Pi(a) + \Pi(W^+v))$. Thus,
\begin{align*}
d_\eps(\US_W ~||~ \US_W + v) &= d_\eps(h(\Pi(\US_q)) ~||~ h(\Pi(\US_q) + \Pi(W^+v))) \\
& \overset{\text{\Cref{lem:post-processing}}}{\leq} d_\eps(\Pi(\US_q) ~||~ \Pi(\US_q) + \Pi(W^+v)).
\end{align*}
Note that $\Pi(\US_q) \sim \US_{q, d}$. So, Theorem \ref{thm:dp-unif-sphere-noise} implies $d_\eps(\Pi(\US_q) ~||~ \Pi(\US_q) + \Pi(W^+v))\leq \delta$. 
%This completes our proof.
%Next, let $b_1, \dots, b_{\lfloor q/2 \rfloor - 1}$ denote any $\lfloor q/2 \rfloor$ linearly independent vectors\footnote{This exists because $\dim(\ker(U)) \geq q - d \geq \lfloor q/2 \rfloor$} in $\ker(U)$. Then, let $e_1, \dots, e_{\lfloor q/2 \rfloor}$ be any basis for\footnote{Note that $v \notin \ker(U)$ since $Uv = w \ne 0$.} $\span(b_1, \dots, b_{\lfloor q/2 \rfloor - 1}, v)$, 
\end{proof}

\section{Discussion and Open Questions}

In this work, we give several algorithms for DP sampling. For Gaussian distributions, we demonstrate that the dependence on the accuracy parameter $\alpha$ in the sample complexity can be only polylogarithmic in comparison to the polynomial dependence necessary in DP learning. Furthermore, we also reduce the dependence on $d$ in the case of known variance and unknown bounded variance; in both cases, we show this dependence (of $\sqrt{d}$ and $d$ respectively) to be essentially tight. An immediate open question is whether the dependence of $d^2$ in the case of unknown unbounded variance is tight. Another---more general---direction is to extend the results to other family of distributions. In the case of DP learning, this has been explored by several works.  Our algorithms very specifically use the property of Gaussian distributions (namely that a linear combination of Gaussians is another Gaussian); other distribution families might need different techniques.  

\newpage

\balance
\bibliographystyle{plain}
\bibliography{paper}

\begin{thebibliography}{10}

\bibitem{AbadiCGMMT016}
Mart{\'{\i}}n Abadi, Andy Chu, Ian~J. Goodfellow, H.~Brendan McMahan, Ilya
  Mironov, Kunal Talwar, and Li~Zhang.
\newblock Deep learning with differential privacy.
\newblock In {\em CCS}, pages 308--318, 2016.

\bibitem{AAK21}
Ishaq Aden{-}Ali, Hassan Ashtiani, and Gautam Kamath.
\newblock On the sample complexity of privately learning unbounded
  high-dimensional {G}aussians.
\newblock In {\em ALT}, pages 185--216, 2021.

\bibitem{AshtianiBM18}
Hassan Ashtiani, Shai Ben{-}David, and Abbas Mehrabian.
\newblock Sample-efficient learning of mixtures.
\newblock In {\em AAAI}, pages 2679--2686, 2018.

\bibitem{AshtianiL22}
Hassan Ashtiani and Christopher Liaw.
\newblock Private and polynomial time algorithms for learning {G}aussians and
  beyond.
\newblock In {\em COLT}, pages 1075--1076, 2022.

\bibitem{axelrod2020sample}
Brian Axelrod, Shivam Garg, Vatsal Sharan, and Gregory Valiant.
\newblock Sample amplification: Increasing dataset size even when learning is
  impossible.
\newblock In {\em ICML}, pages 442--451, 2020.

\bibitem{BalleBG18}
Borja Balle, Gilles Barthe, and Marco Gaboardi.
\newblock Privacy amplification by subsampling: Tight analyses via couplings
  and divergences.
\newblock In {\em NeurIPS}, pages 6280--6290, 2018.

\bibitem{BiswasD0U20}
Sourav Biswas, Yihe Dong, Gautam Kamath, and Jonathan~R. Ullman.
\newblock Coinpress: Practical private mean and covariance estimation.
\newblock In {\em NeurIPS}, 2020.

\bibitem{BrownGSUZ21}
Gavin Brown, Marco Gaboardi, Adam~D. Smith, Jonathan~R. Ullman, and Lydia
  Zakynthinou.
\newblock Covariance-aware private mean estimation without private covariance
  estimation.
\newblock In {\em NeurIPS}, pages 7950--7964, 2021.

\bibitem{BunKSW21}
Mark Bun, Gautam Kamath, Thomas Steinke, and Zhiwei~Steven Wu.
\newblock Private hypothesis selection.
\newblock {\em {IEEE} Trans. Inf. Theory}, 67(3):1981--2000, 2021.

\bibitem{CWZ19}
T.~Tony Cai, Yichen Wang, and Linjun Zhang.
\newblock The cost of privacy: Optimal rates of convergence for parameter
  estimation with differential privacy.
\newblock {\em Ann. Stat.}, 49(5):2825--2850, 2021.

\bibitem{DesfontainesVGM22}
Damien Desfontaines, James Voss, Bryant Gipson, and Chinmoy Mandayam.
\newblock Differentially private partition selection.
\newblock {\em PETS}, 2022(1):339--352, 2022.

\bibitem{gaussian-tv}
Luc Devroye, Abbas Mehrabian, and Tommy Reddad.
\newblock The total variation distance between high-dimensional {G}aussians
  with the same mean.
\newblock {\em arXiv:1810.08693}, 2018.

\bibitem{DiakonikolasKKL19}
Ilias Diakonikolas, Gautam Kamath, Daniel Kane, Jerry Li, Ankur Moitra, and
  Alistair Stewart.
\newblock Robust estimators in high-dimensions without the computational
  intractability.
\newblock {\em {SIAM} J. Comput.}, 48(2):742--864, 2019.

\bibitem{DLY22}
Wei Dong, Yuting Liang, and Ke~Yi.
\newblock Differentially private covariance revisited.
\newblock In {\em NeurIPS}, 2022.

\bibitem{dwork2006our}
Cynthia Dwork, Krishnaram Kenthapadi, Frank McSherry, Ilya Mironov, and Moni
  Naor.
\newblock Our data, ourselves: Privacy via distributed noise generation.
\newblock In {\em EUROCRYPT}, pages 486--503, 2006.

\bibitem{DworkL09}
Cynthia Dwork and Jing Lei.
\newblock Differential privacy and robust statistics.
\newblock In {\em STOC}, pages 371--380, 2009.

\bibitem{dwork2006calibrating}
Cynthia Dwork, Frank McSherry, Kobbi Nissim, and Adam Smith.
\newblock Calibrating noise to sensitivity in private data analysis.
\newblock In {\em TCC}, pages 265--284, 2006.

\bibitem{DworkNV12}
Cynthia Dwork, Moni Naor, and Salil~P. Vadhan.
\newblock The privacy of the analyst and the power of the state.
\newblock In {\em FOCS}, pages 400--409, 2012.

\bibitem{DworkR14}
Cynthia Dwork and Aaron Roth.
\newblock The algorithmic foundations of differential privacy.
\newblock {\em Found. Trends Theor. Comput. Sci.}, 9(3-4):211--407, 2014.

\bibitem{DTTZ14}
Cynthia Dwork, Kunal Talwar, Abhradeep Thakurta, and Li~Zhang.
\newblock Analyze gauss: Optimal bounds for privacy-preserving principal
  component analysis.
\newblock In {\em STOC}, pages 11--20, 2014.

\bibitem{tracks-pets}
Badih Ghazi, Neel Kamal, Ravi Kumar, Pasin Manurangsi, and Annika Zhang.
\newblock Private aggregation of trajectories.
\newblock {\em PETS}, 2022(4):626--644, 2022.

\bibitem{GhaziM20}
Badih Ghazi, Ravi Kumar, and Pasin Manurangsi.
\newblock Differentially private clustering: Tight approximation ratios.
\newblock In {\em NeurIPS}, 2020.

\bibitem{unit-tests}
Paul Hamill.
\newblock {\em Unit test frameworks: tools for high-quality software
  development}.
\newblock " O'Reilly Media, Inc.", 2004.

\bibitem{HopkinsKM22}
Samuel~B. Hopkins, Gautam Kamath, and Mahbod Majid.
\newblock Efficient mean estimation with pure differential privacy via a
  sum-of-squares exponential mechanism.
\newblock In {\em STOC}, pages 1406--1417, 2022.

\bibitem{KamathSU19}
Gautam Kamath, Jerry Li, Vikrant Singhal, and Jonathan~R. Ullman.
\newblock Privately learning high-dimensional distributions.
\newblock In {\em COLT}, pages 1853--1902, 2019.

\bibitem{KMS22}
Gautam Kamath, Argyris Mouzakis, and Vikrant Singhal.
\newblock New lower bounds for private estimation and a generalized
  fingerprinting lemma.
\newblock In {\em NeurIPS}, 2022.

\bibitem{KamathMSSU22}
Gautam Kamath, Argyris Mouzakis, Vikrant Singhal, Thomas Steinke, and
  Jonathan~R. Ullman.
\newblock A private and computationally-efficient estimator for unbounded
  {G}aussians.
\newblock In {\em COLT}, pages 544--572, 2022.

\bibitem{KarwaV18}
Vishesh Karwa and Salil~P. Vadhan.
\newblock Finite sample differentially private confidence intervals.
\newblock In {\em ITCS}, pages 44:1--44:9, 2018.

\bibitem{KJH20}
Antti Koskela, Joonas Jälkö, and Antti Honkela.
\newblock Computing tight differential privacy guarantees using {FFT}.
\newblock In {\em AISTATS}, pages 2560--2569, 2020.

\bibitem{KothariMV22}
Pravesh Kothari, Pasin Manurangsi, and Ameya Velingker.
\newblock Private robust estimation by stabilizing convex relaxations.
\newblock In {\em COLT}, pages 723--777, 2022.

\bibitem{NarayananME22}
Shyam Narayanan, Vahab~S. Mirrokni, and Hossein Esfandiari.
\newblock Tight and robust private mean estimation with few users.
\newblock In {\em ICML}, pages 16383--16412, 2022.

\bibitem{Ranosova2021}
Hedvika Ranosova.
\newblock Spherically {S}ymmetric {M}easures.
\newblock Bachelor's thesis, Department of Probability and Mathematical
  Statistics, Charles University, 2021.

\bibitem{raskhodnikova2021differentially}
Sofya Raskhodnikova, Satchit Sivakumar, Adam Smith, and Marika Swanberg.
\newblock Differentially private sampling from distributions.
\newblock In {\em NeurIPS}, pages 28983--28994, 2021.

\bibitem{She17}
Or~Sheffet.
\newblock Differentially private ordinary least squares.
\newblock In {\em ICML}, pages 3105--3114, 2017.

\bibitem{Singhal23}
Vikrant Singhal.
\newblock A polynomial time, pure differentially private estimator for binary
  product distributions.
\newblock {\em CoRR}, abs/2304.06787, 2023.

\bibitem{Smith11}
Adam Smith.
\newblock Privacy-preserving statistical estimation with optimal convergence
  rates.
\newblock In {\em STOC}, pages 813--822, 2011.

\bibitem{SMM19}
David~M. Sommer, Sebastian Meiser, and Esfandiar Mohammadi.
\newblock Privacy loss classes: The central limit theorem in differential
  privacy.
\newblock {\em PETS}, 2019(2):245--269, 2019.

\bibitem{TsfadiaCKMS22}
Eliad Tsfadia, Edith Cohen, Haim Kaplan, Yishay Mansour, and Uri Stemmer.
\newblock Friendlycore: Practical differentially private aggregation.
\newblock In {\em ICML}, pages 21828--21863, 2022.

\bibitem{stable-distribution}
Vladimir~V. Uchaikin and Vladimir~M. Zolotarev.
\newblock {\em Chance and Stability}.
\newblock De Gruyter, Berlin, Boston, 1999.

\bibitem{vershynin2018}
Roman Vershynin.
\newblock {\em High-Dimensional Probability: An Introduction with Applications
  in Data Science}.
\newblock Cambridge University Press, 2018.

\bibitem{beta-tail}
Anru~R. Zhang and Yuchen Zhou.
\newblock On the non-asymptotic and sharp lower tail bounds of random
  variables.
\newblock {\em Stat}, 9(1):e314, 2020.

\end{thebibliography}

\newpage
\onecolumn
\appendix

\section{Additional Preliminaries}

We consider a generalization of the truncation \eqref{eq:trunc} where there is a weight vector $w \in \bR_{\geq 0}^d$, and the truncation is w.r.t. $w$:
\begin{align*}
\trun^p_{B, w}(X) :=
\begin{cases}
X &\text{ if } \|X \circ w\|_p \leq B, \\
X \cdot B / \|X \circ w\|_p &\text{ if } \|X \circ w\|_p > B,
\end{cases}
\end{align*}
where $\circ$ denotes the element-wise product.  Clearly $\trun^p_B = \trun^p_{B, w}$ when $w$ is the all-ones vector.

For $a, b \in \bR$ where $a \leq b$, we also use $\clip_{(a, b)}: \bR \to \bR$ to denote the function:
\begin{align*}
\clip_{(a, b)}(x) = \max\{a, \min\{b, x\}\}.
\end{align*}

\subsection{Total Variation Distance between Gaussians}

For our proofs, it will be useful to state the total variation distance bounds for Gaussian distributions. 
 We will use the following notation:
 
\begin{definition}
For any matrix $A \in \bR^{d \times d}$ and positive semi-definite matrix $\Sigma \in \bR^{d \times d}$, we write $\|A\|_{\Sigma} := \|\Sigma^{-1/2} A \Sigma^{-1/2}\|_F$, where $\|\cdot\|_F$ denotes the Frobenius norm.
\end{definition}

It is well known that when $\|\Sigma - \hSigma\|_{\Sigma} \leq O(1)$, the total variation distance between Gaussian with covariance matrices $\Sigma$ and $\hSigma$ and the same mean is $\Theta(\|\Sigma - \hSigma\|_{\Sigma})$, as stated more formally below. 

\begin{lemma}[\citet{gaussian-tv}] \label{thm:close-tv-close-sigma}
There exists a constant $C \geq 1$ such that the following holds.
For any $\Sigma, \hSigma$, if $d_{\TV}(\cN(0, \Sigma), \cN(0, \hSigma)) \leq 0.001$, then 
$\|\Sigma - \hSigma\|_{\Sigma} \leq C \cdot d_{\TV}(\cN(0, \Sigma), \cN(0, \hSigma))$.
\end{lemma}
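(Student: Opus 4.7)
The plan is to reduce, via affine and rotational invariance, to the case of comparing $\cN(0, I)$ with $\cN(0, \Lambda)$ for a diagonal $\Lambda$, and then to exhibit a quadratic test statistic that separates the two distributions by a margin linear in $\|\Lambda - I\|_F$.

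\emph{Step 1 (Standardization).} By invariance of total variation under bijective linear maps, applying $x \mapsto \Sigma^{-1/2} x$ gives $d_{\TV}(\cN(0, \Sigma), \cN(0, \hSigma)) = d_{\TV}(\cN(0, I), \cN(0, M))$ where $M := \Sigma^{-1/2} \hSigma \Sigma^{-1/2}$, and by definition $\|\Sigma - \hSigma\|_\Sigma = \|M - I\|_F$. Writing $M = U \Lambda U^T$ and using rotational invariance of $\cN(0, I)$, I further reduce to the diagonal case $M = \Lambda = \mathrm{diag}(\lambda_1, \dots, \lambda_d)$. The goal becomes: assuming $d_{\TV}(\cN(0,I), \cN(0,\Lambda)) \leq 0.001$, show $\|\Lambda - I\|_F \leq C \cdot d_{\TV}(\cN(0,I), \cN(0,\Lambda))$.

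\emph{Step 2 (Bounded eigenvalues and small norm).} Since TV of a product distribution is at least the TV of any of its marginals (post-processing), and $d_{\TV}(\cN(0,1), \cN(0,\lambda))$ is bounded below by an absolute constant for any $\lambda$ outside, say, $[1/2, 2]$, the assumption forces every $\lambda_i$ into such a bounded range. A similar argument using the concentration of $\|X\|_2^2$ under $\cN(0,I)$ versus $\cN(0,\Lambda)$ shows that $\|\Lambda - I\|_F$ itself must be at most some small absolute constant $\tau$; otherwise the two distributions of $\|X\|_2^2$ are nearly disjoint, contradicting the TV hypothesis. It thus suffices to prove the lower bound $d_{\TV}(\cN(0,I), \cN(0,\Lambda)) \gtrsim \|\Lambda - I\|_F$ in the regime where $\lambda_i \in [1/2, 2]$ for all $i$ and $\|\Lambda - I\|_F \leq \tau$.

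\emph{Step 3 (Quadratic test statistic).} I would consider $T(x) := \sum_{i=1}^d (1 - \lambda_i^{-1}) x_i^2$, a weighted sum of independent squared Gaussians. Under $\cN(0, I)$, $\mathbb{E}[T] = \sum_i (1 - \lambda_i^{-1})$ and $\mathrm{Var}[T] = 2\sum_i (1 - \lambda_i^{-1})^2$; under $\cN(0, \Lambda)$, writing $Y_i = \sqrt{\lambda_i} Z_i$ gives $T = \sum_i (\lambda_i - 1) Z_i^2$, so $\mathbb{E}[T] = \sum_i (\lambda_i - 1)$ and $\mathrm{Var}[T] = 2\sum_i (\lambda_i - 1)^2$. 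Using $\lambda + \lambda^{-1} - 2 = (\lambda - 1)^2 / \lambda$, the absolute mean gap equals $\sum_i (\lambda_i - 1)^2 / \lambda_i = \Theta(\|\Lambda - I\|_F^2)$ by Step 2, whereas each standard deviation is $\Theta(\|\Lambda - I\|_F)$. A Berry-Esseen bound on weighted $\chi_1^2$ sums, whose third-moment terms are controlled by the bounded-$\lambda_i$ condition, shows that the law of $T$ under each distribution is approximately Gaussian up to a small additive error. The univariate TV between two Gaussians with mean gap $\Theta(\|\Lambda - I\|_F^2)$ and common standard deviation $\Theta(\|\Lambda - I\|_F)$ is $\Theta(\|\Lambda - I\|_F)$; by the data-processing inequality applied to $T$, this is a lower bound on $d_{\TV}(\cN(0,I), \cN(0,\Lambda))$.

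\emph{Main obstacle.} The delicate point is in Step 3: because the mean gap of $T$ is quadratic in $\|\Lambda - I\|_F$ while the standard deviation is only linear, the separating signal sits well inside the noise, and one must carefully control the Berry-Esseen correction terms so that the approximately Gaussian analysis yields an $\Omega(\|\Lambda - I\|_F)$ lower bound rather than a weaker $\Omega(\|\Lambda - I\|_F^2)$ one. Naively relying on Hellinger-type comparisons such as $H^2 \leq d_{\TV}$ only produces $\|\Lambda - I\|_F = O(\sqrt{d_{\TV}})$, so the quadratic-form test seems essential to obtain the sharp linear dependence stated in the lemma.
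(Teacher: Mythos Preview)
The paper does not prove this lemma at all: it is simply quoted from \cite{gaussian-tv}, so there is no in-paper argument to compare against. Your Steps~1--2 (standardize, diagonalize, force bounded eigenvalues) are correct and are exactly how that reference begins as well. Your choice of $T(x)=\sum_i(1-\lambda_i^{-1})x_i^2$ in Step~3 is also the right one: up to an additive constant it is the log-likelihood ratio, so by Neyman--Pearson the event $\{T>t^\star\}$ \emph{attains} the total variation.

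The genuine gap is the Berry--Esseen step, and it is not just ``delicate''---it fails outright in the regime you need. Writing $a_i=1-\lambda_i^{-1}$, the Berry--Esseen error for the normalized sum $\sum_i a_i(Z_i^2-1)$ is of order $\sum_i|a_i|^3/(\sum_i a_i^2)^{3/2}\le \|a\|_\infty/\|a\|_2$, while the signal you must extract is only $\Theta(\|a\|_2)=\Theta(\|\Lambda-I\|_F)$. When the deviation is concentrated in one coordinate (already the case $d=1$), the error is $\Theta(1)$ and the normal approximation gives nothing; no refinement of the constants rescues this, because the law of $T$ is genuinely far from Gaussian there. A case split ``large $\|a\|_\infty$ $\Rightarrow$ use the 1-d marginal'' helps only when $\|a\|_\infty\gtrsim\|a\|_2$, which still leaves uncovered the range $\|a\|_2^2\ll\|a\|_\infty\ll\|a\|_2$.

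The clean fix is to abandon distributional approximation and lower-bound $d_{\TV}=\tfrac12\,\E_{x\sim\cN(0,I)}\bigl|1-e^{f(x)}\bigr|$ directly, where $f=\log(q/p)=T-\tfrac12\sum_i\log\lambda_i$. Since $|1-e^f|\ge c\min(|f|,1)$, it suffices to lower-bound $\E|f|$. Centering, $f-\E f=\tfrac12\sum_i a_i(Z_i^2-1)=:W/2$, and the H\"older inequality $\E|W|\ge (\E W^2)^{3/2}/(\E W^4)^{1/2}$ together with the elementary fourth-moment bound $\E W^4\le C(\E W^2)^2$ for sums of independent centered $\chi^2_1$ terms gives $\E|W|\ge c\sqrt{\E W^2}=c'\|\Lambda-I\|_F$. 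The small mean shift $|\E f|=O(\|\Lambda-I\|_F^2)$ and the tail $\{|f|>1\}$ are then lower-order corrections. This yields $d_{\TV}\ge c\,\|\Lambda-I\|_F$ without any CLT, and is essentially how \cite{gaussian-tv} proceeds.
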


\begin{lemma}[e.g.,~\citet{DiakonikolasKKL19}] \label{lem:tv-mah}
There exists a constant $C > 0$ such that the following holds.
For any $\Sigma, \hSigma \in \bR^{d \times d}$, we have $d_{\TV}(\cN(0, \Sigma), \cN(0, \hSigma)) \leq C \cdot \|\Sigma - \hSigma\|_{\Sigma}.$
\end{lemma}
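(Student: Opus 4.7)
\textbf{Proof proposal for \Cref{lem:tv-mah}.}

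The plan is a standard two-regime argument: handle a ``large-perturbation'' regime by the trivial bound $d_{\TV}\le 1$, and handle a ``small-perturbation'' regime via Pinsker's inequality and a Taylor expansion of the KL divergence. Throughout, let $A := \Sigma^{-1/2}\hSigma\,\Sigma^{-1/2}$ (assuming, as is necessary for $\cN(0,\hSigma)$ to be defined, that $\hSigma$ is positive semi-definite, so that $A \succeq 0$). By definition, $\|\Sigma-\hSigma\|_{\Sigma} = \|I - A\|_F$, and if $\lambda_1,\dots,\lambda_d \ge 0$ are the eigenvalues of $A$ then $\|I-A\|_F^2 = \sum_i (\lambda_i - 1)^2$.

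First, if $\|I - A\|_F \ge 1/2$, the bound holds trivially with any $C \ge 2$, using $d_{\TV}\le 1$. So I will focus on the regime $\|I - A\|_F \le 1/2$. In that regime, every eigenvalue satisfies $|\lambda_i - 1| \le 1/2$, hence $\lambda_i \in [1/2, 3/2]$. In particular $\hSigma$ is strictly positive definite and the KL divergence is finite.

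Second, compute the KL divergence between the two centered Gaussians: a standard calculation (change of basis to diagonalize $A$) gives
\[
\mathrm{KL}\bigl(\cN(0,\hSigma)\,\|\,\cN(0,\Sigma)\bigr) \;=\; \tfrac{1}{2}\bigl(\mathrm{tr}(\Sigma^{-1}\hSigma) - d - \log\det(\Sigma^{-1}\hSigma)\bigr) \;=\; \tfrac{1}{2}\sum_{i=1}^{d} \bigl(\lambda_i - 1 - \log \lambda_i\bigr).
\]
The function $f(\lambda) := \lambda - 1 - \log \lambda$ is nonnegative on $(0,\infty)$, vanishes at $\lambda=1$, and by Taylor's theorem satisfies $f(\lambda) \le C' (\lambda - 1)^2$ on the interval $[1/2, 3/2]$ for some absolute constant $C'$ (concretely, $f''(\lambda)=1/\lambda^2 \le 4$ there, giving $C' = 2$). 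Summing over $i$,
\[
\mathrm{KL}\bigl(\cN(0,\hSigma)\,\|\,\cN(0,\Sigma)\bigr) \;\le\; \tfrac{C'}{2}\sum_{i=1}^{d}(\lambda_i - 1)^2 \;=\; \tfrac{C'}{2}\,\|I - A\|_F^2 \;=\; \tfrac{C'}{2}\,\|\Sigma-\hSigma\|_{\Sigma}^2.
\]

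Finally, apply Pinsker's inequality $d_{\TV}(P,Q) \le \sqrt{\mathrm{KL}(Q\|P)/2}$ with $P=\cN(0,\Sigma)$ and $Q=\cN(0,\hSigma)$ to conclude $d_{\TV}\bigl(\cN(0,\Sigma),\cN(0,\hSigma)\bigr) \le \sqrt{C'/4}\cdot \|\Sigma-\hSigma\|_{\Sigma}$. Combining with the trivial bound in the other regime yields the claim with $C := \max\{2,\sqrt{C'/4}\}$.

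\textbf{Main obstacle.} The content is essentially routine once framed correctly; the only mild subtlety is keeping track of the two regimes so that the same constant works uniformly for \emph{all} $\Sigma,\hSigma$ (in particular, when $\hSigma$ is nearly singular or very far from $\Sigma$ and the KL divergence blows up or is undefined). Restricting the Taylor bound to the regime where eigenvalues of $A$ stay in a bounded interval away from $0$, and absorbing the remaining case into the trivial inequality $d_{\TV}\le 1$, is what makes the proof clean.
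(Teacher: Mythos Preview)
Your proof is correct. The paper does not give its own proof of this lemma; it simply cites it as a known fact from the literature (e.g., \citet{DiakonikolasKKL19}), so there is nothing to compare against, and your standard Pinsker-plus-Taylor argument (with the trivial $d_{\TV}\le 1$ bound handling the large-perturbation regime) is exactly the kind of proof one would expect.
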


We will also need the following lemma for the case where the means are different. It is again an immediate consequence of more general (and precise) bounds in literature (e.g., \cite{gaussian-tv}).

\begin{lemma}[e.g., \cite{gaussian-tv}] \label{lem:close-tv-to-close-sigma-mu}
For any $\Sigma, \hSigma, \mu, \hmu$, if $d_{\TV}(\cN(\mu, \Sigma), \cN(\hmu, \hSigma)) \leq 0.001$, then we have 
$0.5 I \preceq \Sigma^{1/2} \hSigma^{-1} \Sigma^{1/2} \preceq 2 I$ and $\|\hSigma^{-1/2}(\mu - \hmu)\| \leq 0.5.$
\end{lemma}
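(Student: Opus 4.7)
The plan is a reduction-then-1D-projection argument. First, I would normalize via the invertible affine map $x \mapsto \hSigma^{-1/2}(x - \hmu)$, which preserves TV distance. Writing $v := \hSigma^{-1/2}(\mu - \hmu)$ and $A := \hSigma^{-1/2}\Sigma\hSigma^{-1/2}$, the hypothesis becomes $d_{\TV}(\cN(v, A), \cN(0, I)) \leq 0.001$, and the desired conclusions translate to $\|v\| \leq 0.5$ and $0.5 I \preceq A \preceq 2I$. For the latter, note that $A$ and $\Sigma^{1/2}\hSigma^{-1}\Sigma^{1/2}$ are both similar to $\Sigma\hSigma^{-1}$ (via conjugation by $\hSigma^{1/2}$ and $\Sigma^{-1/2}$, respectively), and both are symmetric, so they share the same real eigenvalues; hence $0.5 I \preceq A \preceq 2I$ is equivalent to $0.5 I \preceq \Sigma^{1/2}\hSigma^{-1}\Sigma^{1/2} \preceq 2I$.

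Second, I would reduce to a 1D problem by projection. For any unit vector $u \in \bR^d$, the map $x \mapsto u^T x$ is a deterministic post-processing, so by \Cref{lem:post-processing} we obtain $d_{\TV}(\cN(u^T v, u^T A u), \cN(0, 1)) \leq 0.001$. If I can establish the 1D companion statement -- that $d_{\TV}(\cN(m, \sigma^2), \cN(0, 1)) \leq 0.001$ implies $\sigma^2 \in [0.5, 2]$ and $|m| \leq 0.5$ -- then the covariance conclusion follows by taking $u$ to range over unit eigenvectors of $A$ (so $u^T A u$ enumerates the eigenvalues of $A$), and the mean conclusion follows by taking $u = v/\|v\|$ (so $u^T v = \|v\|$, and the 1D bound directly caps $\|v\|$).

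Third, I would prove the 1D bound in two stages. For the variance bound: if $\sigma$ is very small, the interval $[m - 3\sigma, m + 3\sigma]$ carries mass $\geq 0.99$ under $\cN(m, \sigma^2)$ but mass $O(\sigma)$ under $\cN(0, 1)$, forcing TV close to $1$; the dual test on $[-3, 3]$ handles $\sigma$ very large, so quantitative versions of these estimates pin $\sigma^2 \in [0.5, 2]$. For the mean bound, once $\sigma^2 \in [0.5, 2]$, a direct lower bound on $\tfrac{1}{2}\int |f_{\cN(m,\sigma^2)}(x) - f_{\cN(0,1)}(x)|\,dx$ shows that $|m| > 0.5$ already forces TV to exceed $0.001$. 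Alternatively, after the variance bound one can apply the triangle inequality $d_{\TV}(\cN(0, A), \cN(0, I)) \leq d_{\TV}(\cN(0, A), \cN(v, A)) + 0.001$ together with \Cref{thm:close-tv-close-sigma} to pin $v$ in Mahalanobis norm, which combined with $0.5 I \preceq A \preceq 2I$ yields $\|v\| \leq 0.5$.

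The principal technical nuisance is making the 1D calculation explicit enough to recover the stated constants $0.5$ and $2$, but there is roughly a factor-of-$500$ slack between the hypothesis $0.001$ and these target thresholds, so crude tail estimates are more than sufficient. Morally, the result is a quantitative form of the standard fact that TV-closeness of two Gaussians forces closeness of their mean and covariance in the natural Mahalanobis and Frobenius norms; the proof above is simply a self-contained route to the specific inequalities required.
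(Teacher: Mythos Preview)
Your argument is correct. The paper does not actually prove this lemma: it is stated as an immediate consequence of the quantitative TV bounds between Gaussians in~\cite{gaussian-tv}, with no further argument given. By contrast, you supply a self-contained elementary route: normalize so the second Gaussian is standard, observe that $\hSigma^{-1/2}\Sigma\hSigma^{-1/2}$ and $\Sigma^{1/2}\hSigma^{-1}\Sigma^{1/2}$ share eigenvalues, and then reduce to one dimension by projecting onto a unit vector (eigenvectors of $A$ for the spectral bounds, $v/\|v\|$ for the mean bound) using the post-processing inequality. The 1D calculations are routine and, as you note, the slack between $0.001$ and the target constants $0.5, 2$ is enormous, so crude tail tests suffice. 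What the literature citation buys is sharper constants and two-sided bounds relating $d_{\TV}$ to the Mahalanobis/Frobenius parameters; what your argument buys is that the reader need not consult an external reference for the (loose) inequalities actually used downstream in \Cref{lem:reduction-unbounded}. One minor remark: your ``alternative'' paragraph invoking \Cref{thm:close-tv-close-sigma} after a triangle inequality is a bit tangled as written and not needed---the direct 1D projection already handles the mean cleanly.
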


\subsection{Differential Privacy}

\subsubsection{Amplification by Subsampling}

Suppose that we have as an input $N$ samples, and we draw $n < N$ subsamples (without replacement) randomly from these $N$ samples and run an $(\eps, \delta)$-DP algorithm on these $n$ subsamples. Then, it is known that this results in an $(\eps', \delta')$-DP algorithm for $\eps' < \eps, \delta' < \delta$. Such a phenomenon is called \emph{amplification by subsampling} and is often used in DP learning~\citep[e.g.,][]{AbadiCGMMT016}. We will use the following version of this for our lower bound proof.

\begin{theorem}[Amplification by Subsampling,~\citet{BalleBG18}] \label{thm:amp-by-sampling}
Suppose that $\A$ is an $(\eps, \delta)$-DP algorithm that takes in $n$ samples. Then, the algorithm that draws $n$ subsamples randomly without replacement  out of $N$ samples and runs $\A$ on the $n$ subsamples is $(\eps', \delta')$-DP where
\begin{align*}
\eps' = \ln(1 + (n/N)(e^\eps - 1)), & &
\delta' = (n/N) \cdot \delta.
\end{align*}
\end{theorem}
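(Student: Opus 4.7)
The plan is to reduce to the standard mixture-amplification lemma at the core of Balle-Barthe-Gaboardi's analysis, via a conditional argument on whether the differing sample is included in the random subsample. Fix neighboring datasets $Y, Y' \in \cY^N$ that differ at a single index $i^*$, let $\pi$ denote the uniformly random $n$-subset of $[N]$, let $Y_\pi$ denote the induced subsample, and set $p := n/N = \Pr[i^* \in \pi]$. Write $\cM$ for the subsampled mechanism that outputs $\A(Y_\pi)$.

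I would then condition on the event $E = \{i^* \in \pi\}$ versus $\neg E$ to decompose the two output distributions as shared-component convex mixtures $\cM(Y) = p\,\mu_1(Y) + (1-p)\,\mu_0$ and $\cM(Y') = p\,\mu_1(Y') + (1-p)\,\mu_0$. Here $\mu_0$ is the conditional distribution on $\neg E$ and is \emph{identical} for the two inputs, because $Y_\pi = Y'_\pi$ whenever $i^* \notin \pi$. The distributions $\mu_1(Y), \mu_1(Y')$ are obtained by averaging $\A(Y_\pi), \A(Y'_\pi)$ over $\pi$ conditional on $E$; since whenever $i^* \in \pi$ the subsamples $Y_\pi, Y'_\pi$ form neighboring datasets of size $n$, the $(\eps,\delta)$-DP guarantee of $\A$ combined with convexity of $(\eps,\delta)$-indistinguishability under averaging yields that $\mu_1(Y)$ and $\mu_1(Y')$ are $(\eps,\delta)$-indistinguishable.

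It remains to extract $\eps' = \ln(1 + p(e^\eps - 1))$ and $\delta' = p\delta$ from this ``shared-component'' mixture structure, and this is the technically substantive step. A naive pointwise analysis of the hockey-stick divergence $d_{\eps'}(\cM(Y) \,\|\, \cM(Y'))$ fails to deliver $\delta' = p\delta$: the ratio $\cM(Y)(y)/\cM(Y')(y)$ can exceed $e^{\eps'}$ on a set of positive measure even when $\mu_1(Y)(y)/\mu_1(Y')(y) \le e^\eps$ pointwise, so one cannot just bound the mixture divergence by $p$ times a pointwise $(\eps,\delta)$-bound on $(\mu_1(Y),\mu_1(Y'))$. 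The resolution, which is the main technical contribution of Balle-Barthe-Gaboardi, is a coupling-based characterization of $(\eps,\delta)$-DP: decompose the $(\eps,\delta)$-indistinguishable pair $(\mu_1(Y), \mu_1(Y'))$ into an $\eps$-pure core of mass $1-\delta$ plus a residual of mass $\delta$, and then construct an explicit coupling between $\cM(Y)$ and $\cM(Y')$ under which the shared $(1-p)\mu_0$ component absorbs the excess privacy loss on the $\eps$-pure core (giving exactly the calibration $e^{\eps'} - 1 = p(e^\eps - 1)$), while the residual $\delta$-mass only contributes through the $p$-probability event $E$ (giving $\delta' = p\delta$).

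With this ingredient in hand, the proof concludes by simply invoking the subsampling-amplification theorem of Balle, Barthe, and Gaboardi on the mixture decomposition established in the second paragraph. The main obstacle, and the reason this theorem is nontrivial, lies entirely in the coupling/divergence machinery of the third paragraph; the rest of the argument is the conditioning-on-$E$ setup, which is routine.
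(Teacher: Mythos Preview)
The paper does not prove this theorem; it is stated as a black-box citation to Balle, Barthe, and Gaboardi and is used only as an ingredient in the lower-bound reduction. Your proposal correctly sketches the argument from that reference (the conditioning on whether the differing index is sampled, the shared-component mixture decomposition, and the coupling-based refinement needed to get $\delta' = p\delta$), so there is nothing to compare against here.
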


\subsection{Advanced Composition}

We will also use the following \emph{advanced composition} of DP.

\begin{theorem}[Advanced Composition,~\citet{DworkNV12}] \label{thm:advanced-composition}
Let $\eps_0 > 0$ and $\delta_0, \delta_1 \in (0, 1/2]$, and $\A$ be any $(\eps_0, \delta_0)$-DP algorithm. Then, the algorithm that runs $\A$ a total of $k$ times is $(\eps_1, k \delta_0 + \delta_1)$-DP where
\begin{align*}
\eps_1 = \left(\sqrt{2 k \ln(1/\delta_1)} + k(e^{\eps_0} - 1)\right)\eps_0.
\end{align*}
\end{theorem}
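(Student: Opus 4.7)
The plan is the standard privacy-loss-random-variable concentration argument. Fix two neighboring datasets $Y, Y'$, let $o = (o_1, \dots, o_k)$ be the sequence of outputs from the $k$-fold (adaptive) composition, and let $P, Q$ be the joint distributions of $o$ when the composed mechanism is run on $Y$ and $Y'$ respectively. Define the privacy loss $L(o) = \ln(P(o)/Q(o))$; by a standard characterization (via \Cref{lem:hs-ineq} and its symmetric counterpart), showing $\Pr_{o \sim P}[L(o) > \eps_1] \leq k\delta_0 + \delta_1$ suffices for $(\eps_1, k\delta_0 + \delta_1)$-DP. I would telescope $L = \sum_{i=1}^k L_i$ with $L_i = \ln(P(o_i \mid o_{<i})/Q(o_i \mid o_{<i}))$.

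First, I would use the standard reduction that any $(\eps_0,\delta_0)$-DP output distribution can, up to a ``bad event'' of probability at most $\delta_0$, be coupled with a distribution that is pointwise $\eps_0$-indistinguishable. Applying this to each conditional mechanism $o_i \mid o_{<i}$ and taking a union bound over the $k$ rounds, one obtains a good event $E$ with $\Pr[E^c] \leq k\delta_0$ on which $|L_i| \leq \eps_0$ for every $i$ simultaneously.

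Second, conditioned on $E$ and on the prefix $o_{<i}$, each $L_i$ is bounded in $[-\eps_0, \eps_0]$ and has conditional expectation at most $\eps_0(e^{\eps_0}-1)$. The expectation bound is the classical KL estimate: for two distributions whose log-density ratio is bounded by $\eps_0$ in absolute value, the KL divergence is at most $\eps_0(e^{\eps_0}-1)$ (which one derives by combining $\ln t \leq t-1$ with $|P-Q| \leq (e^{\eps_0}-1)\min(P,Q)$). Consequently, $L_i - \mathbb{E}[L_i \mid o_{<i}, E]$ forms a bounded martingale-difference sequence with increments of magnitude at most $2\eps_0$.

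Third, I would apply Azuma--Hoeffding to the martingale differences to obtain
\begin{align*}
\Pr\left[L > k\eps_0(e^{\eps_0}-1) + \eps_0\sqrt{2k \ln(1/\delta_1)} \;\big|\; E\right] \leq \delta_1.
\end{align*}
Combining with $\Pr[E^c] \leq k\delta_0$ yields $\Pr_{o \sim P}[L(o) > \eps_1] \leq k\delta_0 + \delta_1$, which by \Cref{lem:hs-ineq} gives the claimed $(\eps_1, k\delta_0 + \delta_1)$-DP guarantee with exactly $\eps_1 = \bigl(\sqrt{2k\ln(1/\delta_1)} + k(e^{\eps_0}-1)\bigr)\eps_0$.

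The main obstacle I anticipate is making step one rigorous in the \emph{adaptive} setting: the coupling that replaces the $(\eps_0,\delta_0)$-DP distribution by a pointwise-$\eps_0$ one must be defined for every possible prefix $o_{<i}$ (since the mechanism in round $i$ may depend on the history), and one must argue that the failure probability at each round is at most $\delta_0$ uniformly over histories so that a union bound gives $k\delta_0$. Once this adaptive coupling is set up correctly, the concentration argument is essentially a direct application of Azuma--Hoeffding.
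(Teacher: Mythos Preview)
The paper does not prove \Cref{thm:advanced-composition}; it is stated as a known result and attributed to \citet{DworkNV12} without proof. Your proposal is the standard privacy-loss-random-variable argument from that original reference (and from the exposition in Dwork--Roth), and the sketch is correct: decompose the privacy loss into per-round contributions, handle the $\delta_0$ slack via the ``bad event'' coupling, bound each conditional mean by $\eps_0(e^{\eps_0}-1)$, and apply Azuma--Hoeffding. Your identification of the adaptive-coupling step as the delicate point is accurate; once that is handled (as in the cited work), the rest is routine.
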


\subsection{Lower Bounds for Mean and Covariance Estimations}

As stated in our proof overview (\Cref{sec:overview}), our lower bounds for Gaussian samplers (\Cref{thm:gaussian-known-cov-dim-lb,thm:lb-sampler-bounded-covariance}) are via reductions from previously known lower bounds for mean and covariance estimations. We state them below, starting with the mean estimation lower bound (aka ``fingerprinting for Gaussian'') result due to~\citet{KamathSU19}

%To prove our result, we need the following ``fingerprinting for Gaussian'' result due to Kamath et al.~\cite{KamathSU19}.

\begin{theorem}[\citet{KamathSU19}] \label{thm:fingerprinting-gaussian}
For any $\eps, \delta \in (0, 1]$ such that $\delta \leq O\left(\frac{\sqrt{d}}{n\sqrt{\log(n/d)}}\right)$, if an $(\eps, \delta)$-DP algorithm $\cM$ that can take in $N$ samples from $D = \cN(\mu, I)$ where\footnote{We write $-1 \leq \mu \leq 1$ as a shorthand for $-1 \leq \mu[j] \leq 1$ for all $j \in [d]$.}  $-1 \leq \mu \leq 1$ and output an estimate $\hmu \in \bR^{d}$ that satisfies
\begin{align*}
\E_{\bX \sim D^N, \hmu \sim \cM(\bX)}[\|\hmu - \hmu\|_2^2] \leq \gamma^2 \leq d/6,
\end{align*}
then $N \geq \Omega\left(\frac{d}{\gamma \eps}\right)$.
\end{theorem}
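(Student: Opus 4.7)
I would prove this lower bound via \emph{Gaussian fingerprinting}, the continuous-data analogue of the Bun--Ullman--Vadhan code framework which has become the standard route to sample-complexity lower bounds under DP. The plan has three parts: (i) set up a Bayesian prior on $\mu$; (ii) derive an accuracy-side lower bound on a per-sample score statistic via Stein-type integration by parts; and (iii) derive a privacy-side upper bound on the same statistic from the $(\eps,\delta)$-DP hypothesis, then combine.

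\textbf{Setup and fingerprinting lemma.} Draw $\mu$ from a product prior $\pi^{\otimes d}$ on $[-1,1]^d$, with the one-dimensional marginal $\pi$ chosen so that integration-by-parts boundary terms in each $\mu_j$-direction are benign (e.g.\ a lightly smoothed uniform density on $[-1,1]$). Given $\mu$, sample $X_1,\dots,X_N\sim\cN(\mu,I)$ and let $\hmu = \cM(X_{1:N})$; we may assume $\hmu\in[-1,1]^d$ after projection, which only improves accuracy. Motivated by the Gaussian score $\partial_{\mu_j}\log f_{\cN(\mu,I)}(X_i) = X_{i,j}-\mu_j$, define the fingerprinting statistic $Z_i^{(j)} := (X_{i,j}-\mu_j)\,\hmu_j$. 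Stein's identity yields $\bE[Z_i^{(j)}\mid\mu]=\bE[\partial_{X_{i,j}}\hmu_j\mid\mu]$, and the chain rule (using $X_{i,j}=\mu_j+\text{(indep.\ noise)}$) gives $\sum_i \bE[Z_i^{(j)}\mid\mu] = \bE[\partial_{\mu_j}\hmu_j\mid\mu]$. Integrating once more in $\mu_j$ against $\pi$ expresses this as expected boundary values of $\hmu_j$, which are forced close to $\pm 1$ by per-coordinate accuracy. The hypothesis $\gamma^2\leq d/6$ ensures that $\Omega(d)$ coordinates lie in the accurate regime, yielding the \emph{Gaussian fingerprinting lemma}
\[
\sum_{i=1}^{N}\sum_{j=1}^{d}\bE\bigl[Z_i^{(j)}\bigr] \;\geq\; \Omega(d).
\]

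\textbf{Privacy upper bound and combining.} For the DP side, couple $(X_{-i},X_i)$ with $(X_{-i},X'_i)$ where $X'_i\sim\cN(\mu,I)$ is an independent fresh copy. By independence, $\bE[(X_{i,j}-\mu_j)\hmu_j(X_{-i},X'_i)] = 0$, so $|\bE[Z_i^{(j)}]|$ is exactly the swap-induced change in the expected score. Applying $(\eps,\delta)$-DP (via \Cref{lem:hs-ineq}) to the bounded functional $\hmu\mapsto\hmu_j\in[-1,1]$, and splitting the unbounded factor $X_{i,j}-\mu_j$ into its typical part (of magnitude $\polylog N$) and its Gaussian tail, produces a per-coordinate bound of the form $|\bE[Z_i^{(j)}]|\lesssim\eps\,\sqrt{\bE[(\hmu_j-\mu_j)^2]} + \delta\cdot\polylog(N)$. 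Matching this against the fingerprinting lemma coordinate-wise, and then aggregating across coordinates via the $\ell_2^2$ error decomposition $\sum_j\bE[(\hmu_j-\mu_j)^2]\leq\gamma^2$, delivers $N \geq \Omega(d/(\gamma\eps))$. The hypothesised bound $\delta \leq O(\sqrt d/(n\sqrt{\log(n/d)}))$ is precisely calibrated so that the $\delta\cdot\polylog(N)$ contributions stay sub-dominant throughout this chain.

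\textbf{Main obstacle.} The technically most delicate step is the DP-side score bound: because the factor $X_{i,j}-\mu_j$ is unbounded, DP cannot be invoked black-box on $Z_i^{(j)}$, and one must truncate and apply DP only to the bounded estimator functional $\hmu_j$, absorbing the Gaussian tail contribution into the $\delta$-budget (this is where the hypothesis on $\delta$ is actually used). A second subtle point is extracting the full $d/(\gamma\eps)$ rate rather than a naive $\sqrt d/(\gamma\eps)$ when combining across coordinates --- this requires performing the DP-vs-fingerprinting comparison coordinate-by-coordinate and only aggregating via $\gamma^2 = \sum_j\gamma_j^2$ at the end, rather than applying Cauchy--Schwarz too early inside the score bound.
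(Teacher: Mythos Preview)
The paper does not prove \Cref{thm:fingerprinting-gaussian}; it is quoted verbatim as a known lower bound from \citet{KamathSU19} and invoked as a black box in the proof of \Cref{thm:gaussian-known-cov-dim-lb}. Your Gaussian-fingerprinting outline is precisely the technique used in that reference, so you are essentially reconstructing the cited proof rather than anything the present paper supplies.

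One point in your sketch does deserve correction. In the final paragraph you advocate bounding $|\bE[Z_i^{(j)}]|\lesssim \eps\sqrt{\bE[(\hmu_j-\mu_j)^2]}$ \emph{coordinate-by-coordinate} and aggregating via $\gamma^2=\sum_j\gamma_j^2$ only at the end. Carried out literally this loses a factor of $\sqrt d$: summing $\sum_j \eps\,\gamma_j$ and then applying Cauchy--Schwarz gives $\eps\sqrt d\,\gamma$, hence only $N\gtrsim \sqrt d/(\gamma\eps)$. The tight rate comes from the \emph{opposite} aggregation order. One works with the full per-sample score $T_i:=\langle X_i-\mu,\hmu-\mu\rangle=\sum_j Z_i^{(j)}$, notes that after swapping $X_i$ for an independent copy the conditional law of $\langle X_i-\mu,\hmu-\mu\rangle$ given $(\hmu,\mu)$ is $\cN(0,\|\hmu-\mu\|_2^2)$, and applies the $(\eps,\delta)$-DP comparison to $T_i$ as a whole (with truncation of the Gaussian factor at the $\polylog N$ scale). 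This yields $|\bE[T_i]|\lesssim \eps\,\gamma + \delta\cdot\polylog N$ directly, with no $\sqrt d$ loss; summing over $i$ and matching against the $\Omega(d)$ from the fingerprinting lemma then gives $N\gtrsim d/(\gamma\eps)$. So your diagnosis that the aggregation order is the subtle step is correct, but the prescription you wrote is inverted.
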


Next is a similar lower bound but for covariance estimation. Notice that the lower bound on the sample complexity below is $\Omega(d^2/\eps^2)$ whereas the lower bound above is only $\Omega(d/\eps)$ (for constant accuracy parameter $\gamma$).

\begin{theorem}[\citet{KMS22}] \label{thm:lb-covariance-est}
There exists a small constant $\xi > 0$ such that the following holds: for any $\eps, \delta \in (0, 1]$ such that $\delta \leq O\left(\min\left\{1/n, d^2/(n \log n)\right\}\right)$, if an $(\eps, \delta)$-DP algorithm $\cM$ that can take in $N$ samples from $D = \cN(0, \Sigma)$ where $I \preceq \Sigma \preceq 2I$ and output an estimate $\hSigma \in \bR^{d \times d}$ that satisfies
\begin{align*}
\bE_{\bX \sim D^N, \hSigma \sim \cM(\bX)}[\|\hSigma - \Sigma\|_{\Sigma}^2] \leq \xi^2,
\end{align*} 
then $N \geq \Omega\left(d^2/\eps\right)$.
\end{theorem}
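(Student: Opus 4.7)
The plan is to apply the fingerprinting-codes lower-bound technique, in the form developed for Gaussian parameter estimation (Bun--Ullman--Vadhan, Kamath--Li--Singhal--Ullman, and in particular the Kamath--Mouzakis--Singhal refinement for covariance). The idea is to plant a random perturbation in the covariance and argue that an accurate $(\eps,\delta)$-DP estimator must recover enough of the perturbation to contradict privacy unless $N$ is large.

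First I would set up a prior on $\Sigma$ of the form $\Sigma = I + \lambda G$, where $G$ is a random symmetric matrix drawn from a rescaled GOE-like distribution and $\lambda > 0$ is chosen so that (i) $I \preceq \Sigma \preceq 2 I$ holds with overwhelming probability and (ii) $\bE \|\Sigma - I\|_\Sigma^2$ is a constant much larger than $\xi^2$, forcing any $\xi^2$-accurate output $\hSigma$ to have nontrivial expected alignment $\langle \hSigma - I, G\rangle$. Next I would introduce the fingerprinting score for each sample $X_i \sim \cN(0,\Sigma)$, namely $Z_i := \langle A, X_i X_i^T - \Sigma\rangle$ with test matrix $A = A(\hSigma) := \Sigma^{-1/2}(\hSigma - I)\Sigma^{-1/2}$. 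A Stein-type identity for Gaussian quadratic forms (or an explicit derivative-of-log-likelihood calculation in $\Sigma$) turns $\bE\bigl[\sum_i Z_i\bigr]$ into the expected alignment between $\hSigma - I$ and $G$, which is $\Omega(d^2)$ under the accuracy hypothesis and the chosen planting scale.

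The DP half of the argument then compares $Z_i$ evaluated on the original dataset with $Z_i$ evaluated after replacing $X_i$ by an independent fresh draw from $\cN(0,\Sigma)$. The replacement expectation vanishes by independence of the fresh sample from $A$ combined with $\bE[X_i X_i^T - \Sigma] = 0$, while $(\eps,\delta)$-DP (together with a high-probability bound $|Z_i| \leq R$ obtained via Hanson--Wright on the Gaussian quadratic form $\langle A, X_i X_i^T\rangle$) controls the shift by $O(\eps R + \delta R)$ per sample. Summing over $i \in [N]$ gives $N(\eps R + \delta R) \gtrsim d^2$; plugging in $R = O(d)$ from the truncation step and using the stated $\delta$-regime to absorb the $\delta R N$ term yields $N = \Omega(d^2/\eps)$.

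The main obstacle is the dependence of the test matrix $A$ on the whole dataset through $\hSigma$, since the fingerprinting identity really demands that $A$ be almost independent of the particular $X_i$ it is paired with. The standard remedy is a leave-one-out coupling wired into the DP guarantee, and the delicate part is calibrating the truncation scale $R$, the planting scale $\lambda$, and the $\delta$-regime jointly so that each of the $N$ replacements survives with negligible loss and so that the Hanson--Wright tail does not swamp the signal. This joint calibration is precisely what necessitates the hypothesis $\delta \leq O\bigl(\min\{1/n, d^2/(n \log n)\}\bigr)$ in the theorem.
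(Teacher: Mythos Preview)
This theorem is not proved in the present paper; it is imported verbatim from \cite{KMS22} and used only as a black box in the proof of \Cref{thm:lb-sampler-bounded-covariance}. So there is no ``paper's own proof'' to compare against here. Your sketch is essentially the fingerprinting argument that \cite{KMS22} actually carries out (random symmetric perturbation of the identity, quadratic-form score $\langle A, X_iX_i^T - \Sigma\rangle$, Stein/score identity to relate the sum of scores to the correlation $\langle \hSigma - I, G\rangle$, then the DP swap bound combined with a Hanson--Wright truncation), so in that sense your approach matches the source of the result rather than anything done in this paper.
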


\subsection{Agnostic Learner}

In the main body of the paper, we considered the learner where the samples are drawn from $D$ which belongs to a certain class $\cD$. For our lower bound proofs, it will be convenient to consider the \emph{agnostic} setting where $D$ is not assumed to be from $\cD$. The definition and accuracy guarantee of agnostic learner is given below. We remark that the definition coincides with the learner definition given earlier if we assume that $D \in \cD$. (Note that the constant 3 is required for \Cref{thm:agnostic-learner-gaussian}.)

\paragraph{Agnostic Learning.} Let $\cD$ be a class of distribution. An algorithm $\A$, which takes in $n$ samples and output a distribution $\hP \in \cD$, is said to be an \emph{$(\alpha, \beta)$-accurate agnostic learner} for a class $\cD$ of distributions iff, for any distribution $P$, we have
\begin{align*}
\Pr_{\bX \sim P^n, \hP \gets \A(\bX)}\left[d_{\TV}(\hP, P) \leq 3 \cdot \min_{P \in \cD} d_{\TV}(D, P) + \alpha\right] \geq 1 - \beta.
\end{align*}

We will use the following (well-known) result that the class of centered\footnote{Recall that centered simply means that $\mu = 0$.} Gaussian distributions can be learned with sample complexity $O(d^2/\gamma^2)$. For a full proof, see e.g.,~\citet{AshtianiBM18}.

\begin{theorem} \label{thm:agnostic-learner-gaussian}
For any $\gamma, \beta \in (0, 1/2]$, there exists a $(\gamma, \beta)$-accurate agnostic learner for centered Gaussian distributions in $d$ dimensions  with sample complexity
\begin{align*}
O\left(\frac{d^2 + \sqrt{\log(1/\beta)}}{\gamma^2}\right).
\end{align*}
\end{theorem}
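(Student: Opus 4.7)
My plan is to follow the standard recipe: establish a constant-accuracy, constant-confidence realizable learner for centered Gaussians via the empirical covariance estimator, and then lift it to a high-confidence agnostic learner via the sample-compression reduction of~\cite{AshtianiBM18}.

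First I would show that the empirical covariance $\hSigma = \frac{1}{n_0}\sum_{i=1}^{n_0} X_i X_i^T$ computed from $n_0 = O(d^2)$ i.i.d.\ samples of $\cN(0,\Sigma)$ satisfies $\|\hSigma - \Sigma\|_\Sigma = O(1)$ with constant probability. This is essentially the two-sided analog of \Cref{lem:concen-cov}, which is standard Wishart concentration. Combining with \Cref{lem:tv-mah} yields $d_{\TV}(\cN(0,\hSigma), \cN(0,\Sigma)) = O(1)$ with constant probability, i.e., a realizable learner at constant accuracy and constant confidence using $O(d^2)$ samples. In the terminology of~\cite{AshtianiBM18}, this is a \emph{sample compression scheme} of size $\tau = O(d^2)$ for centered Gaussians: the encoder picks $\tau$ of the input samples as the ``compression,'' and the decoder outputs the empirical-covariance Gaussian built from them.

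Next I would invoke the generic compression-to-agnostic-learning reduction of~\cite{AshtianiBM18}, which converts any size-$\tau$ constant-accuracy compression scheme for $\cD$ into a $(\gamma,\beta)$-accurate agnostic learner for $\cD$ with sample complexity $O\!\left(\tau/\gamma^2 + \sqrt{\log(1/\beta)}/\gamma^2\right)$. The reduction operates by drawing a batch of $n$ samples, enumerating all $\binom{n}{\tau}$ candidate decompressions to form a finite hypothesis set, and running a Scheff\'e-style tournament to select the one closest (in empirical TV distance on held-out tournament samples) to the input distribution. The $\tau/\gamma^2$ term ensures at least one $O(\gamma)$-good candidate appears in the enumeration, while the $\sqrt{\log(1/\beta)}/\gamma^2$ term comes from Bernstein-type concentration of the pairwise Scheff\'e differences in the tournament, sharper than the naive $\log(1/\beta)/\gamma^2$ one would get from a union bound with Hoeffding.

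The main obstacle is making the compression scheme \emph{robust} enough for the agnostic extension: samples now come from an arbitrary $P$ that need only be close in TV to some $P^\ast = \cN(0,\Sigma^\ast)\in\cD$, rather than from $P^\ast$ itself. I would handle this by coupling $n_0$ samples from $P$ to $n_0$ samples from $P^\ast$ so that they agree with probability $\geq 1 - n_0\cdot d_{\TV}(P,P^\ast)$; on the coupling event the Step~1 concentration argument applies verbatim, and the coupling-failure probability is absorbed into the multiplicative $3\cdot\min_{Q\in\cD} d_{\TV}(Q,P)$ slack in the agnostic accuracy definition. Once robustness is established, the enumeration, tournament analysis, and final high-probability bound are direct consequences of the framework of~\cite{AshtianiBM18} and require no further technical work.
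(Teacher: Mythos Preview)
The paper does not supply its own proof of this statement; it simply cites \cite{AshtianiBM18}. Your sketch follows exactly the sample-compression route of that reference (empirical covariance as the decoder, enumeration of subsets, Scheff\'e tournament), so you are doing precisely what the paper points to.

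One small gap: your justification for the $\sqrt{\log(1/\beta)}/\gamma^2$ confidence term is not right. Bernstein's inequality improves on Hoeffding when the \emph{variance} of the summands is small relative to their range, but the dependence on the failure probability in the exponent remains $\log(1/\beta)$, not $\sqrt{\log(1/\beta)}$; the standard Scheff\'e tournament analysis yields an additive $\log(1/\beta)/\gamma^2$ term, not $\sqrt{\log(1/\beta)}/\gamma^2$. For the paper's purposes this distinction is immaterial, since the theorem is only invoked with $\beta$ polynomial in $1/d$, so the $d^2$ term dominates either way.
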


\section{Proofs for \Cref{sec:gaussian-alg}}

\subsection{Reduction to the Bounded Mean Case}

%Let $\A_{\boundedmeansampler}$ be the $(\eps, \delta)$-DP sampler.  
The reduction will be done by applying a so-called ``densest ball'' algorithm; this is an algorithm that, when there is a ball of radius $r$ that contains the majority of the input, can find a ball of radius $O(r)$ containing the majority of the input. There are many algorithms known for this problem~\cite{GhaziM20,TsfadiaCKMS22,NarayananME22} that give similar guarantees\footnote{We note that our problem is in fact easier than those considered in the aforementioned work as we assume that $\Sigma \preceq \kappa \cdot I$ and the problem can be solved ``coordinate-by-coordinate''. This means that many other algorithms e.g., truncated Laplace-based histogram~\cite{DesfontainesVGM22} also work and give similar bounds. However, we note that some other algorithms~\citep[e.g.,][]{BiswasD0U20} require prior bounds on $\|\mu\|$, whereas \Cref{thm:densest-ball} does not require any such bound.}. We state such a guarantee below.

\begin{theorem}[e.g., \citet{GhaziM20}] \label{thm:densest-ball}
For $n_{\ds}(d, \beta, \eps, \delta) = \tTheta\left(\frac{\sqrt{d}}{\eps}\right)$, there exists an $(\eps, \delta)$-DP algorithm $\A_{\ds}$ that takes in $X_1, \dots, X_n \in \bR^d$ together with a radius parameter $r > 0$ such that: if there is an $r$-radius ball containing at least $2/3$ fraction of the input points, then with probability $1 - \beta$, it outputs a ball of radius $O(r)$ that contains at least half of the input points.
\end{theorem}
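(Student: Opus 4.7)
I would follow the standard DP ``$1$-cluster'' recipe underlying \citet{GhaziM20} and related work: partition $\bR^d$ by a randomly-shifted LSH / grid family whose buckets have $\ell_2$-diameter $O(r)$; identify a bucket carrying $\geq n/2$ input points via a DP heavy-hitters subroutine; return a ball of radius $O(r)$ around any input sample lying inside the identified bucket. Because the input admits a radius-$r$ ball with $\geq 2n/3$ mass, a uniformly random shift places that ball entirely inside one bucket with constant probability, so with constant probability a ``heavy'' bucket of count $\geq 2n/3$ exists while all other buckets have count $\leq n/3$.

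\paragraph{Key steps.} (i) Fix an $\ell_2$-LSH family with collision radius $r$ and constant gap; sample a random hash together with a uniformly random shift. (ii) Form the bucket-count histogram of the hashed samples. (iii) Release a heavy bucket privately via a DP stable-histogram mechanism (truncated Laplace noise on per-bucket counts plus a threshold $\geq n/2$), whose sample complexity is $\tO(1/\eps)$ and is independent of the (possibly infinite) domain size. (iv) If no bucket clears the threshold, output $\perp$; otherwise output the ball of radius $O(r)$ centered at an input sample inside the released bucket. Repeat the construction with independent random shifts and aggregate via majority / retry to drive the failure probability below $\beta$; use advanced composition (\Cref{thm:advanced-composition}) to charge the repetitions against the $(\eps,\delta)$-DP budget, yielding total sample complexity $\tO(\sqrt d/\eps)$.

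\paragraph{Main obstacle.} The principal difficulty is controlling the high-dimensional bucket universe while retaining the $\sqrt d$ (as opposed to $d$) scaling. A naive diameter-$r$ grid in $\bR^d$ contains $\exp(\Omega(d))$ cells in any bounded region, and a union bound over cells would inflate the sample complexity to $\tO(d/\eps)$. The $\sqrt d$ saving rests on a careful LSH / projection construction whose collision gap is sharp enough that only a modest number of hash repetitions isolates the dense cluster, combined with the fact that the stable-histogram sample complexity is domain-size independent, so only the LSH repetitions -- not the bucket cardinality -- enter the privacy accounting. A secondary subtlety is that the identified heavy bucket is data-dependent even though the LSH and shift randomness are not, so I would wrap the heavy-hitters call in a propose-test-release guard (in the spirit of \Cref{lem:composition}) that releases $\perp$ whenever the noisy count fails the threshold, ensuring the overall $(\eps,\delta)$-DP guarantee even on inputs that violate the $2/3$-fraction promise.
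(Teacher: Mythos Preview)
This theorem is not proved in the paper; it is quoted as a known result from \citet{GhaziM20} (and the other references mentioned alongside it) and invoked as a black box in the proof of \Cref{lem:bounded-mean-to-unbounded-mean}. So there is no in-paper argument to compare your proposal against.

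On the substance of your sketch, there is a real gap in the step that is supposed to deliver the $\sqrt d$ scaling. You assert that a uniformly random shift of a partition into $\ell_2$-diameter-$O(r)$ buckets places the entire radius-$r$ cluster in a single bucket with constant probability. In $\bR^d$ this fails: any bucket of $\ell_2$-diameter $O(r)$ (e.g., an axis-aligned cube) has side length $O(r/\sqrt d)$, and the probability that a random shift avoids cutting a diameter-$2r$ set in \emph{every} coordinate is either zero or $\exp(-\Omega(d))$, depending on constants; relaxing to side length $\Theta(r)$ still gives $d$ independent constant-probability events and hence success probability $\exp(-\Omega(d))$. Your ``repeat and use advanced composition'' step would then need exponentially many rounds, not $O(d)$, so the accounting does not land at $\tO(\sqrt d/\eps)$. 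The sentence ``the $\sqrt d$ saving rests on a careful LSH / projection construction'' is exactly the missing idea, not an argument for it.

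The route taken in the cited literature, and the one the paper itself points to in the footnote immediately following \Cref{thm:densest-ball}, is different: solve the problem \emph{coordinate by coordinate}, privately locating for each of the $d$ axes an interval of length $O(r)$ carrying the majority of the projected points (a one-dimensional histogram / interval-point task costing $\tO(1/\eps)$ samples), and then intersect. The $\tO(\sqrt d/\eps)$ bound arises because advanced composition (\Cref{thm:advanced-composition}) over the $d$ one-dimensional subroutines contributes a $\sqrt d$ factor; it is an advanced-composition $\sqrt d$, not an LSH parameter. If you want to salvage a hashing-style proof, you need a scheme whose whole-cluster containment probability is only polynomially small in $d$, which is essentially what the coordinate-wise and tree-based constructions in the $1$-cluster line of work engineer.
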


We can now give a reduction of an arbitrary mean case to the bounded mean case by first using \Cref{thm:densest-ball} to obtain a rough estimate of the mean and then shifting the remaining samples accordingly.

\begin{proof}
The new sampler works as follows:
\begin{itemize}[leftmargin=*]
\item Let $n_1 := \max\{n_{\ds}(d, \alpha/4, \eps/2, \delta/2), 100 \log(4/\alpha)\}, r = 10C\kappa\sqrt{d}$ where $C$ is the constant from \Cref{lem:tail-gaussian}, and $n_2 := n_{\boundedmeansampler}(\alpha/2, r, \eps/2, \delta/2)$.
\item \textit{Rough Mean Estimator Stage:}
\begin{itemize}[leftmargin=*]
\item Run $(\eps/2,\delta/2)$-DP $\A_{\ds}$ from \Cref{thm:densest-ball} on $n_1$ samples (with $r$ as specified above) to get a ball centered at $c \in \bR^d$.
\end{itemize}
\item \textit{Sampling Stage:} 
\begin{itemize}[leftmargin=*]
\item Draw $n_2$ additional samples $X_1, \dots, X_{n_2}$. \item Run $(\eps/2,\delta/2)$-DP $(\alpha/2)$-accurate $\A_{\boundedmeansampler}$ with $R = r$ on $(X_1 - c), \dots, (X_{n_2} - c)$; let $Y$ be its output.
\item Output $Y + c$.
\end{itemize}
\end{itemize}

\paragraph{Privacy Analysis.}
Basic composition implies that the sampler is $(\eps,\delta)$-DP as desired.

\paragraph{Accuracy Analysis.}
Since $\Sigma \preceq \kappa I$, \Cref{lem:tail-gaussian} implies that $\Pr_{X \sim \cN(\mu, \Sigma)}[\|X - \mu\|_2 \geq r] \leq 0.1$. Thus, using standard concentration inequality (e.g., \Cref{thm:bernstein}) and since $n_1 \geq 100 \log(4/\alpha)$, the probability that at least $2/3$ of the $n_1$ points sampled in the first stage contains in the ball $B(\mu, r)$ is at least $1 - \alpha/4$. The guarantee of the densest ball algorithm then implies that with probability $1 - \alpha/2$, we have $\|\mu - c\|_2 \leq r$.

Let us consider a fixed $c$ and consider the output distribution $\cD_Y^c$ of $Y$ in the sampling stage and $\cD^c_{Y+c}$ of the final output $Y + c$. Notice that $(X_1 - c), \dots, (X_{n_2} - c) \sim \cN(\mu - c, \Sigma)$. Consequently, when $\|\mu - c\|_2 \leq r$, we may apply the guarantee of $\A_{\boundedmeansampler}$ to yield
\begin{align*}
\alpha/2 \geq d_{\TV}(\cD_Y^c, \cN(\mu - c, \Sigma)) = d_{\TV}(\cD^c_{Y+c},  \cN(\mu, \Sigma)).
\end{align*}

Combining these arguments, the sampler is $\alpha$-accurate as desired.
\end{proof}

\subsection{Unknown Bounded Covariance}
\label{app:bounded-cov-gaussian}

\begin{proof}[Proof of \Cref{thm:dp-unif-sphere-noise}]
From~\Cref{lem:hs-ineq}, we have
$$d_\eps(\US_{M, N} ~||~ \US_{M, N} + v) \leq \Pr_{z \sim \US_{M, N}}[f_{\US_{M, N}}(z) > e^\eps \cdot f_{\US_{M, N}}(z - v)].$$

For convenience, let us also define $\eta = \tau \cdot \left(10\sqrt{\frac{\log(10/\delta)}{c M}}\right)$. Note that $\eta \leq 0.01$.

For $z$ such that $\|z\| \leq 0.9, |\left<z, v\right>| \leq \eta$, we also have $\|z - v\|^2 \leq 0.9^2 + 0.02 + 0.0001 < 0.9$ and thus
\begin{align*}
\ln \frac{\US_{M, N}(z)}{\US_{M, N}(z-v)}
&\overset{\eqref{eq:proj-uni-sphere}}{=}  \left(\frac{M-N}{2}-1\right)\ln\left(\frac{1 - \|z\|^2}{1 - \|z - v\|^2}\right) \\
&\leq \left(\frac{M-N}{2}-1\right)\left(\frac{1 - \|z\|^2}{1 - \|z - v\|^2} - 1\right) \\
&\leq \left(\frac{M-N}{2}-1\right)\left(\frac{\|v\|^2 - 2\left<z, v\right>}{1 - \|z - v\|^2}\right) \\
&\leq M\left(10(\tau^2 + 2\eta)\right) \\
&\leq \eps,
\end{align*}
where the last inequality follows from our choice of parameters (i.e., $M \geq \frac{10^4 \log(10/\delta)}{c \eps}$).

Combining the previous two bounds, we have
\begin{align*}
d_\eps(\US_{M, N} ~||~ \US_{M, N} + v)
&\leq \Pr_{z \sim \US_{M, N}}[\|z\| > 0.9 \text{ } \vee \text{ } \left<z,  v\right> > \eta] \\
&\leq \Pr_{z \sim \US_{M, N}}[\|z\| > 0.9] + \Pr_{z \sim \US_{M, N}}[\left<z,  v\right> > \eta] \\
&\leq \Pr_{z \sim \US_{M, N}}[\|z\| > 0.9] + \Pr_{z \sim \US_{M, N}}\left[\left<z, \frac{v}{\|v\|}\right> > 10\sqrt{\frac{\log(10/\delta)}{c M}}\right].
\end{align*}

Next, recall that when $z \sim \US_{M, N}$, $\|z\|^2 \sim \Beta(N/2, (M - N)/2)$.%
\footnote{See, e.g.,  \url{https://en.wikipedia.org/wiki/Beta_distribution}}
%\footnote{\pasin{Not sure if we need a reference for this fact. It is stated in the wikipedia page for chi-squared distribution and a proof can be found at \url{https://stats.stackexchange.com/questions/406805/proving-transformations-of-two-independent-chi-squared-random-variables-is-equiv}. However, I have yet to find a published work with such a clean statement.}}. 
Thus, applying Lemma~\ref{lem:beta-tail} and using the assumption that $N \leq M/2$, we can conclude that
\[
\Pr_{z \sim \US_{M, N}}[\|z\| > 0.9] 
= \Pr_{Z \sim \Beta(N/2, (M-N)/2)}[Z^2 > 0.81]  
\leq 2\exp(-c \cdot 0.01 M)
\leq \delta/2,
\]
where the first inequality is from Lemma~\ref{lem:beta-tail} and the last inequality follows from our assumption that $M \geq 100 c \log(10/\delta)$.

Recall also that  when $z \sim \US_{M, N}$, $\left|\left<z, \frac{v}{\|v\|}\right>\right|^2 \sim \Beta(1/2, (M-1)/2)$. Again, applying Lemma~\ref{lem:beta-tail} yields
\[
\Pr_{z \sim \US_{M, N}}\left[\left<z, \frac{v}{\|v\|}\right> > 10\sqrt{\frac{\log(10/\delta)}{c M}}\right] 
\leq \Pr_{Z \sim \Beta(1/2, (M-1)/2)}\left[Z^2 > \frac{100\log(10/\delta)}{c M}\right]
\leq \delta/2,
\]
where the last inequality again follows from our choice of parameters.

Combining the three preceding inequalities, we can conclude that $d_\eps(\US_{M, N} ~||~ \US_{M, N} + v) \leq \delta$ as desired.
\end{proof}

\subsection{Unknown Covariance}

Throughout this section, we assume the standard assumption that the covariance matrix $\Sigma$ is full rank. It is simple to extend the result to the more general case using an algorithm of~\citet{AshtianiL22}, which can identify the span of $\Sigma$ with probability one, using $O(d \log(1/\delta)/\eps)$ samples.

\iffalse
\begin{proof}
Let $\Sigma' = \Sigma^{1/2} \hSigma^{-1} \Sigma^{1/2}$ and $\mu' = \hSigma^{-1/2}(\mu - \hmu)$.
We have
\begin{align*}
d_{\TV}(\cN(\mu, \Sigma), \cN(\hmu, \hSigma)) 
&= d_{\TV}(\cN(\mu', \Sigma'), \cN(0, I)).
\end{align*}
\end{proof}
\fi

\begin{lemma} \label{lem:reduction-unbounded}
Suppose that the following hold:
\begin{itemize}[leftmargin=*]
\item There exists an $(\alpha, \beta)$-accurate $(\eps, \delta)$-DP learner $\A_{\learner}$ for Gaussian distributions with sample complexity $n_{\learner}(\alpha, \beta, \eps, \delta)$.
\item There exists an $\alpha$-accurate $(\eps, \delta)$-DP sampler $\A_{\boundedcovsampler}$ for Gaussian distributions with $I \preceq \Sigma \preceq \kappa \cdot I$ with sample complexity $n_{\boundedcovsampler}(\kappa, \alpha, \eps, \delta)$. 
\end{itemize}

Then, there exists an $\alpha$-accurate $(\eps, \delta)$-DP sampler for Gaussian (without any assumption) with sample complexity $$n_{\learner}(0.001, \alpha/2, \eps/2, \delta/2) + n_{\boundedcovsampler}(4, \alpha/2, \eps/2, \delta/2).$$
\end{lemma}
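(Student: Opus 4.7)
Following the overview for the unknown unbounded covariance case, the plan is to use $\A_{\learner}$ as a crude \emph{preconditioner}: with only a constant target accuracy it produces $(\hmu, \hSigma)$ close to $(\mu, \Sigma)$ in total variation, and conjugating the remaining samples by these estimates reduces the task to sampling from a Gaussian whose covariance lies in $[I, 4I]$. Concretely, the combined sampler uses two disjoint fresh batches of samples. In Phase~1 it runs $\A_{\learner}$ with parameters $(0.001, \alpha/2, \eps/2, \delta/2)$ to obtain $(\hmu, \hSigma)$. In Phase~2 it draws a second batch $X_1, \dots, X_m$, forms $X'_i := \sqrt{2}\,\hSigma^{-1/2}(X_i - \hmu)$, runs $\A_{\boundedcovsampler}$ with parameters $(4, \alpha/2, \eps/2, \delta/2)$ on the $X'_i$ to obtain $Y'$, and returns $Y := \hmu + \tfrac{1}{\sqrt{2}}\hSigma^{1/2} Y'$. (If $\hSigma$ fails to be positive definite, fall back to an arbitrary fixed output; under the accuracy event below this never occurs because $\Sigma$ is assumed full rank.)

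\textbf{Privacy.} Phase~1 is $(\eps/2,\delta/2)$-DP by hypothesis. For every fixed $(\hmu,\hSigma)$, the preconditioning map $X \mapsto \sqrt{2}\,\hSigma^{-1/2}(X - \hmu)$ is a deterministic bijection and therefore sends neighboring datasets to neighboring datasets; so Phase~2, viewed as a function of the second batch, inherits the $(\eps/2,\delta/2)$-DP guarantee of $\A_{\boundedcovsampler}$, and the final affine back-transformation is just post-processing of $(\hmu, \hSigma, Y')$. Since the two batches are disjoint, basic composition of the two $(\eps/2,\delta/2)$-DP phases yields $(\eps,\delta)$-DP overall.

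\textbf{Accuracy and the main subtlety.} Let $\cE$ be the event $d_{\TV}(\cN(\hmu, \hSigma), \cN(\mu, \Sigma))\leq 0.001$; by the learner's guarantee, $\Pr[\cE] \geq 1-\alpha/2$. The main (and essentially only) nontrivial step is to verify that on $\cE$ the preconditioned samples are drawn from a Gaussian with covariance in $[I, 4I]$. By \Cref{lem:close-tv-to-close-sigma-mu}, on $\cE$ we have $0.5\, I \preceq \Sigma^{1/2}\hSigma^{-1}\Sigma^{1/2} \preceq 2\, I$. Since $\Sigma^{1/2}\hSigma^{-1}\Sigma^{1/2}$ and $\hSigma^{-1/2}\Sigma\hSigma^{-1/2}$ are similar (they are the two orderings of $\Sigma^{1/2}\hSigma^{-1/2}$ and $\hSigma^{-1/2}\Sigma^{1/2}$), they share the same spectrum, so $0.5\, I \preceq \hSigma^{-1/2}\Sigma\hSigma^{-1/2} \preceq 2\, I$, giving $I \preceq 2\hSigma^{-1/2}\Sigma\hSigma^{-1/2} \preceq 4I$ as required for $\A_{\boundedcovsampler}$ with $\kappa = 4$. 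The $X'_i$ are then i.i.d.\ from $\cN\bigl(\sqrt{2}\,\hSigma^{-1/2}(\mu-\hmu),\, 2\hSigma^{-1/2}\Sigma\hSigma^{-1/2}\bigr)$, so $\A_{\boundedcovsampler}$ outputs $Y'$ within $\alpha/2$ in TV of this distribution; the back-transformation is a deterministic bijection depending only on $(\hmu, \hSigma)$, hence (by post-processing) $d_{\TV}(Y, \cN(\mu,\Sigma))\leq \alpha/2$ conditional on $\cE$. Combining with $\Pr[\neg \cE]\leq \alpha/2$ gives total TV error at most $\alpha$, as desired.
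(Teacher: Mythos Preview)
Your proof is correct and follows essentially the same two-phase preconditioning approach as the paper: run the DP learner with constant accuracy to obtain $(\hmu,\hSigma)$, affinely transform a fresh batch into the bounded-covariance regime, apply $\A_{\boundedcovsampler}$, and invert the transformation, with privacy by basic composition and accuracy via \Cref{lem:close-tv-to-close-sigma-mu}. The only cosmetic difference is your scaling factor $\sqrt{2}$ versus the paper's factor $2$; your choice (together with the explicit similarity argument for the spectrum of $\hSigma^{-1/2}\Sigma\hSigma^{-1/2}$) lands the preconditioned covariance precisely in $[I,4I]$ as needed for $\kappa=4$.
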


\begin{proof}[Proof of Lemma \ref{lem:reduction-unbounded}]
The sampler is as follows:
\begin{itemize}[leftmargin=*]
\item Let $n_1 := n_{\learner}(0.001, \alpha/2, \eps/2, \delta/2)$ and $n_2 := n_{\boundedcovsampler}(4, \alpha/2, \eps/2, \delta/2)$.
\item \textit{Estimation Stage:}
\begin{itemize}[leftmargin=*]
\item Run $(\eps/2,\delta/2)$-DP $(0.001, \alpha/2)$-accurate $\A_{\learner}$ on $n_1$ samples to get $\hmu, \hSigma$.
\end{itemize}
\item \textit{Sampling Stage:} 
\begin{itemize}[leftmargin=*]
\item Draw $n_2$ additional samples $X_1, \dots, X_{n_2}$. \item Run $(\eps/2,\delta/2)$-DP $(\alpha/2)$-accurate $\A_{\boundedcovsampler}$ on $2\hSigma^{-1/2}(X_1 - \hmu), \dots, 2\hSigma^{-1/2}(X_{n_2} - \hmu)$; let $Y$ be its output.
\item Output $0.5 \hSigma^{1/2} Y + \hmu$.
\end{itemize}
\end{itemize}

Basic composition implies that the sampler is $(\eps,\delta)$-DP.

To see the accuracy guarantee, the guarantee of the learner implies that with probability $1 - \alpha/2$, we have $d_{\TV}(\cN(\mu, \Sigma), \cN(\hmu, \hSigma)) \leq 0.001$.

Let us consider a fixed $\hSigma, \hmu$ and consider the output distribution of the sampling stage. Notice that $2\hSigma^{-1/2}(X_1 - \hmu), \dots, 2\hSigma^{-1/2}(X_{n_2} - \hmu) \sim \cN(2\hSigma^{-1/2}(\mu - \hmu), 4\Sigma^{1/2} \hSigma^{-1} \Sigma^{1/2})$. As a result, when $d_{\TV}(\cN(\mu, \Sigma), \cN(\hmu, \hSigma)) \leq 0.001$ holds, we may apply \Cref{lem:close-tv-to-close-sigma-mu} together with the guarantee of $\A_{\boundedcovsampler}$ to yield
\begin{align*}
\alpha/2 \geq d_{\TV}(Y, \cN(2\hSigma^{-1/2}(\mu - \hmu), 4\Sigma^{1/2} \hSigma^{-1} \Sigma^{1/2})) = d_{\TV}(0.5 \hSigma^{1/2} Y + \hmu, \cN(\mu, \Sigma)).
\end{align*}

Combining the above arguments, the sampler is $\alpha$-accurate as desired.
\end{proof}

Combining the above reduction (\Cref{lem:reduction-unbounded}) together with a known result on DP learning of Gaussians (\Cref{thm:learning-gaussian} below), we immediately arrive at \Cref{thm:unbounded-gaussian}.

\begin{theorem}[\cite{AshtianiL22}] \label{thm:learning-gaussian}
There is an $(\alpha, \beta)$-accurate $(\eps, \delta)$-DP learner for Gaussian distributions with sample complexity $O\left(\frac{d^2}{\alpha^2} + \frac{d^2}{\alpha \eps} \cdot \polylog\left(\frac{d}{\alpha\beta\eps\delta}\right)\right)$.
\end{theorem}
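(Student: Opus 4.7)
The plan is to prove this via a two-stage private pipeline: first a preconditioning step that reduces the unbounded-covariance case to a bounded-covariance one, and then off-the-shelf DP mean and covariance estimators for the bounded case.

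\textbf{Stage 1 (private preconditioning).} Using $\tO(d^2/\eps)$ samples, I would privately compute a crude estimate $(\hmu_0, \hSigma_0)$ satisfying $\tfrac{1}{2}\hSigma_0 \preceq \Sigma \preceq 2\hSigma_0$ and $\|\hSigma_0^{-1/2}(\mu-\hmu_0)\|_2 = O(\sqrt{d})$, with failure probability at most $\beta/2$. I would obtain the covariance by iterative spectral refinement in the spirit of \citet{KamathMSSU22}: start from an arbitrary positive-definite matrix and, over $O(\log d)$ phases, run a DP noisy top-eigenvalue/eigenvector subroutine to detect directions where the current estimate is too small and rescale along them, shrinking the condition number by a constant factor per phase. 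Once a preconditioner is in hand, I would obtain $\hmu_0$ by applying the densest-ball primitive (\Cref{thm:densest-ball}) in the preconditioned coordinates, where the effective covariance is within a constant factor of the identity.

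\textbf{Stage 2 (bounded case).} On a fresh batch, transform each sample as $X_i' := \hSigma_0^{-1/2}(X_i - \hmu_0)$, so that $X_i' \sim \cN(\mu', \Sigma')$ with $\tfrac{1}{2} I \preceq \Sigma' \preceq 2 I$ and $\|\mu'\|_2 = O(\sqrt d)$. Known DP estimators for the bounded regime give (i) a mean estimator with sample complexity $\tO(d/\alpha^2 + d/(\alpha\eps))$ producing $\hmu'$ with $\|\hSigma'^{-1/2}(\mu'-\hmu')\|_2 \leq \alpha$, and (ii) a covariance estimator with sample complexity $\tO(d^2/\alpha^2 + d^2/(\alpha\eps))$ producing $\hSigma'$ with $\|\Sigma'-\hSigma'\|_{\Sigma'} \leq \alpha$. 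By \Cref{lem:tv-mah} together with a standard TV bound that also incorporates the mean shift, this yields $d_{\TV}(\cN(\mu', \Sigma'),\cN(\hmu', \hSigma')) \leq O(\alpha)$. Since TV distance between Gaussians is invariant under invertible affine maps, undoing the preconditioning via $\hmu = \hmu_0 + \hSigma_0^{1/2}\hmu'$ and $\hSigma = \hSigma_0^{1/2}\hSigma'\hSigma_0^{1/2}$ yields $d_{\TV}(\cN(\mu,\Sigma),\cN(\hmu,\hSigma)) \leq O(\alpha)$, and rescaling $\alpha$ by a constant gives the stated accuracy.

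\textbf{Accounting and main obstacle.} Basic composition at $(\eps/2, \delta/2)$ per stage gives $(\eps,\delta)$-DP overall, and summing the sample complexities gives $\tO(d^2/\eps) + \tO(d^2/\alpha^2 + d^2/(\alpha\eps)) = \tO(d^2/\alpha^2 + d^2/(\alpha\eps))$ as required. The technically hard part is Stage 1: without any prior bound on $\Sigma$, truncation radii cannot be calibrated, so the algorithm must self-bootstrap via spectral refinement. Maintaining a useful invariant across the $O(\log d)$ phases (so that the condition number actually shrinks in each round), paying only polylog factors in sample complexity per phase, and controlling the coupling with the mean estimator (whose effective scale depends on the current covariance estimate) is the delicate work, which is executed in detail by \citet{AshtianiL22}.
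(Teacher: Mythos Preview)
The paper does not prove this theorem at all: it is stated purely as a citation of prior work, and is invoked as a black box (together with \Cref{lem:reduction-unbounded}) to deduce \Cref{thm:unbounded-gaussian}. So there is no ``paper's own proof'' to compare against.

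That said, your sketch is not the argument of \citet{AshtianiL22}. What you describe in Stage~1 is essentially the iterative spectral preconditioning of \citet{KamathSU19,KamathMSSU22}, which crucially needs a prior bound $I \preceq \Sigma \preceq \kappa I$ so that the number of refinement phases is $O(\log \kappa)$. In the fully unbounded setting of \Cref{thm:learning-gaussian} you cannot ``start from an arbitrary positive-definite matrix'' and expect $O(\log d)$ phases to suffice: the initial condition number could be arbitrarily large, and each phase only shrinks it by a constant factor, so the phase count (and hence the sample complexity and privacy budget) would be unbounded. The actual contribution of \citet{AshtianiL22} is a different, general reduction: they show that any class admitting a \emph{robust} (in the sense of tolerating a constant fraction of adversarial corruptions) non-private learner also admits a DP learner with comparable sample complexity, and then instantiate this with a known robust Gaussian learner. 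This sidesteps the need for any prior bound on $\Sigma$ entirely. Your Stage~2 is fine once a good preconditioner is in hand, but Stage~1 as written has a genuine gap for the unbounded-covariance regime.
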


\subsection{Proof of \Cref{thm:bounded-cov-gaussian-improved}}
\label{sec:bounded-cov-sampler-analysis}

We use Algorithm~\ref{alg:bounded-cov-gaussian-improved} with the following parameters: $B = R + 10^4\kappa\sqrt{d + \log\left(\frac{2 \log(2/\delta)}{\alpha\eps}\right)}$, $\Delta = 2 B^2$, and $n_1 = n_2 = \left\lceil 10^4C^2 \left(B^2 \cdot \frac{\log(10/\delta)}{c \eps}\right) \right\rceil$,
%\begin{itemize}[leftmargin=*]
%\item $B = R + 10^4\kappa\sqrt{d + \log\left(\frac{2 \log(2/\delta)}{\alpha\eps}\right)}$,
%\item $\Delta = 2 B^2$,
%\item $n_1 = n_2 = \left\lceil 10^4C^2 \left(B^2 \cdot \frac{\log(10/\delta)}{c \eps}\right) \right\rceil$,
%\end{itemize}
where $c, C$ are the constants from~\Cref{lem:beta-tail,lem:concen-cov}.

\paragraph{Privacy Analysis.}
To analyze the privacy constraint, we first notice that the sensitivity of $\lambda_{\min}\left(\sum_{i \in [n_2]} U_iU_i^T\right)$ is at most $\Delta = (\sqrt{2} \cdot B)^2$ and \Cref{lem:stlap-dp} ensures that the check, which only uses $\lambda_{\min}\left(\sum_{i \in [n_2]} U_iU_i^T\right) + r$, is $(\eps/2, \delta/2)$-DP. Passing this check ensures that 
\begin{equation}
\label{eq:check}
\lambda_{\min} \cdot \left( \sum_{i \in [n_2]}  U_iU_i^T \right) \geq 0.75n_2.
\end{equation}
From Lemma \ref{lem:composition}, it suffices to show that the output 
of the algorithm is
%that returns $\frac{1}{n_1} \left(\sum_{i \in [n_1]} X_i\right) + \sqrt{1 - \frac{1}{n_1}} \cdot  (\sum_{i \in [n_2]} a[i] \cdot U_i)$ is 
DP, under (\ref{eq:check}).
Let $\bX = X_1, \ldots, X_{n_1+2n_2}$.
%the condition $\lambda_{\min}\left(\sum_{i \in [n_2]} U_iU_i^T\right) \geq 0.75n_2$.

To do this, it will be more convenient to define $V_i = \sqrt{1 - \frac{1}{n_1}} \cdot U_i$.  Condition 
(\ref{eq:check}) implies that 
\begin{equation}
\label{eq:checkv}
\lambda_{\min} \left(\sum_{i \in [n_2]}  V_iV_i^T \right) \geq 0.5n_2 + \Delta. 
\end{equation}
Let $\cM(\bX) = \frac{1}{n_1} \left(\sum_{i \in [n_1]} X^{\trun}_i\right) + \left(\sum_{i \in [n_2]} a[i] \cdot V_i\right)$. 
Consider any neighboring inputs $\bX$ and $\tbX$ that satisfy the condition. There are two cases, based on where they differ.

%they differ on $X_1, \dots, X_{n_1}$ or $X_{n_1 + 1}, \dots, X_{n_1 + 2n_2}$.

(i) \emph{Case I:} $\bX, \tbX$ differ on the first $n_1$ samples. Let $v = \frac{1}{n_1}\sum_{i \in [n_1]} (\tX^{\trun}_i - X_i)$; notice that $\|v\| \leq \frac{2 B}{n_1}$. Furthermore, (\ref{eq:checkv}) implies that $\frac{1}{n_2}\left(\sum_{i \in [n_2]} V_iV_i^T\right) \succeq 0.5 I$. Letting $W$ denote the matrix whose columns are $V_1, \dots, V_{n_2}$, 
%\begin{align*}
$
d_{\frac{\eps}{2}}\left(\cM(\bX)~||~\cM(\tbX)\right)
= d_{\frac{\eps}{2}}\left(\US_W ~||~ \US_W + v\right)
\leq \delta/2,
$
%\end{align*}
where the inequality follows from
\Cref{lem:dp-noise-linear-comb}.

(ii) \emph{Case II:} $\bX, \tbX$ differ on the last $n_2$ samples; assume w.l.o.g. $X_{n_1 + 2n_2} \neq \tX_{n_1 + 2n_2}$. Let $w = V'_{n_2} - V_{n_2}$ and $W$ denote the matrix whose columns are $V_1, \dots, V_{n_2 - 1}$. Furthermore, let $a, a'$ be independent samples from $\US_q$. 
\begin{align}
d_{\frac{\eps}{2}}\left(\cM(\bX)~||~\cM(\tbX)\right) \nonumber\nonumber 
%&= d_{\eps/2}\left(\sum_{i \in [n_2]} a_i U_i \Big\vert\Big\vert \sum_{i \in [n_2]} a_i U'_i\right) \\
%&= d_{\eps/2}\left(a_{n_2} U_{n_2} + \sum_{i \in [n_2 - 1]} a_i U_i \Big\vert\Big\vert a_{n_2} U'_{n_2} + \sum_{i \in [n_2 - 1]} a_i U_i\right).
&= d_{\frac{\eps}{2}}\Bigg(a[n_2] \cdot V_{n_2} + \sum_{i \in [n_2 - 1]} a[i] \cdot V_i ~\Bigg\vert\Bigg\vert~ a'[n_2]\cdot V_{n_2}' + \sum_{i \in [n_2 - 1]} a'[i]\cdot V_i\Bigg) \nonumber \\
&\leq \int_y d_{\frac{\eps}{2}} \left(\US_W ~\Bigg\vert\Bigg\vert~ \frac{y \cdot w}{\sqrt{1-y^2}} + \US_W \right) \US_{n_2, 1}(y) d y, \label{eq:couple-on-last-a}
\end{align}
where the inequality follows from the coupling $a[n_2] = a'[n_2] = y$, $a[n_2] \sim \US_{n_2, 1}$ and the observation that $\left(\frac{a[1]}{\sqrt{1-y^2}}, \dots, \frac{a[n_2-1]}{\sqrt{1-y^2}}\right)$ and $\left(\frac{a'[1]}{\sqrt{1-y^2}}, \dots, \frac{a'[n_2-1]}{\sqrt{1-y^2}}\right)$ are independently distributed as $\US_{n_2-1}$.

When $|y| \leq \sqrt{\frac{10 \log(10/\delta)}{c n_2}}$, we have 
$
\displaystyle{\left\|\frac{y \cdot w}{\sqrt{1-y^2}}\right\| \leq \sqrt{\frac{20 \log(10/\delta)}{c n_2}} \cdot \sqrt{2} \cdot B 
\leq 20\sqrt{\frac{\log(10/\delta)}{c}}}
$. 
Furthermore, (\ref{eq:checkv}) implies that $\frac{1}{n_2}\left(\sum_{i \in [n_2 - 1]} V_iV_i^T\right) \succeq 0.5 I$.  Thus, Lemma \ref{lem:dp-noise-linear-comb} implies 
%\begin{align*}
$$d_{\eps/2}\left(\US_W ~\Bigg\vert\Bigg\vert~ \frac{y \cdot w}{\sqrt{1-y^2}} + \US_W \right) \leq \delta / 4.
$$
%\end{align*}
Combining this with (\ref{eq:couple-on-last-a}), we arrive at
%\begin{align*}
$
d_{\eps/2}\left(\cM(\bX)~||~\cM(\tbX)\right) \leq \delta / 4 + \Pr_{y \sim \US_{n_2, 1}}\left[|y| > \sqrt{\frac{10 \log(10/\delta)}{c n_2}}\right].
$
%\\
%&= \delta / 4 + \Pr_{y \sim \US_{n_2, 1}}\left[y^2 > \frac{10 \log(10/\delta)}{c n_2}\right].
%\end{align*}
%
Since $y^2 \sim \Beta(1/2, (n_2 - 1)/2)$, we may apply the tail bound (Lemma~\ref{lem:beta-tail}) to get
\begin{align*}
\Pr\left[y^2 > \frac{10 \log(10/\delta)}{c n_2} \right] \leq \delta/4.
\end{align*}
Thus, $d_{\eps/2}\left(\cM(\bX)~||~\cM(\tbX)\right) \leq \delta / 4 + \delta/4 = \delta/2.$

In both cases, we have $d_{\eps/2}\left(\cM(\bX)~||~\cM(\tbX)\right) \leq \delta/2$ under (\ref{eq:check}). This concludes our privacy proof.

\paragraph{Accuracy Analysis.}  Let $D = \cN(\mu, \Sigma)$ for some unknown $\mu, \Sigma$.  Consider the algorithm $\A'$ where there is no truncation and no halting.  Since $I \preceq \Sigma \preceq \kappa \cdot I$, standard concentration bounds (\Cref{lem:tail-gaussian,lem:concen-cov}) imply that the truncation and halting are not applied  in $\A$ with probability at least $1 - \alpha/2$. Therefore, we have $d_{\TV}(Q_{\A, D}, Q_{\A', D}) \leq \alpha/2$.

Now, notice that the algorithm $\A'$ just outputs $Y := \frac{1}{n_1} \left(\sum_{i \in [n_1]} X_i\right) + \sqrt{1 - \frac{1}{n_1}} \cdot \left(\sum_{i \in [n_2]} a_i U_i\right)$; this output $Y \sim \cN(\mu, \Sigma)$. In other words, we have $Q_{\A', D} = \cN(\mu, \Sigma)$. Combining these, we can conclude that $\A$ is $\alpha$-accurate.

\section{A Simpler but Worse Algorithm for Unknown Bounded Covariance}
\label{app:worse-bounded-cov}

In this section, we give a simpler algorithm for the unknown bounded covariance setting.  Its sample complexity bound of $\tO\left(\frac{d^{3/2}}{\alpha \eps^2}\right)$ is worse than the one in \Cref{alg:bounded-cov-gaussian-improved} and \Cref{thm:bounded-cov-gaussian-improved}, but is still better than $\Omega\left(\frac{d^2}{\alpha^2} + \frac{d^2}{\alpha\eps}\right)$ for DP learning~\cite{KamathSU19} for constant $\eps > 0$.

The simpler sampler is given in \Cref{alg:bounded-cov-gaussian} and its guarantee is given in the theorem below. Note that this algorithm is once again just a form of (scaled) Gaussian mechanism, whereas the algorithm in the main body (\Cref{alg:bounded-cov-gaussian-improved}) adds noise that is input dependent.

\begin{theorem} \label{thm:gaussian-bounded-cov}
Assume that $I \preceq \Sigma \preceq \kappa \cdot I$ for some $\kappa > 0$ and that $\|\mu\| \leq R$, there is an $\alpha$-accurate $(\eps, \delta)$-DP sampler with sample complexity
\begin{align*}
n = O\left(\left(R^2 + \kappa^2 d \left(\log\left(\frac{d}{\alpha\eps}\right) + \log\log(1/\delta)\right)\right)\cdot \frac {d^{1/2}}{\alpha} \cdot \frac{\log(1/\delta)}{\eps^2}\right).
\end{align*}
\end{theorem}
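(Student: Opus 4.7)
My plan is to analyze \Cref{alg:bounded-cov-gaussian} as a ``scaled'' Gaussian mechanism that first constructs an unnoised statistic whose ideal distribution is exactly $\cN(\mu,\Sigma)$, and then privatizes it with spherical Gaussian noise. Concretely, I would take $n = n_1 + 2n_2$ samples, truncate each to $\ell_2$-norm at most $B$ via $\trun^2_B$, use the first $n_1$ samples to form the truncated mean $\bar X^{\trun} = \frac{1}{n_1}\sum_{i\le n_1} X_i^{\trun}$, use the remaining $2n_2$ samples to form the $n_2$ pair-differences $U_k = (X^{\trun}_{n_1+2k-1} - X^{\trun}_{n_1+2k})/\sqrt 2$, draw $Z \sim \cN(0,\sigma^2 I)$, and output
\[
Y \;=\; \bar X^{\trun} \;+\; \sqrt{\tfrac{1 - 1/n_1}{n_2}}\sum_{k\le n_2} U_k \;+\; Z .
\]
In the absence of truncation, each $U_k$ is exactly $\cN(0,\Sigma)$, and the coefficients are tuned so that the pre-noise sum has covariance $\tfrac{1}{n_1}\Sigma + (1-\tfrac{1}{n_1})\Sigma = \Sigma$, giving $Y \sim \cN(\mu,\Sigma + \sigma^2 I)$.

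\paragraph{Privacy.} A single neighboring swap perturbs either $\bar X^{\trun}$ (by at most $2B/n_1$ in $\ell_2$) or exactly one $U_k$ (by at most $\sqrt{2}\,B$, rescaled by $\sqrt{(1-1/n_1)/n_2}$ to give $O(B/\sqrt{n_2})$). Choosing $n_1 = \Theta(\sqrt{n_2})$ makes the pair-difference contribution dominant, so by the standard Gaussian-mechanism analysis~\cite{DworkR14}, the output is $(\eps,\delta)$-DP provided $\sigma \;\ge\; \Omega\bigl(B\sqrt{\log(1/\delta)}/(\sqrt{n_2}\,\eps)\bigr)$.

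\paragraph{Accuracy.} I would couple $Y$ with the untruncated output $\tilde Y$ obtained by replacing every $X_i^{\trun}$ with $X_i$; then $\tilde Y \sim \cN(\mu,\Sigma+\sigma^2 I)$ exactly. By \Cref{lem:tail-gaussian} together with a union bound over the $n$ samples, $\Pr[Y \ne \tilde Y] \le \alpha/2$ provided $B = R + \Omega\bigl(\kappa\sqrt{d + \log(n/\alpha)}\bigr)$. For the remaining covariance mismatch, \Cref{lem:tv-mah} gives
\[
d_{\TV}\bigl(\cN(\mu,\Sigma+\sigma^2 I),\,\cN(\mu,\Sigma)\bigr) \;\le\; C\,\|\sigma^2 I\|_\Sigma \;=\; C\sigma^2\,\|\Sigma^{-1}\|_F \;\le\; C\sigma^2\sqrt d ,
\]
where the last step uses the hypothesis $\Sigma \succeq I$ (hence $\Sigma^{-1}\preceq I$, crucially \emph{independent} of $\kappa$). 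Setting $\sigma^2 = \Theta(\alpha/\sqrt d)$ bounds this by $\alpha/2$, so the two contributions sum to $\alpha$.

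\paragraph{Sample complexity and main obstacle.} Combining the privacy constraint $\sigma \ge \Omega(B\sqrt{\log(1/\delta)}/(\sqrt{n_2}\,\eps))$ with the accuracy constraint $\sigma^2 \le O(\alpha/\sqrt d)$ gives $n_2 \;\gtrsim\; B^2\,\sqrt d\,\log(1/\delta)/(\alpha\eps^2)$. Plugging in $B^2 = O\bigl(R^2 + \kappa^2(d + \log(n/\alpha))\bigr)$ together with $n_1 = \Theta(\sqrt{n_2})$, and simplifying $\log(n/\alpha) = O(\log(d/(\alpha\eps)) + \log\log(1/\delta))$, reproduces the claimed total sample complexity. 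I expect the delicate step to be precisely the covariance-mismatch bound: adding \emph{spherical} noise $\sigma^2 I$ to a sample that should have covariance $\Sigma$ is affordable only because $\Sigma \succeq I$ keeps $\|\Sigma^{-1}\|_F \le \sqrt d$ regardless of $\kappa$; without this hypothesis the Frobenius bound would blow up and destroy the $\alpha$-accuracy. A secondary, routine concern is verifying that the truncation of individual $X_i$'s distorts the exact Gaussianity of each $U_k$ only with probability $\le \alpha/2$, which is handled by the same tail bound that calibrates $B$.
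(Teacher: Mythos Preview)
Your proposal is correct and follows essentially the same approach as the paper's proof sketch: view \Cref{alg:bounded-cov-gaussian} as a Gaussian mechanism, couple with the untruncated version for accuracy, and bound the residual covariance mismatch via $\|\sigma^2 I\|_\Sigma = \sigma^2\|\Sigma^{-1}\|_F \le \sigma^2\sqrt d$ using $\Sigma \succeq I$. The only cosmetic difference is that the paper simply takes $n_1 = n_2$ (so the pair-difference sensitivity $O(B/\sqrt{n_2})$ genuinely dominates the mean sensitivity $O(B/n_2)$), whereas your choice $n_1 = \Theta(\sqrt{n_2})$ balances the two contributions rather than making one dominant; either way the overall sensitivity is $O(B/\sqrt{n_2})$ and the total sample count is $\Theta(n_2)$, so the final bound is unchanged.
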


\begin{algorithm}
\caption{{\sc BoundedCovarianceGaussianSampler}}
\label{alg:bounded-cov-gaussian}
\textbf{Parameters: } $B, \sigma > 0$, and $n_1, n_2 \in \N$. \\
Sample $X_1, \dots, X_{n_1}, X_{n_1 + 1}, \dots, X_{n_1 + 2n_2} \sim D$ \\
\For{$i = 1, \dots, n_1 + 2n_2$}{
$X_i = \trun^2_B(X_i)$\;
}
Sample $Z \sim \cN(0, \sigma^2 I)$ \\
\Return $\frac{1}{n_1} \left(\sum_{i \in [n_1]} X_i\right) + \sqrt{\frac{1 - 1/n_1}{2n_2}}\left(\sum_{i \in [n_2]} (X_{n_1 + 2i - 1} - X_{n_1 + 2i})\right) + Z$\;
\end{algorithm}

\begin{proof}[Proof Sketch of Theorem~\ref{thm:gaussian-bounded-cov}]
Let $B = R + 10^4 \kappa \sqrt{d \left(\log\left(\frac{2d}{\alpha\eps}\right) + \log\log(2/\delta)\right)}, \sigma = \frac{\sqrt{\alpha}}{2d^{1/4}}$ and $n_1, n_2$ be such that $n_1 = n_2 \geq \frac{100 B^2 \log(2/\delta)}{\sigma^2 \eps^2}$. Let $\A$ denote Algorithm~\ref{alg:bounded-cov-gaussian} with parameters $B, n, \sigma$ as specified.

\paragraph{Privacy Analysis.}
$\A$ is the Gaussian mechanism with noise multiplier $\frac{\sigma}{B\cdot\sqrt{\frac{1 - 1/n_1}{2n_2}}} \geq 10 \sqrt{\log(2/\delta)}/\eps$ by the setting of our parameters; therefore, $\A$ is $(\eps, \delta)$-DP.

\paragraph{Accuracy Analysis.}
To understand the accuracy guarantee, let $D = \cN(\mu, \Sigma)$ for some unknown $\mu, \Sigma$ where $I \preceq \Sigma \preceq \kappa \cdot I$. Let us consider the algorithm $\A'$ where there is no truncation. Again, a standard concentration bound implies that the truncation is not applied anyway in $\A$ with probability at least $1 - \alpha/2$. Therefore, we have
\begin{align*}
d_{\TV}(Q_{\A, P}, Q_{\A', P}) \leq \alpha/2.
\end{align*}
Now, notice that in algorithm $\A'$, we just output $$Y := \frac{1}{n_1} \left(\sum_{i \in [n_1]} X_i\right) + \sqrt{\frac{1 - 1/n_1}{2n_2}}\left(\sum_{i \in [n_2]} (X_{n_1 + 2i - 1} - X_{n_1 + 2i})\right) + Z.$$ Without truncation, this output $Y$ has identical distribution as $\cN(\mu, \Sigma + \sigma^2 I)$. In other words, we have
\begin{align*}
d_{\TV}(Q_{\A', P}, \cN(\mu, \Sigma))
&= d_{\TV}(\cN(\mu, \Sigma + \sigma^2 I), \cN(\mu, \Sigma)) \leq \|\sigma^2 I\|_{\Sigma} \leq \|\sigma^2 I\|_F = \sigma^2 \cdot \sqrt{d} \leq \alpha/2,
\end{align*}
where the first inequality follows from Lemma~\ref{lem:tv-mah} and the last inequality follows from our parameter selection. Combining the above two inequalities, we can conclude that the sampler is $\alpha$-accurate as desired.
\end{proof}

\section{Gaussian Distributions: Lower Bounds}

\subsection{Known Covariance}

In this section, we prove a lower bound of $\Omega(\sqrt{d}/\eps)$ that holds even for the simplest case of known covariance (\Cref{thm:gaussian-known-cov-dim-lb}), showing that the dependence on $d$ in our sampler (Theorem~\ref{thm:gaussian-known-cov}) is nearly optimal. 

We prove this by using a DP sampler to draw a large, but constant, number of samples and the use them to perform mean estimation; the lower bound for mean estimation (\Cref{thm:fingerprinting-gaussian}) then gives the desired lower bound for the sample complexity of DP sampler.

\begin{proof}[Proof of Theorem \ref{thm:gaussian-known-cov-dim-lb}]
Given an $\alpha$-accurate $(\eps, \delta)$-DP sampler $\A$, we construct an algorithm $\cM$ for mean estimation as follows:
\begin{itemize}[leftmargin=*]
\item For $i = 1, \dots, 10^6$:
\begin{itemize}[leftmargin=*]
\item Run $\A$ on $n$ fresh samples from $D$ to get $Y_i$.
\end{itemize}
\item Output $\hmu \in \bR^d$ with $\hmu[j] := \clip_{-1, 1}(\textrm{median}(Y_1[j], \dots, Y_{10^6}[j]))$
\end{itemize}

Since $\A$ is $\alpha$-accurate, each $Y_i$ is sampled from a distribution $D'$ that is $\alpha$-close (in total variation distance) to $D = \cN(\mu, I)$. For $\alpha = 0.1$, this means that 
\begin{align*}
\Pr[Y_i(j) \leq \mu - 0.3] \leq \Phi(-0.3) + \alpha \leq 0.49,
\end{align*}
and similarly,
\begin{align*}
\Pr[Y_i(j) \geq \mu + 0.3] \leq (1 - \Phi(0.3)) + \alpha \leq 0.49.
\end{align*}
As a result, standard concentration bounds (e.g., \Cref{thm:bernstein}) imply that
\begin{align*}
\Pr[\textrm{median}(Y_1(j), \dots, Y_{10^6}(j)) \in [\mu - 0.3, \mu + 0.3]] > 0.99.
\end{align*}
This in turn implies that
\begin{align*}
\E_{\bX \sim D^N, \hmu \sim \cM(\bX)}[\|\hmu - \mu\|_2^2] \leq d \cdot \left(0.01 \cdot 2^2 + 0.99 \cdot 0.3^2\right) \leq d/6.
\end{align*}

Applying Theorem \ref{thm:fingerprinting-gaussian} with $\gamma = \sqrt{d/6}$, we can conclude that the sample complexity of $\cM$ (which is equal to $10^6n$) must be at least $\Omega\left(\frac{d}{\eps\sqrt{d/6}}\right) = \Omega(\sqrt{d}/\eps)$. Thus, we must have $n \geq \Omega(\sqrt{d} / \eps)$ as claimed.
\end{proof}

\subsection{Unknown Bounded Covariance}

Next, we will prove the lower bound for the unknown bounded covariance case (\Cref{thm:lb-sampler-bounded-covariance}). 

\subsubsection{Reduction to Covariance Estimation}

%\begin{wrapfigure}{R}{0.58\textwidth}
%\centering
%\vspace*{-8mm}
%\begin{minipage}{0.58\textwidth}
\begin{algorithm}[H]
\caption{{\sc CovarianceEstimator}}
\label{alg:covariance-estimation}
\textbf{Parameters: } $N, n, L \in \N$, sampler $\A_{\sampler}$, agnostic learner $\A_{\learner}$ for centered Gaussians \\
Sample $X_1, \dots, X_N \sim P$\\
\For{$\ell = 1, \dots, L$} {
Randomly draw $n$ subsamples $X_{i^\ell_1}, \dots, X_{i^\ell_n}$ without replacement from $X_1, \dots, X_N$\\
$Y_\ell \gets \A_{\sampler}(X_{i^\ell_1}, \dots, X_{i^\ell_n})$
}
$\cN(0, \hSigma) \gets \A_{\learner}(Y_1, \dots, Y_L)$\\
\eIf{$0.5 I \preceq \hSigma \preceq 2.5 I$}{
$\hSigma' \gets \hSigma$
}{
$\hSigma' \gets I$
}
\Return $\hSigma'$
\end{algorithm}
%\end{minipage}
%\vspace*{-8mm}
%\end{wrapfigure}

As stated earlier in \Cref{sec:overview}, we will prove this lower bound by reducing to covariance estimation (\Cref{thm:lb-covariance-est}). This is done by first taking $N \gg n$ samples, and then generating each $Y_\ell$ by subsampling the input to $n$ samples and running our DP sampler. These $Y_\ell$'s are then feed into the agnostic learner to produce an estimate $\hSigma$ for $\Sigma$. The full reduction is given in \Cref{alg:covariance-estimation}. Note that here $\A_{\sampler}$ will be the $(\eps, \delta)$-DP sampler and $\A_{\learner}$ will be the learner from \Cref{thm:agnostic-learner-gaussian}. 

\paragraph{Privacy Analysis.} The privacy guarantee of \Cref{alg:covariance-estimation} follows easily from the amplification by subsampling and advanced composition of DP. This is formalized below.

\begin{lemma} \label{lem:cov-est-dp}
For any $\eps^* \in (0, 1], \delta^* \in (0, 1)$, if $\A_{\sampler}$ is $(\eps, \delta)$-DP for $\eps \leq \min\left\{1, \frac{\eps^* N}{4n\sqrt{L \ln(2/\delta^*)}}\right\}, \delta \leq \left(\frac{0.5 N}{L n}\right)\delta^*$, then \Cref{alg:covariance-estimation} is $(\eps^*, \delta^*)$-DP.
\end{lemma}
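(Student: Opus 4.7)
The plan is to establish the privacy guarantee by combining amplification by subsampling (\Cref{thm:amp-by-sampling}) with advanced composition (\Cref{thm:advanced-composition}), and then invoking the post-processing property (\Cref{lem:post-processing}) for the agnostic-learning and thresholding steps at the end.

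First, I would view each iteration $\ell$ as a mechanism $\cM_\ell$ that takes the full dataset $(X_1,\dots,X_N)$, draws a fresh uniformly random $n$-subset without replacement, and feeds it to $\A_{\sampler}$. Since $\A_{\sampler}$ is $(\eps,\delta)$-DP with $\eps\le 1$, amplification by subsampling together with the elementary inequalities $\ln(1+x)\le x$ and $e^\eps-1\le 2\eps$ (valid for $\eps\in[0,1]$) implies that each $\cM_\ell$ is $(\eps',\delta')$-DP with
\[
\eps' \;\le\; \tfrac{n}{N}(e^\eps-1) \;\le\; \tfrac{2n\eps}{N}, \qquad \delta' \;=\; \tfrac{n}{N}\delta.
\]
Substituting the hypothesized bounds on $\eps$ and $\delta$ gives the clean estimates
\[
\eps' \;\le\; \frac{\eps^*}{2\sqrt{L\ln(2/\delta^*)}}, \qquad L\delta' \;\le\; \tfrac{\delta^*}{2}.
\]

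Next, the map $(X_1,\dots,X_N)\mapsto(Y_1,\dots,Y_L)$ is exactly the composition of the $L$ mechanisms $\cM_1,\dots,\cM_L$ (all acting on the same dataset with independent internal randomness). Choosing the slack parameter $\delta_1 := \delta^*/2$ in advanced composition bounds its privacy loss by $(\eps_1,\, L\delta'+\delta^*/2)$-DP with
\[
\eps_1 \;\le\; \eps'\sqrt{2L\ln(2/\delta^*)} \;+\; L(e^{\eps'}-1)\,\eps' \;\le\; \eps'\sqrt{2L\ln(2/\delta^*)} \;+\; 2L(\eps')^2,
\]
where the last step uses $\eps'\le 1$. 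Plugging in the bound on $\eps'$, the first summand is at most $\eps^*/\sqrt{2}$ and the second is at most $(\eps^*)^2/(2\ln(2/\delta^*))$; using $\eps^*\le 1$ and $\delta^*\le 1/2$ both summands combine to at most $\eps^*$ (absorbing the remaining slack into the constant factor $4$ in the hypothesis on $\eps$, which was chosen precisely to leave room for this). Similarly, $L\delta'+\delta^*/2\le\delta^*$ by our bound on $\delta$.

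Finally, the agnostic learner $\A_{\learner}$ and the eigenvalue-based thresholding that outputs either $\hSigma$ or $I$ are (possibly randomized) functions of $(Y_1,\dots,Y_L)$ alone, so by \Cref{lem:post-processing} \Cref{alg:covariance-estimation} inherits the $(\eps^*,\delta^*)$-DP guarantee. The main obstacle is purely a constant-tracking one: one must verify that $\eps'$ really does stay below the $1$-threshold needed to invoke both theorems, and that the sum of the $\sqrt{L\ln(1/\delta_1)}\cdot\eps'$ term and the quadratic $L(\eps')^2$ term in advanced composition fit within the budget $\eps^*$. The conservative factors ($4$ in the $\eps$ bound, $0.5$ in the $\delta$ bound) are exactly there to absorb these two sources of loss.
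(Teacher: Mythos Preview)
Your proposal is correct and follows essentially the same approach as the paper: amplification by subsampling for each $Y_\ell$, advanced composition with $\delta_1=\delta^*/2$ across the $L$ calls, and post-processing for the final learner and thresholding. The only minor difference is that the paper exploits the exact identity $e^{\eps'}-1=(n/N)(e^\eps-1)$ (which follows directly from $\eps'=\ln(1+(n/N)(e^\eps-1))$) rather than first bounding $\eps'$ and then using $e^{\eps'}-1\le 2\eps'$; this slightly tightens the quadratic term in the composition bound, but your acknowledgment that the factor $4$ in the hypothesis absorbs the slack is adequate.
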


\begin{proof}
First, we apply \Cref{thm:amp-by-sampling} which means that computing a single $Y_\ell$ is 
$(\eps', \delta')$-DP where
\begin{align*}
\eps' = \ln(1 + (n/N)(e^\eps - 1)), & &
\delta' = (n/N) \cdot \delta.
\end{align*}

Then, applying \Cref{thm:advanced-composition} with $\delta_1 = 0.5\delta^*$ implies that all $(Y_1, \dots, Y_L)$ together is $(\eps_1, L\delta' + \delta_1)$-DP where
\begin{align*}
\eps_1 = \left(\sqrt{2 L \ln(1/\delta_1)} + L(e^{\eps'} - 1)\right)\eps'.
\end{align*}
Combining the above expressions, we have
\begin{align*}
\eps_1 
&= \left(\sqrt{2 L \ln(2/\delta^*)} + L(n/N)(e^\eps - 1)\right) \ln(1 + (n/N)(e^\eps - 1)) \\
(\text{from } \eps \leq 1)&\leq \left(\sqrt{2 L \ln(2/\delta^*)} + L(n/N)(2\eps)\right) \ln(1 + (n/N)(2\eps)) \\
&\leq \left(\sqrt{2 L \ln(2/\delta^*)} + L(n/N)(2\eps)\right) \cdot \left((n/N)(2\eps)\right) \\
\left(\text{from } \eps \leq \frac{\eps^* N}{4n\sqrt{L \ln(2/\delta^*)}}\right) &\leq \eps^*,
\end{align*}
and
%\begin{align*}
$L\delta' + \delta_1 \leq \delta^*/2 + \delta^*/2 \leq \delta^*.$
%\end{align*}
In other words, $(Y_1, \dots, Y_L)$ together is $(\eps^*, \delta^*)$-DP. Finally, applying $\A_{\learner}$ on $(Y_1, \dots, Y_L)$ is simply a post-processing step. Thus, the entire algorithm is $(\eps^*,\delta^*)$-DP as desired.
\end{proof}

\subsubsection{Accuracy Analysis}

Next, we argue that the accuracy of the covariance estimation algorithm, assuming the accuracy of the sampler and the agnostic learner.

\begin{lemma} \label{lem:acc-covariance-est}
Let $\xi \in (0, 0.01]$ and $n, N, L \in \N$ be such that $N \geq \left(\frac{10 nLd}{\xi}\right)^2$, and suppose that $I \preceq \Sigma \preceq 2I$. Furthermore, suppose that
\begin{itemize}[leftmargin=*]
\item $\A_{\sampler}$ is an $\left(\frac{\xi}{10C}\right)$-accurate sampler for the class of Gaussians under the assumption $I \preceq \Sigma \preceq 2I$, and
\item $\A_{\learner}$ is an $\left(\frac{\xi}{10C}, \frac{\xi^2}{200 d^2}\right)$-accurate agnostic learner for the class of centered Gaussians,
\end{itemize}
where $C$ is the constant in \Cref{thm:close-tv-close-sigma}.
Then, $\bE[\|\hSigma' - \Sigma\|_{\Sigma}^2] \leq \xi^2$ where $\hSigma'$ denotes the output of \Cref{alg:bounded-cov-gaussian}.
\end{lemma}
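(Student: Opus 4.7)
The plan is to couple the correlated outputs $Y_1, \dots, Y_L$ of the subsampling loop to a genuinely i.i.d. sequence, apply the agnostic learner's guarantee to that sequence, and convert the resulting total-variation bound to a Mahalanobis--Frobenius bound via \Cref{thm:close-tv-close-sigma}.

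For the coupling step, let $\cE$ be the event that the $L$ random index sets drawn by the algorithm are pairwise disjoint. A union bound over the $\binom{L}{2}$ pairs of batches gives $\Pr[\cE^c] \leq L^2 n^2 / N \leq \xi^2/(100 d^2)$, where the final inequality uses $N \geq (10 n L d/\xi)^2$. Conditioned on $\cE$, the $nL$ selected samples are themselves $nL$ i.i.d. draws from $P = \cN(0, \Sigma)$ (since $X_1, \dots, X_N$ are i.i.d. and the chosen indices are distinct), so on $\cE$ the $Y_\ell$'s are independent across $\ell$ and share a common marginal distribution $P^*$ equal to the law of $\A_{\sampler}$ on $n$ fresh i.i.d. samples from $P$. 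Sampler accuracy then gives $d_{\TV}(P^*, \cN(0, \Sigma)) \leq \xi/(10C)$.

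Next I would invoke the agnostic learner $\A_{\learner}$ on the (conditionally i.i.d.) $Y_1, \dots, Y_L$ with $\alpha = \xi/(10C)$ and $\beta = \xi^2/(200 d^2)$. With probability at least $1 - \beta$ given $\cE$, the output $\hSigma$ satisfies $d_{\TV}(\cN(0, \hSigma), P^*) \leq 3 \cdot d_{\TV}(\cN(0, \Sigma), P^*) + \alpha \leq 4\xi/(10C)$, which combined with the triangle inequality yields $d_{\TV}(\cN(0, \hSigma), \cN(0, \Sigma)) \leq \xi/(2C)$. For $\xi \leq 0.01$ this falls below the threshold in \Cref{thm:close-tv-close-sigma}, giving $\|\hSigma - \Sigma\|_\Sigma \leq \xi/2$. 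Since $I \preceq \Sigma \preceq 2 I$, this Frobenius (hence operator-norm) bound forces roughly $0.995\, I \preceq \hSigma \preceq 2.01\, I$, so the algorithm's eigenvalue test automatically passes and $\hSigma' = \hSigma$, yielding $\|\hSigma' - \Sigma\|_\Sigma^2 \leq \xi^2/4$ on this ``good'' event.

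Finally I would control the contribution from the ``bad'' event $\cE^c \cup \{\text{learner fails}\}$, which has total probability at most $\xi^2/(100 d^2) + \xi^2/(200 d^2) \leq \xi^2/(50 d^2)$. In any bad case, $\hSigma'$ is either $I$, for which $\|I - \Sigma\|_\Sigma^2 = \|\Sigma^{-1} - I\|_F^2 \leq d/4$ using $I \preceq \Sigma \preceq 2 I$, or some $\hSigma$ with $0.5 I \preceq \hSigma \preceq 2.5 I$, for which a direct eigenvalue calculation (exploiting $\lambda_{\max}(\Sigma^{-1/2}\hSigma\Sigma^{-1/2}) \leq 2.5$ and $\lambda_{\min}(\Sigma^{-1/2}\hSigma\Sigma^{-1/2}) \geq 0.25$) gives $\|\hSigma - \Sigma\|_\Sigma^2 \leq O(d)$. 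Either way, the bad-event contribution to $\bE[\|\hSigma' - \Sigma\|_\Sigma^2]$ is at most $O(d) \cdot \xi^2/(50 d^2) = O(\xi^2/d)$, and summing the two pieces gives $\bE[\|\hSigma' - \Sigma\|_\Sigma^2] \leq \xi^2$. The main subtlety is establishing the conditional-i.i.d. structure on $\cE$ that legitimizes applying the agnostic learner to a single distribution, together with careful constant-tracking so that the TV-to-Mahalanobis conversion yields an eigenvalue gap tight enough for the algorithm's check to succeed automatically whenever we are on the good event.
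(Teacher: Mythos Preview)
Your proposal is correct and follows essentially the same route as the paper: define the disjointness event for the subsampled index sets, bound its failure probability via a union bound and the hypothesis $N \geq (10 n L d/\xi)^2$, observe that conditional on disjointness the $Y_\ell$'s are i.i.d.\ from $Q_{\A_{\sampler}, D}$, invoke the agnostic learner plus the triangle inequality and \Cref{thm:close-tv-close-sigma} to get $\|\hSigma - \Sigma\|_\Sigma \leq \xi/2$ on the good event, and finish by bounding the bad-event contribution using the deterministic $O(d)$ bound on $\|\hSigma' - \Sigma\|_\Sigma^2$ that comes from the eigenvalue check. Your write-up is in fact slightly more explicit than the paper's in spelling out why the eigenvalue check passes automatically on the good event and in separating the two bad cases $\hSigma' = I$ versus $0.5 I \preceq \hSigma' \preceq 2.5 I$.
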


%\footnote{Note that it also suffices to have $\A_{\learner}$ be accurate only for Gaussians with covariance satisfying $I \preceq \Sigma \preceq 2I$. However, the bound in \Cref{thm:learning-gaussian} works anyway without this assumption so we do not state this restruction above.}

\begin{proof}
Let $D = \cN(0, \Sigma)$ denote the underlying distribution and for notational convenience, let $\zeta := \left(\frac{\xi}{10C}\right)$. Let $\cE_{\disj}$ denote the event that $i^1_1, \dots, i^L_n$ are all different. Note that we have
\begin{align*}
\Pr[\neg \cE_{\disj}] &\leq \sum_{\ell \ne \ell' \in [L], j, j' \in [n]} \Pr[i^\ell_j = i^{\ell'}_{j'}] = \sum_{\ell \ne \ell' \in [L], j, j' \in [n]} \frac{1}{N^2} \leq \frac{L^2 n^2}{2 N} \leq \frac{\xi^2}{200d^2},
\end{align*}
where the last inequality follows from our choice of parameters.

Next, suppose that $\cE_{\disj}$ holds. Then, we have that $Y_1, \dots, Y_L$ are independently sampled from $Q_{\A_{\sampler}, D}$. Applying the agnostic learning guarantee of $\A_{\learner}$, with probability $1 - \frac{\xi^2}{200 d^2}$, we get 
\begin{align*}
d_{\TV}(\cN(0, \hSigma), Q_{\A_{\sampler}, D}) \leq 3 \cdot d_{\TV}(\cN(0, \Sigma), Q_{\A_{\sampler}, D}) + \zeta.
\end{align*}
Recall also from the accuracy guarantee of the sampler that $d_{\TV}(\cN(0, \Sigma), Q_{\A_{\sampler}, D}) \leq \zeta$. 

Combining all of these together we have
\begin{align*}
\Pr[d_{\TV}(\cN(0, \hSigma), Q_{\A_{\sampler}, D}) > 4\zeta] \leq 1 - \frac{\xi^2}{100d^2}.
\end{align*}
Applying \Cref{thm:close-tv-close-sigma}, we get
\begin{align*}
\Pr[\|\hSigma - \Sigma\|_{\Sigma} > \xi / 2] \leq 1 - \frac{\xi^2}{100d^2}.
\end{align*}

Notice that if $\|\hSigma - \Sigma\|_{\Sigma} \leq \xi/2$, then we have $\hSigma' = \hSigma$. Furthermore, $I \preceq \Sigma \preceq 2I$ and $0.5I \preceq \hSigma' \preceq 2.5 I$ imply that $\|\hSigma' - \Sigma\|_{\Sigma} \leq 6d$. Thus, we have
\begin{align*}
\bE[\|\hSigma' - \Sigma\|^2_{\Sigma}] &\leq (\xi/2)^2 + (6d)^2 \cdot \Pr[\|\hSigma - \Sigma\|_{\Sigma} > \xi/2] \leq (\xi/2)^2 + (6d)^2 \cdot  \frac{\xi^2}{100d^2} \leq \xi^2. \qedhere
\end{align*}
\end{proof}

\subsubsection{Putting Things Together}

Combing the privacy and accuracy guarantees and plugging in the appropriate parameters immediately yields \Cref{thm:lb-sampler-bounded-covariance}.

\begin{proof}[Proof of \Cref{thm:lb-sampler-bounded-covariance}]
Let $\alpha = \xi / (10 C)$ where $\xi$ is as in \Cref{thm:lb-covariance-est} and $C$ is as in \Cref{thm:close-tv-close-sigma}.
Suppose for the sake of contradiction that there exists an $(\eps, \delta)$-DP $\alpha$-accurate sampler with sample complexity $n = o\left(\frac{d}{\eps\sqrt{\log d}}\right)$ under the assumption $I \preceq \Sigma \preceq 2I$. Then, let us select the parameters as follows:
\begin{itemize}[leftmargin=*]
\item $L \leq O\left(d^2\right)$ denote the sample complexity of the $\left(\frac{\xi}{10C}, \frac{\xi^2}{200 d^2}\right)$-accurate agnostic learner for the class of centered Gaussians as given by \Cref{thm:agnostic-learner-gaussian}.
\item $N = \left\lceil \left(\frac{10 nLd}{\xi}\right)^2\right\rceil \leq O\left(n^2d^2\log d\right)$.
\item $\delta^* = O\left(\min\left\{1/N, d^2/(N \log N)\right\}\right)$. Note that, when the constant in big-O notation is sufficiently large, our choice of parameters implies that $\delta \leq \left(\frac{0.5 N}{L n}\right)\delta^*$.
\item $\eps^* = \frac{4n\eps\sqrt{L\ln(2/\delta^*)}}{N} = o(d^2/N)$.
\end{itemize}
By \Cref{lem:acc-covariance-est}, we have that $\E[\|\hSigma' - \Sigma\|^2_\Sigma] \leq \xi^2$. Furthermore, by \Cref{lem:cov-est-dp}, we have that the algorithm \textsc{CovarianceEstimator} is $(\eps^*, \delta^*)$-DP. However, we also have $N = o(d^2/\eps^*)$, which contradicts \Cref{thm:lb-covariance-est}.
\end{proof}

\section{Product Distributions on $\{0, 1\}^d$}
\label{appendix:prod}

%\xiao{Made a pass by Xiao. Other parts are done, except the last accuracy analysis.}

In this section, we describe and analyze our sampler for product distributions on $\{0, 1\}^d$ (Theorem \ref{thm:prod-sampler}). We may assume w.l.o.g. that $p_1, \dots, p_d \leq 3/4$. (Otherwise, we first privately estimate $p_i$ using, e.g., the Laplace mechanism, and flip the $i$th bit in all samples if the estimate is more than $3/4$.) 

Given a dataset element $X_i \in \{0, 1\}^d$, we 
use $X_i
[j]$ to denote its $j$th coordinate, and $X_i
[S] = (X_i
[j])_{j\in S}$
to denote the vector $X_i$ restricted to the subset $S \subseteq [d]$ of coordinates. 

\subsection{Additional Preliminaries}

Here we list a few concentration inequalities that are useful. We start with the following version of the Bernstein inequality~\citep[see, e.g.,][Theorem 2.8.4]{vershynin2018}:

\begin{theorem}[Bernstein's inequality] \label{thm:bernstein}
Let $Y_1, \dots, Y_n$ be independent real-valued random variables such that, with probability 1, $Y_i \in [0, C]$ for all $i \in [n]$. Let $V = \sum_{i \in [n]} \var(Y_i)$. Then, for any $\Delta \geq 0$, we have
\begin{align*}
\Pr\left[\left|\sum_{i \in [n]} Y_i - \sum_{i \in [n]} \bE[Y_i]\right| > \Delta \right] \leq 2\exp\left(\frac{-\Delta^2}{2V + C \Delta}\right).
\end{align*}
\end{theorem}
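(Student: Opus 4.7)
The plan is to use the classical exponential-moment (Chernoff) method, applied symmetrically to both tails. First I would reduce to a one-sided bound: writing $S := \sum_{i=1}^n (Y_i - \E[Y_i])$, it suffices to show $\Pr[S > \Delta] \le \exp(-\Delta^2/(2V+C\Delta))$, since applying the same argument to the variables $-Y_i$ (which also have range of length $C$ and the same variances) yields the matching bound on $\Pr[-S > \Delta]$, and a union bound gives the factor of $2$ in front.

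To bound the one-sided tail, I would apply Markov's inequality to $e^{tS}$ for an arbitrary $t \in (0, 3/C)$ to get $\Pr[S > \Delta] \le e^{-t\Delta} \prod_{i} \E[e^{t Z_i}]$, where $Z_i := Y_i - \E[Y_i]$ satisfies $|Z_i| \le C$ and $\E[Z_i] = 0$. The key step, and the main technical point of the argument, is to bound each MGF in terms of $\var(Y_i)$. I would expand $e^{tZ_i}$ as a power series, observe that the linear term vanishes in expectation, bound $\E[|Z_i|^k] \le C^{k-2}\var(Y_i)$ for $k \ge 2$, and use the inequality $k! \ge 2 \cdot 3^{k-2}$ to sum the resulting geometric-like series. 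This yields
\[
\E[e^{tZ_i}] \le 1 + \frac{t^2 \var(Y_i)/2}{1 - tC/3} \le \exp\!\left(\frac{t^2 \var(Y_i)/2}{1 - tC/3}\right).
\]

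Taking the product over $i$ and combining with Markov gives $\Pr[S > \Delta] \le \exp\!\left(\tfrac{t^2 V/2}{1-tC/3} - t\Delta\right)$. I would then optimize by choosing $t = \Delta/(V + C\Delta/3)$, which is easily checked to lie in $(0, 3/C)$; this choice reduces the exponent to $-\Delta^2/(2V + 2C\Delta/3)$. Since $2V + 2C\Delta/3 \le 2V + C\Delta$, this is at most $-\Delta^2/(2V + C\Delta)$, matching the statement (and in fact being slightly stronger, so the paper's form follows immediately).

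The main obstacle is the moment-generating-function estimate, specifically getting the denominator $1 - tC/3$ right so that the optimization lands on the advertised form. Everything else — the reduction to one side, Markov, optimization in $t$ — is routine bookkeeping once the MGF bound is in hand. The factorial lower bound $k! \ge 2 \cdot 3^{k-2}$ (equivalent to $k!/2 \ge 3^{k-2}$, which holds for all $k \ge 2$ by induction) is the only non-mechanical ingredient.
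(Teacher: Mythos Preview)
Your argument is correct and is the standard Chernoff--Bernstein proof; the MGF bound, the choice $t = \Delta/(V + C\Delta/3)$, and the resulting exponent $-\Delta^2/(2V + 2C\Delta/3) \le -\Delta^2/(2V + C\Delta)$ all check out (with the degenerate case $V = 0$ handled trivially since then $S = 0$ almost surely). Note, however, that the paper does not supply its own proof of this statement: it is quoted as a preliminary fact with a citation to \cite[Theorem 2.8.4]{vershynin2018}, so there is no ``paper's proof'' to compare against --- your derivation is essentially the textbook one referenced there.
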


We now list a couple of versions of this inequality, which will be more useful in our setting.

\begin{corollary} \label{cor:mean-concen} 
Let $P$ be a product distribution on $\{0,1\}^d$.
For any $\beta, \gamma \in (0, \frac{1}{2})$, let $n > \frac{50}{\gamma}\log \frac{d}{\beta}$ and $X_1, \dots, X_n \sim P$. Let $\oX := \frac{1}{n} \sum_{i \in [n]} X_i$. Then, we have
\begin{align*}
\Pr\left[\forall j \in [d], \oX[j] \in [0.9p_j - \gamma, 1.1p_j + \gamma]\right] \geq 1 - \beta.
\end{align*}
\end{corollary}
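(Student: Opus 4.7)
The plan is to apply Bernstein's inequality (\Cref{thm:bernstein}) coordinate-by-coordinate and then take a union bound over the $d$ coordinates. For each fixed $j \in [d]$, the random variables $X_1[j], \dots, X_n[j]$ are i.i.d.\ Bernoulli with mean $p_j$, lying in $[0,1]$, and with variance $p_j(1-p_j) \leq p_j$. Applied to $Y_i = X_i[j]$ with $C = 1$ and $V \leq n p_j$, and with deviation $\Delta = n(0.1 p_j + \gamma)$, Bernstein yields
\begin{align*}
\Pr[\,|\oX[j] - p_j| > 0.1 p_j + \gamma\,] \;\leq\; 2\exp\!\left(\frac{-n(0.1 p_j + \gamma)^2}{2.1 p_j + \gamma}\right).
\end{align*}
Note that $\oX[j] - p_j \in [-(0.1 p_j + \gamma),\, 0.1 p_j + \gamma]$ is exactly the condition $\oX[j] \in [0.9 p_j - \gamma,\, 1.1 p_j + \gamma]$, so bounding this event suffices.

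The second step is to check that the exponent is uniformly at least $\Omega(n\gamma)$, independent of $p_j$. I would split into two regimes. When $p_j \leq 10\gamma$, we bound the numerator below by $n\gamma^2$ and the denominator above by $22\gamma$, giving an exponent of at least $n\gamma/22$. When $p_j > 10\gamma$, we bound the numerator below by $0.01 n p_j^2$ and the denominator above by $2.2 p_j$, giving an exponent of at least $n p_j / 220 \geq n\gamma/22$. Thus in either case the tail bound is at most $2 \exp(-n\gamma/22)$.

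Finally, I would union bound over $j \in [d]$ to bound the failure probability by $2d \exp(-n\gamma/22)$. The hypothesis $n > \frac{50}{\gamma} \log(d/\beta)$ comfortably implies $n\gamma/22 \geq \log(2d/\beta)$ (using $\log(2d/\beta) \leq 2\log(d/\beta)$ for $d/\beta \geq 2$), so the failure probability is at most $\beta$, which completes the proof.

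I do not anticipate a real obstacle; the only slightly delicate point is handling both the small-$p_j$ regime (where the additive error $\gamma$ dominates) and the large-$p_j$ regime (where the relative error $0.1 p_j$ dominates) in a single Bernstein bound, which is exactly why the claim is stated with the mixed additive/multiplicative guarantee $0.9 p_j - \gamma \leq \oX[j] \leq 1.1 p_j + \gamma$ rather than a purely additive one.
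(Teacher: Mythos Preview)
Your proof is correct and follows essentially the same approach as the paper: apply Bernstein's inequality coordinatewise, bound the exponent by $\Omega(n\gamma)$ via a split into small-$p_j$ and large-$p_j$ regimes, and finish with a union bound. The only cosmetic difference is that the paper sets the deviation to $\Delta = n\max\{\gamma,\,0.1 p_j\}$ (which directly encodes the two regimes) whereas you take $\Delta = n(0.1 p_j + \gamma)$ and then split cases afterward; the resulting constants ($21$ versus $22$) are nearly identical.
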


\begin{proof}
For each $j \in [d]$, applying \Cref{thm:bernstein} with $\Delta = \max\{\gamma, 0.1p_j\} \cdot n$ and using $p_j \leq 3/4$, we obtain
\begin{align*}
&\Pr\left[|\oX[j] - p_j| \leq \max\{\gamma, 0.1p_j\}\right] \leq 2\exp\left(\frac{-\Delta^2}{2 n p_j + \Delta}\right) \leq 2\exp\left(-\frac{\Delta}{21}\right) %\leq 2\exp(-\frac{\gamma n}{21}) 
\leq \frac{\beta}{d}.
\end{align*}
Taking a union bound over all $j \in [d]$ completes the proof.
\end{proof}

\begin{corollary} \label{cor:sens-concen}
Let $P$ be a product distribution on $\{0,1\}^d$.
For any $\beta, \gamma \in (0, \frac{1}{2})$, let $n > \frac{50}{\gamma}\log \frac{d}{\beta}$ and $X_1, \dots, X_n \sim P$. Then, we have
\begin{align*}
\Pr\left[\forall i \in [n], \sum_{j \in [d]} \frac{X_{i}[j]}{\max\{p_j, \gamma\}}  \leq 3d + \frac{4}{\gamma}\log \frac{n}{\beta}\right] \geq 1 - \beta.
\end{align*}
\end{corollary}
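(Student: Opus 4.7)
The target is a uniform bound across all $n$ samples on the weighted sum $Y_i := \sum_{j \in [d]} X_i[j] / \max\{p_j, \gamma\}$, so the natural approach is Bernstein on a single $i$ followed by a union bound. The plan is to bound (i) the mean, (ii) the variance, and (iii) the per-coordinate magnitude, and then feed them into Theorem \ref{thm:bernstein}.

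For a fixed $i \in [n]$, each summand $\frac{X_i[j]}{\max\{p_j,\gamma\}}$ is a $\{0, 1/\max\{p_j,\gamma\}\}$-valued random variable, so lies in $[0, C]$ with $C = 1/\gamma$. The mean of the sum satisfies
\[
\bE[Y_i] \;=\; \sum_{j \in [d]} \frac{p_j}{\max\{p_j,\gamma\}} \;\leq\; d,
\]
since every ratio in the sum is at most $1$. For the variance, using $\mathrm{Var} \leq \bE[\,\cdot\,^2]$ and $\frac{p_j}{\max\{p_j,\gamma\}} \leq 1$,
\[
V \;:=\; \sum_{j \in [d]} \mathrm{Var}\!\left(\tfrac{X_i[j]}{\max\{p_j,\gamma\}}\right) \;\leq\; \sum_{j \in [d]} \frac{p_j}{\max\{p_j,\gamma\}^2} \;\leq\; \sum_{j \in [d]} \frac{1}{\gamma} \;=\; \frac{d}{\gamma}.
\]

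Next I would apply Theorem \ref{thm:bernstein} with $\Delta := 2d + \frac{4}{\gamma}\log(n/\beta)$. Since $\bE[Y_i] \leq d$, the event $Y_i > 3d + \frac{4}{\gamma}\log(n/\beta)$ is contained in $\{Y_i - \bE[Y_i] > \Delta\}$. The key estimate is the ratio $\frac{\Delta^2}{2V + C\Delta}$: using $2V/\Delta \leq (2d/\gamma)/(2d) = 1/\gamma$ and $C = 1/\gamma$, we get $(2V + C\Delta)/\Delta \leq 2/\gamma$, hence
\[
\frac{\Delta^2}{2V + C\Delta} \;\geq\; \frac{\gamma \Delta}{2} \;=\; d\gamma + 2\log(n/\beta) \;\geq\; 2\log(n/\beta).
\]
Bernstein then yields $\Pr[Y_i > 3d + \frac{4}{\gamma}\log(n/\beta)] \leq 2(\beta/n)^2$.

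Finally, a union bound over $i \in [n]$ gives failure probability at most $2\beta^2/n \leq \beta$ (using $\beta \leq 1/2 \leq n/2$). The lower bound $n > \frac{50}{\gamma}\log(d/\beta)$ is not actually needed for this statement; it is included to keep the hypothesis uniform with Corollary \ref{cor:mean-concen}. I don't expect any real obstacle here---the main thing to be careful about is the arithmetic verifying $(2V+C\Delta)/\Delta \leq 2/\gamma$, which is what allows the bound on $\Delta^2/(2V+C\Delta)$ to come out at exactly $2\log(n/\beta)$ as needed for a clean union bound.
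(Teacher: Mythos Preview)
Your proposal is correct and follows essentially the same route as the paper: bound $\bE[Y_i]\le d$, $V\le d/\gamma$, $C=1/\gamma$, apply Bernstein with $\Delta=2d+\tfrac{4}{\gamma}\log(n/\beta)$, and union bound over $i$. Your observation that the hypothesis $n>\tfrac{50}{\gamma}\log(d/\beta)$ is not actually used is also accurate; the paper's own proof does not invoke it either.
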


\begin{proof}
Denote $Y_j = X_{i}[j]/\max\{p_j, \gamma\}$ for all $j \in [d]$. Here we have
$\E[Y_j] = p_j/\max\{p_j, \gamma\} \leq 1$ and $\var(Y_j) \leq p_j/\max\{p_j, \gamma\}^2 \leq 1/\gamma$.
%\begin{align*}
%\var(Y_j) &\leq p_j \cdot \min\{\frac{1}{p_j}, \frac{1}{\gamma}\}^2 \leq \min\{\frac{1}{p_j}, \frac{1}{\gamma}\} \leq 1/\gamma,
%\end{align*}
Moreover, with probability 1, we have $Y_j \leq 1/\gamma$.
As a result, we can apply ~\Cref{thm:bernstein} with $\Delta = 2d + 4\log(n/\beta)/\gamma$, which gives
\begin{align*}
&\Pr\left[\left|\sum_{j \in [d]} \frac{X_{i}[j]}{\max\{p_j, \gamma\}} - d\right| > \Delta\right] \leq 2\exp\left(\frac{-t^2}{2d/\gamma + t/\gamma}\right) 
%\leq 2\exp\left(\frac{-\Delta^2}{2\Delta/\gamma}\right)
\leq 2\exp(-0.5\Delta \gamma)
\leq \frac{\beta}{n}.
\end{align*}
Taking a union bound over all $i \in [n]$ completes the proof.
\end{proof}

\subsection{Private Preconditioner}
\label{sec:preconditioner}

We start with the following private preconditioner, which estimates each $p_j$ up to a constant factor. This preconditioner is similar to that of~\citet{KamathSU19}, except that we add noise from the Laplace distribution%
\footnote{
The \emph{Laplace distribution} $\Lap(b)$ is given by the PDF $f_{\Lap(b)}(x) \propto \exp(-|x|/b)$.
}
(instead of Gaussian noise) to achieve pure-DP and that we are looking for a coarser guarantee compared to 
%the TV distance in~
\citet{KamathSU19}.

\begin{lemma} \label{lem:prod-precond}
There exists an $\eps$-DP algorithm that takes $$O\left(\left(d \log\left(\frac{d}{\alpha}\right) + \frac{d}{\alpha}\right)\frac{\log\left(\frac{d}{\alpha\beta}\right)}{\eps}\right),$$ samples as input and output integers $\ell_1, \dots, \ell_d \in \{0, \dots, \lceil\log (\frac{2d}{\alpha})\rceil\}$ such that, with probability $1 - \beta$, the following hold for all $j \in [d]$:
\begin{itemize}[leftmargin=*]
\item if $\ell_j \ne \lceil\log(\frac{2d}{\alpha})\rceil$, then $p_j \in [\nicefrac{1}{4} \cdot 2^{-\ell_j}, \nicefrac{3}{4} \cdot 2^{-\ell_j}]$, and
\item if $\ell_j = \lceil\log(\frac{2d}{\alpha})\rceil$, then $p_j \leq \alpha/2d$.
\end{itemize}
\end{lemma}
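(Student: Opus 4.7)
My plan is to adapt the iterative bucketing procedure of~\cite{KamathSU19}, but substitute Laplace noise for Gaussian noise to obtain pure DP. Let $L := \lceil \log(2d/\alpha) \rceil$, and partition the relevant range of $p_j$ into the geometric buckets $I_\ell := [\tfrac{1}{4}\cdot 2^{-\ell},\, \tfrac{3}{4}\cdot 2^{-\ell}]$ for $\ell \in \{0,\dots,L-1\}$, together with a tail bucket $[0, \alpha/(2d)]$ which we identify with $\ell = L$. The algorithm processes buckets in the order $\ell = 0, 1, \dots, L-1$, maintaining a set $S_\ell \subseteq [d]$ of ``still-unclassified'' coordinates, initialized to $S_0 = [d]$. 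When coordinate $j$ is classified at step $\ell$, we set $\ell_j := \ell$ and remove it from $S_{\ell+1}$; any coordinate surviving all $L$ rounds is assigned $\ell_j := L$.

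In iteration $\ell$ I would draw a fresh batch of $n_\ell$ samples and truncate: for each sample $X_i$, zero out the entries outside $S_\ell$ and, if $\sum_{j\in S_\ell} X_i[j] > \Delta_\ell$ for a threshold $\Delta_\ell = \Theta\bigl(d\cdot 2^{-\ell} + \log(n_\ell L/\beta)\bigr)$, replace the sample with the all-zeros vector (a hard drop). Under the inductive invariant that every $j\in S_\ell$ satisfies $p_j \leq 2\cdot 2^{-\ell}$, \Cref{cor:sens-concen} (applied with $\gamma = 2^{-\ell}$) guarantees that no sample is dropped with probability at least $1 - \beta/(2L)$. Then, for each $j \in S_\ell$, I would release the noisy truncated mean $\tilde p_j := \tfrac{1}{n_\ell}\sum_i \tilde X_i[j] + \Lap\bigl(2\Delta_\ell/(\eps n_\ell)\bigr)$, and classify $j$ into bucket $\ell$ precisely when $\tilde p_j \geq \tfrac{1}{8}\cdot 2^{-\ell}$.

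For privacy, in any fixed iteration the $\ell_1$-sensitivity of the vector $(\sum_i \tilde X_i[j])_{j\in S_\ell}$ is at most $2\Delta_\ell$ under substitution-neighboring, so adding independent $\Lap(2\Delta_\ell/\eps)$ noise coordinatewise makes the release of $(\tilde p_j)_{j\in S_\ell}$ $\eps$-DP; all subsequent decisions (thresholding and updating $S_{\ell+1}$) are post-processing. Since different iterations use disjoint sample batches, any single sample change affects only one iteration, so parallel composition yields overall $\eps$-DP.

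For accuracy I would pick $n_\ell$ so that the Laplace noise is bounded by $\tfrac{1}{32}\cdot 2^{-\ell}$ with probability at least $1 - \beta/(2dL)$ per coordinate; combined with \Cref{cor:mean-concen} this guarantees, via a union bound over all $j\in[d]$ and $\ell < L$, that any $p_j \in I_\ell$ is caught exactly at step $\ell$ and any $p_j \leq \alpha/(2d)$ survives to $\ell_j = L$. Substituting $\Delta_\ell \approx d\cdot 2^{-\ell} + \log(n_\ell L/\beta)$ into the noise-budget inequality gives $n_\ell = O\bigl((d + 2^\ell)\log(d/(\alpha\beta))/\eps\bigr)$, and the geometric sum $\sum_{\ell=0}^{L-1} n_\ell$ collapses to the claimed bound $O\bigl((d\log(d/\alpha) + d/\alpha)\log(d/(\alpha\beta))/\eps\bigr)$. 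The main obstacle is maintaining the inductive invariant that $S_\ell$ contains only ``small'' coordinates with $p_j \leq 2\cdot 2^{-\ell}$, since this is what controls $\Delta_\ell$ and hence the noise scale; it requires showing that any coordinate with $p_j \geq \tfrac{1}{4}\cdot 2^{-\ell}$ is in fact classified and removed at step $\ell$ so that it cannot pollute step $\ell+1$, and this is precisely what the threshold $\tfrac{1}{8}\cdot 2^{-\ell}$ combined with the noise and empirical-mean concentration bounds guarantees.
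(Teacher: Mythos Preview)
Your plan is essentially the paper's own proof: the paper's Algorithm~\ref{alg:prod-precond} runs exactly this iterative geometric bucketing with a shrinking active set $S_\ell$, fresh samples per level, $\ell_1$-truncation to control sensitivity, Laplace noise, and parallel composition for pure $\eps$-DP; the sample-complexity accounting $\sum_\ell n_\ell = O\bigl((d\log(d/\alpha)+d/\alpha)\log(d/(\alpha\beta))/\eps\bigr)$ is also the same. The only cosmetic difference is that the paper rescales a sample whose $\ell_1$-norm exceeds the budget (via $\trun^1_{B_\ell}$) rather than hard-dropping it.

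There is, however, one concrete slip in your constants that would make the argument fail as written. With threshold $\tfrac{1}{8}\cdot 2^{-\ell}$ and total error $|\tilde p_j - p_j|\le \tfrac{1}{16}\cdot 2^{-\ell}$, a coordinate with $p_j = \tfrac{3}{16}\cdot 2^{-\ell}$ will typically have $\tilde p_j > \tfrac{1}{8}\cdot 2^{-\ell}$ and hence be classified at step $\ell$, yet $p_j < \tfrac{1}{4}\cdot 2^{-\ell}$, violating the lemma's lower-bound requirement $p_j\ge \tfrac{1}{4}\cdot 2^{-\ell_j}$. Relatedly, your stated invariant ``$p_j\le 2\cdot 2^{-\ell}$ for $j\in S_\ell$'' is too weak to yield the upper-bound requirement $p_j\le \tfrac{3}{4}\cdot 2^{-\ell_j}$. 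The fix is exactly what the paper does: maintain the two-sided invariant $\Sd_{\ell-1}\subseteq S_\ell\subseteq \Su_\ell$ with $\Su_\ell=\{j:p_j\le \tfrac{3}{4}\cdot 2^{-\ell}\}$ and $\Sd_\ell=\{j:p_j\le \tfrac{1}{4}\cdot 2^{-\ell}\}$, and place the classification threshold strictly inside the overlap region of consecutive buckets so that (i) every $j\in S_\ell$ with $p_j>\tfrac{3}{4}\cdot 2^{-(\ell+1)}$ is removed, and (ii) no $j$ with $p_j\le \tfrac{1}{4}\cdot 2^{-\ell}$ is removed. Once the threshold and invariant are set correctly, the rest of your outline goes through unchanged.
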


%\begin{wrapfigure}{R}{0.56\textwidth}
%\centering
%\vspace*{-4mm}
%\begin{minipage}{0.56\textwidth}
\begin{algorithm}[H]
\caption{{\sc PreconditionerProductDist}}
\label{alg:prod-precond}
\textbf{Input:} An unknown product distribution $P$ over $\{0,1\}^d$, privacy parameter $\epsilon$, accuracy parameter $\alpha$, failure probability parameter $\beta$ \\
\textbf{Output:} $\ell_1, \dots, \ell_d \in \{0,1,\dots, \lceil \log(\frac{2d}{\alpha}) \rceil\}$ \\
$L \gets \lceil \log(\frac{2d}{\alpha}) \rceil$ \; \\
$\ell_j \gets 0$ for every $j \in [d]$ \;\\
%$\tau_\ell \gets 0.57 \cdot 2^{-\ell}$ for every $\ell \in [L]$ \;
% $\dots, n_L \in \N$ and $ \tau_1, \dots, \tau_L, B_1, \dots, B_L > 0$. \\
$S_0 \gets [d]$ \;\\
\ForEach{$\ell = 0, 1, \dots, L-1$} {
$B_\ell \gets 1000(d \cdot 2^{-\ell} + 1)$  \;\\
$n_\ell \gets  \frac{1000}{\epsilon} \cdot B_\ell \cdot 2^\ell \cdot \log\left(\frac{d}{\alpha\beta}\right)$ \\
Sample $X^\ell_1, \dots, X^\ell_{n_\ell} \sim P$ \;\\
$S_{\ell + 1} \gets \emptyset$ \;\\
%$B_\ell \gets B/2^\ell$;\\
$\displaystyle{q_{\ell}[S_\ell] \gets \frac{1}{n_\ell} \left(\Lap\left(\frac{B_\ell}{\eps}\right) + \sum_{i \in [n_\ell]} \trun_{B_\ell}^1(X^\ell_i[S_\ell])\right)}$ \;\\ \label{line:mean-est-lap}

\ForEach{$j \in S_{\ell}$} {
\If{$q_\ell[j] \leq 0.57 \cdot 2^{-\ell}$}{
Add $j$ to $S_{\ell+1}$}
\Else{$\ell_j \gets \ell$
}
}
}
\ForEach{$j \in S_L$}{$\ell_j \gets L$}
\Return $\ell_1, \dots, \ell_d$
\end{algorithm}
%\end{minipage}
%\vspace*{-8mm}
%\end{wrapfigure}

In the remaining of this section, we will focus on the proof of Lemma \ref{lem:prod-precond} and Algorithm~\ref{alg:prod-precond}.
%
%\subsubsection{Proof of Lemma \ref{lem:prod-precond}}
%
%\paragraph{Algorithm.} 
%We use Algorithm~\ref{alg:prod-precond} with the following parameters:
%\begin{itemize}
%\item $L = \lceil \log(2d/\alpha) \rceil$.
%\item For all $\ell = 1, \dots, L$:
%\begin{itemize}
%\item $\tau_\ell = 0.57 \cdot 2^{-\ell}$.
%\item $B_\ell = 1000(d / 2^\ell + 1)$.
%\item $n_\ell = \frac{1000B_\ell 2^\ell}{\eps} \log\left(\frac{d}{\alpha\beta}\right)$.
%\end{itemize}
%\end{itemize}

\paragraph{Sample Complexity.} 
From Algorithm~\ref{alg:prod-precond}, the total sample complexity can be bounded by
\begin{align*}
\sum_{\ell \in [L]} n_\ell &= O\left(\sum_{\ell \in [L]}\left(d 
+ 2^\ell\right)\frac{\log\left(\frac{d}{\alpha\beta}\right)}{\eps}\right) = O\left(\left(d \log\left(\frac{d}{\alpha}\right) + \frac{d}{\alpha}\right)\frac{\log\left(\frac{d}{\alpha\beta}\right)}{\eps}\right),
\end{align*}
as desired.

\paragraph{Privacy Analysis.}
In each iteration $\ell \in [L]$, the samples $X_1^\ell, \dots, X_{n_\ell}^\ell$, after truncation, are noised with a Laplace distribution; by the privacy guarantee of the Laplace mechanism, each iteration is thus $\eps$-DP. 
Furthermore, since each iteration uses a fresh set of samples, by the parallel composition property of DP, the entire algorithm is $\eps$-DP as desired.

\paragraph{Accuracy Analysis.}
We will prove this by induction. Specifically, let $\Su_\ell := \{j \in [d] \mid p_j \leq \nicefrac{3}{4} \cdot 2^{-\ell}\}$ and $\Sd_\ell := \{j \in [d] \mid p_j \leq \nicefrac{1}{4} \cdot 2^{-\ell}\}$. We will show that
\begin{align} \label{eq:induction-statement}
\Pr\left[\forall \ell \in \{0, \dots, t\}, \Sd_{\ell - 1} \subseteq S_{\ell} \subseteq \Su_\ell\right] \geq 1 - \frac{t\beta}{L},
\end{align}
for all $t \in [L]$ by an induction on $t$.
Observe that this statement implies the desired accuracy claim in the lemma because $j$ gets assigned $\ell_j = \ell$ if and only if $j \in S_\ell \setminus S_{\ell + 1}$. (Here we use the convention that $S_{L+1} = \emptyset$.)

\paragraph{Base Case.} (2) trivially holds for $t = 0$ as $\Su_0 = S_0 = [d]$.

\paragraph{Inductive Step.} Suppose that \eqref{eq:induction-statement} holds for $t - 1$ for some $t \in \N$. We will also show that it also holds for $t$. From a union bound, it suffices to show
\begin{align*} 
%\label{eq:inductive-step}
\Pr\left[\Sd_{t-1} \subseteq S_{t} \subseteq \Su_{t} \mid S_{t-1} \subseteq \Su_{t-1}\right] \geq 1 - \frac{\beta}{L}.
\end{align*}
To show this, we need the following claim:
\begin{claim} \label{claim:after-trunc-concen}
Assume $S_{t-1} \subseteq \Su_{t-1}$. For each $j \in S_{t}$, with probability $1 - \frac{0.5\beta}{L d}$, we have
\begin{align}
&\frac{1}{n_t} \left(\sum_{i \in [n_t]} \trun_{B_t}^1(X^t_i[S_{t}])\right)_j \nonumber \in \left[0.8p_j - 0.01 \cdot 2^{-t}, 1.1p_j + 0.01 \cdot 2^{-t}\right]. \label{eq:marginal-accurate-after-trunc}
\end{align}
\end{claim}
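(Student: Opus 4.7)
The plan is to reduce the claim to a standard Bernstein-type concentration bound by first showing that, with overwhelming probability, no sample undergoes truncation. Since $S_t \subseteq S_{t-1} \subseteq \Su_{t-1}$ by the inductive hypothesis, every coordinate $k\in S_t$ satisfies $p_k \le \nicefrac{3}{4}\cdot 2^{-(t-1)} = 1.5\cdot 2^{-t}$. Summing gives $\E[\|X_i^t[S_t]\|_1] = \sum_{k\in S_t} p_k \le 1.5 d\cdot 2^{-t}$, which is much smaller than the truncation radius $B_t = 1000(d\cdot 2^{-t}+1)$.

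Next I would bound $\Pr[\|X_i^t[S_t]\|_1 > B_t]$ by applying Bernstein's inequality (\Cref{thm:bernstein}) to the independent $\{0,1\}$-valued summands $(X_i^t[k])_{k\in S_t}$. The total variance is at most $\sum_k p_k \le 1.5 d\cdot 2^{-t}$, each summand is bounded by $1$, and the deviation from the mean is at least $B_t - 1.5 d\cdot 2^{-t} = \Theta(B_t)$, so Bernstein yields a tail of order $\exp(-\Omega(B_t))$. Because $B_t \ge 1000$ and $n_t$ is polynomial in $d, 1/\alpha, 1/\eps, 1/\beta$, this tail is at most $\tfrac{0.25\beta}{L d n_t}$; a union bound over $i \in [n_t]$ shows that, except with probability $\tfrac{0.25\beta}{Ld}$, \emph{no} sample is truncated, so $\trun_{B_t}^1(X_i^t[S_t]) = X_i^t[S_t]$ for all $i$.

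On this good event, the quantity of interest is simply the empirical mean $\oX[j] := \tfrac{1}{n_t}\sum_i X_i^t[j]$ of $n_t$ i.i.d.\ $\Ber(p_j)$ random variables. A direct application of Bernstein (exactly as in the proof of \Cref{cor:mean-concen}), using the fact that $n_t \gtrsim 2^t \log(Ld/\beta)/\eps$, yields $|\oX[j] - p_j| \le \max\{0.1\, p_j,\, 0.01\cdot 2^{-t}\}$ with probability $1 - \tfrac{0.25\beta}{Ld}$. Since any $\oX[j]$ in that range lies in $[0.9 p_j - 0.01\cdot 2^{-t},\, 1.1 p_j + 0.01\cdot 2^{-t}] \subseteq [0.8 p_j - 0.01\cdot 2^{-t},\, 1.1 p_j + 0.01\cdot 2^{-t}]$, the claim follows after combining the two failure probabilities.

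The main obstacle is purely a parameter-balancing issue: one must check that the constants baked into $B_t = 1000(d\cdot 2^{-t}+1)$ and $n_t = \tfrac{1000}{\eps}\cdot B_t\cdot 2^t\cdot \log(d/(\alpha\beta))$ are generous enough to drive both tail probabilities below $\tfrac{0.25\beta}{Ld}$ simultaneously for every $t \le L = \lceil\log(2d/\alpha)\rceil$. The leading constant $1000$ in $B_t$ supplies the slack for the no-truncation bound, while the extra $\log(d/(\alpha\beta))$ factor in $n_t$ supplies the slack for the Bernstein step; once these are verified, the rest of the argument is entirely routine.
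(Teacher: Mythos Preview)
Your reduction to ``no sample is ever truncated'' has a gap at the union-bound step. You need
\[
n_t \cdot \Pr\bigl[\|X_i^t[S_t]\|_1 > B_t\bigr] \;\le\; \tfrac{0.25\beta}{Ld},
\]
i.e.\ $\Pr[\text{truncation}] \le \tfrac{0.25\beta}{Ld\,n_t}$. Bernstein only gives $\Pr[\text{truncation}] \le 2\exp(-c\,B_t)$ for some constant $c$, and $B_t$ can be as small as the fixed constant $1000$ (whenever $d\cdot 2^{-t}\le 1$). On the other hand, $n_t = \tfrac{1000}{\eps}\,B_t\,2^t\log(d/(\alpha\beta))$ scales with $1/\eps$, so $\log(Ld\,n_t/\beta)$ is unbounded in $1/\eps$ (and in $d,1/\alpha,1/\beta$). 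The inequality $c\,B_t \ge \log(8Ld\,n_t/\beta)$ therefore fails once $\eps$ is small enough; the phrase ``$B_t\ge 1000$ and $n_t$ is polynomial'' does not rescue you, because a fixed constant in the exponent cannot beat an arbitrarily small inverse-polynomial target. The parameter-balancing you flagged as the ``main obstacle'' is in fact not satisfiable with the stated constants.

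The paper avoids the union bound over $n_t$ entirely. It works directly with the truncated coordinate $Y_i := (\trun_{B_t}^1(X_i^t[S_t]))_j \in [0,1]$. The upper bound is immediate from $Y_i \le X_i^t[j]$ plus \Cref{cor:mean-concen}. For the lower bound, the paper shows the \emph{per-sample} inequality $\Pr[Y_i=1] \ge 0.9\,p_j$: this follows from independence of coordinates and a single application of Markov to $\|X_i^t[S_t\setminus\{j\}]\|_1$, whose mean is at most $1.5\,d\cdot 2^{-t} \le 0.1(B_t-1)$. Now $Y_1,\dots,Y_{n_t}$ are i.i.d.\ $[0,1]$-valued with $\E[Y_i]\ge 0.9p_j$ and $\var(Y_i)\le p_j$, so Bernstein applied to the $Y_i$'s directly gives the $0.8p_j - 0.01\cdot 2^{-t}$ lower bound. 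The point is that the truncation effect is absorbed into the mean of $Y_i$ (losing only a factor $0.9$), rather than being controlled pathwise across all $n_t$ samples.
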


Before we prove Claim~\ref{claim:after-trunc-concen}, let us see how to use it to finish the proof. Let $Z^t \sim \Lap\left(B_t/\eps\right)$ be the Laplace noise added. First, by a standard concentration bound for the Laplace distribution, we have
\begin{align*}
\Pr\left[|Z^t_j| < \frac{B_t}{\epsilon}\log\left(\frac{4Ld}{\beta}\right) \right] \geq 1 - \frac{0.5\beta}{Ld}.
\end{align*}
Note also that by our choice of parameters, we have $B_t/\eps \cdot \log\left(4Ld/\beta\right) \leq 0.01 \cdot 2^{-t} \cdot n_t$.

Hence, by a union bound, we have that with probability $1 - \beta/L$ for all $j \in S_{t}$,
\begin{align*}
q_t[j] \in [0.8p_j - 0.02 \cdot 2^{-t}, 1.1p_j + 0.02 \cdot 2^{-t}].
\end{align*}
Consider any $j \in S_{t} \setminus \Su_{t}$. Since $p_j \geq 0.75 \cdot 2^{-t}$, we have $q_t[j] \geq 0.58 \cdot 2^{-t} > \tau_{t}$.
Hence, $j$ will not be included in $S_{t}$, coming to a contradiction. In other words, $S_{t} \setminus \Su_{t} = \emptyset$, i.e., $S_{t} \subseteq \Su_{t}$. Similarly, consider any $j \in \Sd_{t-1}$. Since $p_j \leq 0.25 \cdot 2^{-t+1}$, we have
$q_t[j] \leq 0.57 \cdot 2^{-t} = \tau_t$.
By Algorithm~\ref{alg:prod-precond}, $j$ will be  included in $S_{t}$. Hence, $\Sd_{t-1} \subseteq S_{t}$. These conclude the inductive step.

We are now left to prove Claim~\ref{claim:after-trunc-concen}.

\begin{proof}[Proof of Claim~\ref{claim:after-trunc-concen}]
Let $Y_i = \left(\trun_{B_t}^1(X^t_i[S_t])\right)_j$.

First note that $\displaystyle{\sum_{i \in [n_t]} Y_i \leq \sum_{i \in [n_t]} X^t_i[j]}$. Therefore, we may apply  \Cref{cor:mean-concen}, which gives
\begin{align*}
\Pr\left[\frac{1}{n_t}\cdot \sum_{i \in [n_t]} Y_i \leq 1.1p_j + 0.01 \cdot 2^{-t}\right] > 1 - \frac{0.25\beta}{Ld}.
\end{align*}
Next, to give a lower bound on $Y_i$,  we also observe that
\begin{align*}
\Pr[Y_i = 1] &= \Pr\left[X_i^t[j] = 1 \text{ and } \|X^\ell_i[S_t]\|_1 \leq B_t\right] \\
&\geq \Pr[X_i^t[j] = 1 \text{ and } \|X^t_i[S_t \setminus \{j\}]\|_1 \leq B_t - 1] \\
&= \Pr[X_i^t[j] = 1] \cdot  \Pr[\|X^t_i[S_t \setminus \{j\}]\|_1 \leq B_t - 1] \\
&= p_j \cdot \Pr[\|X^t_i[S_t \setminus \{j\}]\|_1 \leq B_t - 1],
\end{align*}
and
\begin{equation*}
    \bE[\|X^t_i[S_t \setminus \{j\}]\|_1] = \sum_{j' \in S_t \setminus \{j\}} p_{j'}.
\end{equation*} 
We note that $S_t \subseteq S_{t-1}$ holds by Algorithm~\ref{alg:prod-precond} and $S_{t-1} \subseteq \Su_{t-1}$ follows the assumption. Hence, $p_{j'} < 0.75 \cdot 2^{-(t-1)}$ for every $j' \in \Su_{t-1}$. Moreover, $|\Su_{t-1}| \le d$. Putting everything together, we can further bound
\[
\bE[\|X^t_i[S_t \setminus \{j\}]\|_1] = \sum_{j' \in S_t \setminus \{j\}} p_{j'}
\leq \sum_{j' \in \Su_{t-1}} p_{j'} \le d \cdot 0.75\cdot 2^{-(t-1)} \leq 0.1(B_t - 1).\]
Therefore, by Markov's inequality, we have \[\Pr[\|X^t_i[S_t \setminus \{j\}]\|_1 \leq B_t - 1] \geq 0.9.\] Combining with the above, we have
%\begin{align*}
$\Pr[Y_i = 1] \geq 0.9p_j$.
%\end{align*}
Notice also that $Y_1, \dots, Y_{n_t}$ are i.i.d. and always lie between $[0, 1]$. Thus, we can apply \Cref{thm:bernstein} with $\Delta = (0.1p_j + 0.01 \cdot 2^{-t}) n_t$ to obtain
\begin{align*}
&\Pr\left[\frac{1}{n_t} \cdot \sum_{i \in [n_t]} Y_i < 0.8p_j - 0.01 \cdot 2^{-t}\right] \leq 2\exp\left(\frac{\Delta^2}{2p_j n_t + \Delta}\right) \leq 2\exp\left(\frac{\Delta^2}{21 \Delta}\right) \leq \frac{0.25\beta}{L d}.
\end{align*}
Applying a union bound then yields the claim.
\end{proof}

\subsection{Sampler}

Next, we give a DP sampler for an unknown product distribution, assuming that a rough estimate of each $p_j$ is already computed via the private preconditioner in Section~\ref{sec:preconditioner}. The exact guarantee is given below; combining this with Lemma \ref{lem:prod-precond}, we immediately arrive at Theorem \ref{thm:prod-sampler}.

\begin{lemma} \label{lem:prod-sampler}
Suppose that $\ell_1, \dots, \ell_d$ satisfy the conditions in Lemma~\ref{lem:prod-precond}. Then, there is an $\eps$-DP algorithm taking $\ell_1, \dots, \ell_d$ as input, that is an $\alpha$-accurate sampler and has a sample complexity of
\begin{align*}
O\left(\frac{d \log(d/\alpha)}{\alpha \eps}\right).
\end{align*}
\end{lemma}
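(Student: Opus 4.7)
Define weights $w_j = 2^{\ell_j}$ and the ``heavy'' index set $S = \{j : \ell_j < L\}$ with $L = \lceil\log(2d/\alpha)\rceil$. Draw $n = \Theta(d\log(d/\alpha)/(\alpha\eps))$ fresh samples, and for a threshold $B = \Theta(d\log(d/\alpha)/\alpha)$, zero out any sample $X_i$ with $\sum_j X_i[j]\,w_j > B$ (a weighted $\ell_1$ truncation). Form empirical means $\hat p_j = \tfrac{1}{n}\sum_i X_i^{\trun}[j]$ and clip them into $[\tfrac{1}{8}\cdot 2^{-\ell_j},\tfrac{7}{8}\cdot 2^{-\ell_j}]$ for $j \in S$ to obtain $\tilde p_j$; set $\tilde p_j = 0$ for $j \notin S$. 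Return $Y$ with independent $Y_j \sim \Ber(\tilde p_j)$. For privacy, neighboring inputs (differing in one sample) yield $\sum_j w_j|\hat p_j - \hat p_j'| \le 2B/n$ since both truncated samples satisfy the weighted budget $B$. Clipping is contractive and guarantees $\tilde p_j \ge 1/(8w_j)$ and $1-\tilde p_j \ge 1/8$, so each coordinate's Bernoulli log-ratio satisfies $|\log \Ber(\tilde p_j)(y_j)/\Ber(\tilde p_j')(y_j)| \le O(w_j|\hat p_j - \hat p_j'|)$; summing across $j$ yields a total pointwise log-likelihood ratio of $O(B/n) \le \eps$, giving pure $\eps$-DP.

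\textbf{Accuracy via a structural observation.} The critical fact is that the \emph{untruncated} empirical mean $\hat p_j^{\mathrm{nt}} := \tfrac{1}{n}\sum_i X_i[j]$ depends only on the $j$th coordinate of the samples and is hence independent across $j$ by the product structure of $P$, with $\E[\hat p_j^{\mathrm{nt}}] = p_j$. Therefore the hypothetical output $\tilde Y$ obtained from the $\hat p_j^{\mathrm{nt}}$'s (no truncation, no clipping, and $\tilde p_j = 0$ for $j \notin S$) has \emph{independent} coordinates with $\Pr[\tilde Y_j=1]=p_j$ for $j \in S$, so $\tilde Y$ is distributed \emph{exactly} as $\bigotimes_{j\in S}\Ber(p_j)\otimes \delta_0$. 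The preconditioner guarantee $p_j \le \alpha/(2d)$ for $j \notin S$ then yields $d_{\TV}(\tilde Y,\bigotimes_j\Ber(p_j)) \le \alpha/2$. Coupling the actual output $Y$ with $\tilde Y$ through shared uniform randomness in the Bernoulli draws gives $d_{\TV}(P_Y,P_{\tilde Y}) \le \E[\sum_j|\tilde p_j - \hat p_j^{\mathrm{nt}}|]$, which decomposes into a truncation shift and a clipping shift. The truncation shift satisfies $\sum_j|\hat p_j - \hat p_j^{\mathrm{nt}}| = \tfrac{1}{n}\sum_i T_i\sum_j X_i[j]$ and hence has expectation $\E[T_1 \sum_j X_1[j]] \le \E[S_1\,\mathbb{1}\{S_1>B\}]$ with $S_1 := \sum_j X_1[j]w_j$. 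A Bernstein tail bound on $S_1$ (mean $\le d$, variance $\le O(d^2/\alpha)$, max weight $\le 2d/\alpha$) makes this $\le \alpha/4$ with $B = \Theta(d\log(d/\alpha)/\alpha)$. The clipping shift is controlled by Bernstein concentration of $\hat p_j$ around $p_j$ within $2^{-\ell_j}/8$, which fails only with probability $(\alpha/d)^{\Omega(1)}$ once $n \gtrsim d\log(d/\alpha)/\alpha$, contributing $\le \alpha/4$ in total.

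\textbf{Main obstacle.} The delicate point is that truncation couples the $\hat p_j$'s across $j$ and thereby destroys the exact-sampler property enjoyed by the untruncated estimator. The core technical work is the weighted Bernstein bound on $S_1$, whose individual weights can be as large as $2d/\alpha$; this is what forces the threshold to $B = \Theta(d\log(d/\alpha)/\alpha)$ rather than $\Theta(d)$. Balancing the privacy constraint $B/n \le O(\eps)$ against the accuracy requirement $\E[S_1\,\mathbb{1}\{S_1 > B\}] \le \alpha/4$ then pins down the sample complexity $n = \Theta(d\log(d/\alpha)/(\alpha\eps))$.
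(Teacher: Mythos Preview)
Your proposal is correct and follows essentially the same approach as the paper: the same algorithm (weighted $\ell_1$ truncation plus clipping into $[2^{-\ell_j}/8,\,7\cdot 2^{-\ell_j}/8]$), the same privacy argument (weighted sensitivity $\le 2B/n$ combined with a coordinate-wise Bernoulli log-ratio bound using the clipping floor), and the same key accuracy insight that the untruncated empirical means give an \emph{exact} sampler on the heavy coordinates by the product structure. The paper phrases the accuracy step as a single high-probability ``good event'' (no truncation fires and no clipping bites), while you use the equivalent expectation/coupling bound $d_{\TV}\le \E\sum_j|\tilde p_j-\hat p_j^{\mathrm{nt}}|$; these are interchangeable.

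One small imprecision to fix: for the clipping shift you invoke ``Bernstein concentration of $\hat p_j$ around $p_j$,'' but $\hat p_j$ is the \emph{truncated} mean and its coordinates are coupled, so Bernstein does not apply to it directly. Apply Bernstein to $\hat p_j^{\mathrm{nt}}$ instead to get $\hat p_j^{\mathrm{nt}}\in[1/(8w_j),7/(8w_j)]$ with the stated failure probability; on that event, contractivity of clipping gives $|\tilde p_j-\hat p_j^{\mathrm{nt}}|=|\clip(\hat p_j)-\clip(\hat p_j^{\mathrm{nt}})|\le|\hat p_j-\hat p_j^{\mathrm{nt}}|$, so the clipping shift is absorbed into the truncation shift you already bounded.
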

%To describe the algorithm, we write $\clip_{a, b}(c)$ for real numbers $a, b, c \in \R$ where $a < b$ to denote $\min\{\max\{c, a\}, b\}$.

%\begin{wrapfigure}{R}{0.51\textwidth}
%\centering
%\vspace*{-8mm}
%\begin{minipage}{0.51\textwidth}
\begin{algorithm}[H]
\caption{{\sc ProductDistSampler}}
\label{alg:prod-sampler}
\textbf{Input:} An unknown product distribution $P$ over $\{0,1\}^d$, privacy parameter $\epsilon$,  accuracy parameter $\alpha$, and $\ell_1, \dots, \ell_d \in \mathbb{N}$ satisfy the conditions in Lemma~\ref{lem:prod-precond} \\
\textbf{Output:} A random sample that is $\alpha$-accurate for $P$ \\
$B \gets 1000 (d/\alpha) \cdot \log(d/\alpha)$ \\
$n \gets \lceil 16B/\eps \rceil$ \\
Sample $X_1, \dots, X_n \sim P$ \\
$w \gets (2^{\ell_1}, \dots, 2^{\ell_d})$ \\
$q \leftarrow \frac{1}{n} \cdot \sum_{i \in [n]} \trun^1_{B, w}(X_i)$ \\
\For{$j = 1, \dots, d$}{
$\tp_j \gets \clip_{\frac{1}{8w_j}, \frac{7}{8w_j}}(q_j)$\\}
Sample $Z \sim \Ber(\tp_1) \otimes \cdots \otimes \Ber(\tp_d)$ \\
\Return $Z$
\end{algorithm}
%\end{minipage}
%\vspace*{-12mm}
%\end{wrapfigure}

%\begin{proof}[Proof of Lemma \ref{lem:prod-sampler}]
In the following, we will focus on the proof of Lemma~\ref{lem:prod-sampler} with Algorithm~\ref{alg:prod-sampler}. %with parameters $B = 1000 (d/\alpha) \cdot \log(d/\alpha), n = \lceil 16B/\eps \rceil$. 

\paragraph{Privacy Analysis.} Consider any pair $\bX, \bX'$ of neighboring datasets. We may assume w.l.o.g. that the two datasets agree except for $X_n$ and $X'_n$. We write $q', \tp'$ to denote the values of $q, \tp$ computed for $\bX'$. Now, consider any possible output $y \in \{0, 1\}^d$. For each $j \in [d]$, we claim that 
\begin{equation}
    \label{eq:single-coordinate}
    \frac{\Pr[\A(\bX')_j = y_j]}{\Pr[\A(\bX)_j = y_j]}  \le 1 + 8w_j|\tp'_j - \tp_j|.
\end{equation}
To show this, we first mention two facts:
\begin{itemize}[leftmargin=*]
    \item  $\displaystyle{\frac{\Pr[\A(\bX')_j = 0]}{\Pr[\A(\bX)_j = 0]} = \frac{1 - \tp'_j}{1 - \tp_j} \leq 1 + \frac{|\tp'_j - \tp_j|}{1 - \tp_j}}$;
    \item $\displaystyle{\frac{\Pr[\A(\bX')_j = 1]}{\Pr[\A(\bX)_j = 1]} = \frac{\tp'_j}{\tp_j} 
\leq 1 + \frac{|\tp'_j - \tp_j|}{\tp_j}}$.
\end{itemize}
It suffices to show $1/(8w_j) \le \tp_j \le 1-1/(8w_j)$ to prove (\ref{eq:single-coordinate}).  This follows since $w_j > 1$ and $1/(8w_j) \leq \tp_j \leq 7/(8w_j)$ due to clipping.

We also note the following useful property: 
\begin{align}
    \|w \circ (q' - q)\|_1
    =& \frac{1}{n} \cdot \|w \circ (\trun^1_{B,w}(X'_n) - \trun^1_{B,w}(X_n))\|_1 \nonumber \\
    \le & \frac{1}{n} \cdot \|w \circ (\trun^1_{B,w}(X'_n) + \trun^1_{B,w}(X_n))\|_1 
    \le \frac{2B}{n}.
    \label{eq:ball}
\end{align}
Finally, we are ready to prove the privacy guarantee:
\begin{align*}
 \frac{\Pr[\A(\bX') = y]}{\Pr[\A(\bX) = y]} = & \prod_{j \in [d]} \frac{\Pr[\A(\bX')_j = y_j]}{\Pr[\A(\bX)_j = y_j]} \leq  \prod_{j \in [d]} \left(1 + 8w_j \cdot |\tp'_j - \tp_j|\right) \\
\leq & \prod_{j \in [d]} \exp\left(8w_j \cdot |\tp'_j - \tp_j|\right) 
= \exp\left(8w \circ |\tp' - \tp|\right) \\
\leq &  \exp\left(8 w \circ |q' - q|\right)
\leq \exp\left(\frac{16B}{n}\right) = \exp(\epsilon),
\end{align*}
where the third inequality is due to clipping, the penultimate step follows (\ref{eq:ball}), and the last step follows from our choice of parameters. Thus, the algorithm is $\eps$-DP as claimed. 

\paragraph{Accuracy Analysis.} Let $S = \{j \in [d] \mid p_j < 2^{-L}\}$. We first consider the version of $\A$ (Algorithm~\ref{alg:prod-sampler}), where there is no truncation at all and there is no clipping for any $j \notin S$, denoted as $\A'$. Let $Q_{\A', P} = \Ber(p^*_1) \otimes \cdots \otimes \Ber(p^*_d)$. Fix $\beta = \alpha/12d < \alpha/4$. We define the event $\cE$ as ``no $X_i$ is truncated for any $i \in [n]$ and no $p_j$ is clipped for any $j \notin S$ by Algorithm~\ref{alg:prod-sampler}''. By concentration results (\Cref{cor:mean-concen,cor:sens-concen} with $\gamma = 2^{-L-1}$) and our selection of parameters, we can conclude that $\cE$ happens with probability at least $1- \beta$, and therefore
\begin{align*}
d_{\TV}(Q_{\A', P}, Q_{\A, P}) &= d_{\TV}(Q_{\A', P}, Q_{\A, P} \mid \cE) \cdot \Pr[\cE] + d_{\TV}(Q_{\A', P}, Q_{\A, P} \mid \bar{\cE}) \cdot \Pr[\bar{\cE}]\\
& = 0 \cdot (1-\beta) + 1 \cdot \beta \leq \alpha/2.
\end{align*}
Furthermore, we can see that $\E[p^*_j] = p_j$ if $j \notin S$. For $j \in S$, clipping ensures us that $p^*_j \leq 2^{-L}$, hence $\E[p^*_j] \leq 2^{-L}$. Together, we have
\begin{align*}
d_{\TV}(Q_{\A', P}, P) 
= \sum_{j \in [d]} |\E[p^*_j] - p_j|  = \sum_{j \in S} |\E[p^*_j] - p_j| \leq \sum_{j \in S} 2^{-L} \leq d \cdot 2^{-L} \leq \alpha / 2.
\end{align*}
Combining the above two inequalities yields the $\alpha$-accuracy guarantee of $\A$:
\begin{align*}
d_{\TV}(Q_{\A, P}, P) 
\le  d_{\TV}(Q_{\A, P}, Q_{\A',P}) + d_{\TV}(Q_{\A', P}, P)  \le \alpha/2 + \alpha/2  = \alpha.
\end{align*}

%\begin{claim}
%\label{claim:prod-accuracy}
%    For all $j \in[d]$, for $\ell_j$ defined as in Lemma~\ref{lem:prod-precond}, when Algorithm~\ref{alg:prod-precond} runs with probability parameter $\beta \in (0, \frac{1}{12}]$ and target accuracy $\alpha \in (0, 1]$, the following holds:
%    \begin{itemize}
%        \item if $p_j \ge 2^{-L}$, then $|\E[\tp_j - p^*_j]| \le $
%        \item if $p_j < 2^{-L}$, then $|\E[\tp_j - p^*_j]| \le $;
%    \end{itemize}
%\end{claim}
%\begin{proof}[Proof of Claim~\ref{claim:prod-accuracy}]
%\end{proof}

\begin{proof}[Proof of Theorem~\ref{thm:prod-sampler}]
Finally, we are ready to prove Theorem~\ref{thm:prod-sampler}. For simplicity, let $\alpha'$ be the target accuracy parameter. Set $\alpha = \alpha'/2$ and $\beta = \alpha/12d < \alpha'/2$. We define the event $\cE$ as ``Algorithm~\ref{alg:prod-precond} returns $\ell_1, \dots, \ell_d$ satisfying the properties in Lemma~\ref{lem:prod-precond}''. If $\cE$ holds, then the error of our sampler is at most $\alpha'/2$, which is implied by Lemma~\ref{lem:prod-sampler}. If $\cE$ fails, the error is at most $1$, but this event happens with probability at most $\beta < \alpha'/2$. Together, the error of our sampler is \[\alpha'/2 \cdot (1-\beta) + 1 \cdot \beta < \alpha'/2  + \alpha'/2  = \alpha',\] and the sampling complexity is (with $\beta = \alpha'/24d$)
\[O\left(\left(d \log\left(\frac{d}{\alpha'}\right) + \frac{d}{\alpha'}\right)\frac{\log\left(\frac{d}{\alpha'\beta}\right)}{\eps} + \frac{d}{\alpha' \epsilon} \log \left(\frac{d}{\alpha'}\right)\right) = O\left(\frac{d}{\epsilon} \log^2 \left(\frac{d}{\alpha'}\right) + \frac{d}{\alpha'\epsilon}\log\left(\frac{d}{\alpha'}\right)\right), \]
as desired.
\end{proof}

\end{document}